\documentclass[11pt]{article}
\usepackage{amsmath}
\usepackage{amssymb}
\usepackage{amsthm}
\usepackage{latexsym}
\usepackage[dvips]{graphicx,color}
\usepackage{float}
\usepackage{array}
\usepackage{xcolor}
\usepackage{enumerate} 
\usepackage{titlesec}
\usepackage{multirow}
\usepackage[hyphens]{url}
\usepackage{hyperref}
\usepackage{etoolbox}
\usepackage{MnSymbol}
\usepackage{ifthen}
\usepackage{mathrsfs}
\usepackage{xspace}
\usepackage[normalem]{ulem}

\textheight=22cm
\topmargin=-1cm
\oddsidemargin=10mm
\textwidth=15cm
\newtheoremstyle{mystyle}%
{3pt}% Space above
{3pt}% Space below 
{\color{blue}}% Body font
{}% Indent amount
{\bfseries\color{blue}}% Theorem head font
{.}% Punctuation after theorem head
{.5em}% Space after theorem head
{}% Theorem head spec (can be left empty, meaning ¡®normal¡¯)

\theoremstyle{plain}
\newtheorem{theorem}{Theorem}[section]

\newtheorem{prop}[theorem]{Proposition}
\newtheorem{corollary}[theorem]{Corollary}

\newtheorem{lemma}[theorem]{Lemma}

\theoremstyle{definition}

\newtheorem{definition}[theorem]{Definition}

\theoremstyle{remark}
\newtheorem{remark}[theorem]{Remark}

\theoremstyle{mystyle}

% Model theory/finite model theory concept macros: tp, th, equiv, EF,
% true, false, universe, diagram, el-diagram

\newcommand{\fequiv}[1]{\ensuremath{\equiv_{#1, \fo}}}
\newcommand{\mequiv}[1]{\ensuremath{\equiv_{#1, \mso}}}
\newcommand{\lequiv}[1]{\ensuremath{\equiv_{#1, \mc{L}}}}

\newcommand{\true}{\ensuremath{\mathsf{True}}}

\newcommand{\fo}{\ensuremath{\text{FO}}}
\newcommand{\mso}{\ensuremath{\text{MSO}}}
\newcommand{\cmso}{\ensuremath{\text{CMSO}_1}}

\newcommand{\rank}[1]{\ensuremath{\text{rank}(#1)}}

%% Preservation properties

%% Theorem macros: GLT, HPT

\newcommand{\lt}{{{\L}o{\'s}-Tarski}}

%% Property related macros: EBSP, DLSP, WQO, h-EBSP, witness function

\newcommand{\ls}{L\"owenheim-Skolem}
\newcommand{\dls}{downward L\"owenheim-Skolem}

\newcommand{\ldelta}[2]{\ensuremath{\Delta_{#1, \mc{L}, #2}}}

%% Macros related to specific classes of structures:

%% Trees

\newcommand{\eat}[1]{}

\newcommand{\tree}[1]{\ensuremath{\mathsf{#1}}}
\newcommand{\forest}[1]{\ensuremath{\mathsf{#1}}}

%% Nested words

%% m-partite cographs

%% General structures

\newcommand{\cl}[1]{\ensuremath{\mathcal{#1}}}

\newcommand{\str}[1]{\ensuremath{\mathsf{Str}(#1)}}

%% Macros for logical fractals

\newcommand{\scale}[1]{\langle #1 \rangle}

%% Miscellaneous

%% Type-setting macros

\newcommand{\tbf}[1]{\textbf{#1}}
\newcommand{\mc}[1]{\mathcal{#1}}

\makeatletter
\newsavebox\myboxA
\newsavebox\myboxB
\newlength\mylenA

\newcommand*\xoverline[2][0.75]{%
    \sbox{\myboxA}{$\m@th#2$}%
    \setbox\myboxB\null% Phantom box
    \ht\myboxB=\ht\myboxA%
    \dp\myboxB=\dp\myboxA%
    \wd\myboxB=#1\wd\myboxA% Scale phantom
    \sbox\myboxB{$\m@th\overline{\copy\myboxB}$}%  Overlined phantom
    \setlength\mylenA{\the\wd\myboxA}%   calc width diff
    \addtolength\mylenA{-\the\wd\myboxB}%
    \ifdim\wd\myboxB<\wd\myboxA%
       \rlap{\hskip 0.5\mylenA\usebox\myboxB}{\usebox\myboxA}%
    \else
        \hskip -0.5\mylenA\rlap{\usebox\myboxA}{\hskip 0.5\mylenA\usebox\myboxB}%
    \fi}
    
  \newcommand*\xunderline[2][0.75]{%
    \sbox{\myboxA}{$\m@th#2$}%
    \setbox\myboxB\null% Phantom box
    \ht\myboxB=\ht\myboxA%
    \dp\myboxB=\dp\myboxA%
    \wd\myboxB=#1\wd\myboxA% Scale phantom
    \sbox\myboxB{$\m@th\underline{\copy\myboxB}$}%  Overlined phantom
    \setlength\mylenA{\the\wd\myboxA}%   calc width diff
    \addtolength\mylenA{-\the\wd\myboxB}%
    \ifdim\wd\myboxB<\wd\myboxA%
       \rlap{\hskip 0.5\mylenA\usebox\myboxB}{\usebox\myboxA}%
    \else
        \hskip -0.5\mylenA\rlap{\usebox\myboxA}{\hskip 0.5\mylenA\usebox\myboxB}%
    \fi}
\makeatother

\renewcommand{\str}[1]{\ensuremath{\mathcal{#1}}}
\newcommand{\tower}[1]{\ensuremath{\mathsf{tower}(#1)}}

\renewcommand{\scale}[2]{\ensuremath{\langle #1; #2 \rangle}}
\renewcommand{\forest}[1]{\ensuremath{\mathsf{#1}}}
\renewcommand{\root}[1]{\ensuremath{\mathsf{root}(#1)}}
\newcommand{\rootref}{\ensuremath{\mathsf{root}}}
\newcommand{\mdelta}[2]{\ensuremath{\delta_{#2}(#1)}}

\renewcommand{\ldelta}[2]{\ensuremath{\Delta^{\mc{L}}_{#1, #2}}}
\newcommand{\fdelta}[2]{\ensuremath{\Delta^{\fo}_{#1, #2}}}
\renewcommand{\mdelta}[2]{\ensuremath{\Delta^{\mso}_{#1, #2}}}
\newcommand{\lclass}[2]{\ensuremath{\delta^{\mc{L}}_{#1, #2}}}

\newcommand{\mclass}[2]{\ensuremath{\delta^{\mso}_{#1, #2}}}

\newcommand{\ef}{Ehrenfeucht-Fr\"aiss\'e}
\newcommand{\chaineq}{\ensuremath{\preceq}}
\newcommand{\alltypes}[1]{\ensuremath{\Delta_{#1}}}
\renewcommand{\mso}{\ensuremath{\mathrm{MSO}}}
\renewcommand{\fo}{\ensuremath{\mathrm{FO}}}
\newcommand{\gso}{\ensuremath{\mathrm{GSO}}}
\newcommand{\fpt}{\ensuremath{\mathrm{FPT}}}
\renewcommand{\rank}[1]{\ensuremath{\mbox{rank}(#1)}}
\renewcommand{\lequiv}[1]{\ensuremath{\equiv_{#1, \mc{L}}}}

\newcommand{\ind}[2]{\ensuremath{\mathcal{I}_{#1}(#2)}}
\renewcommand{\empty}{\ensuremath{\mbox{empty}}}
\newcommand{\elem}{\ensuremath{\mbox{elem}}}
\renewcommand{\cl}[1]{\ensuremath{\mathscr{#1}}}

\newcommand{\treemodel}[2]{\ensuremath{\mathrm{Tree}_{#1}(#2)}}
\newcommand{\tm}[2]{\ensuremath{\mathrm{TM}_{#1}(#2)}}

\newcommand{\lsref}{\ensuremath{\mathrm{LS}}}
\newcommand{\dlsref}{\ensuremath{\mathrm{DLS}}}
\newcommand{\ulsref}{\ensuremath{\mathrm{ULS}}}

\newcommand{\lelsref}{\ensuremath{\mc{L}\text{-}\mathrm{ELS}}}
\newcommand{\melsref}{\ensuremath{\mso\text{-}\mathrm{ELS}}}
\renewcommand{\ls}{\ensuremath{\text{L{\"o}wenheim-Skolem}}}
\renewcommand{\dls}{\ensuremath{\text{downward L{\"o}wenheim-Skolem}}}

%%%%%%%%%%%%%%%%%%%%%%%%% Formatting %%%%%%%%%%%%%%%%%%%%%%%%%%%%%%%

\titleformat{\section}[block]{\Large\sc\filcenter}{\thesection.}{5pt}{}
\titleformat{\subsection}[block]{\sc\filcenter}{\thesubsection.}{5pt}{}

\makeatletter
\providecommand{\institute}[1]{% add institute to \maketitle
  \apptocmd{\@author}{\end{tabular}
    \par\smallskip
    \begin{tabular}[t]{c}
    #1}{}{}
}
\providecommand{\email}[1]{% add email to \maketitle
  \apptocmd{\@author}{\end{tabular}
    \par\smallskip
    \begin{tabular}[t]{c}
    \texttt{#1}}{}{}
}
\makeatother

%%%%%%%%%%%%%%%%%%%%%%%%%%%%%%%%%%%%%%%%%%%%%%%%%%%%%%%%%%%%%%%%%%%%

\begin{document}

\title{Some classical model theoretic aspects \\of bounded shrub-depth classes}
\author{Abhisekh Sankaran}

\institute{\small{Department of Computer Science and Technology,}\\\small{University of Cambridge, U.K.}}
\email{\small{\{abhisekh.sankaran\}@cl.cam.ac.uk}}
\date{}

\maketitle

\begin{abstract}
We consider classes of arbitrary (finite or infinite) graphs of bounded shrub-depth, specifically the class $\tm{r, p}{d}$ of $p$-labeled arbitrary graphs whose underlying unlabeled graphs have tree models of height $d$ and $r$ labels. We show that this class satisfies an extension of the classical {\ls} property into the finite and for $\mso$. This extension being a generalization of the small model property, we obtain that the graphs of $\tm{r, p}{d}$ are pseudo-finite. In addition, we obtain as consequences entirely new proofs of a number of known results concerning bounded shrub-depth classes (of finite graphs) and $\tm{r, p}{d}$. These include the small model property for $\mso$ with elementary bounds, the classical compactness theorem from model theory over $\tm{r, p}{d}$, and the equivalence of $\mso$ and $\fo$ over $\tm{r, p}{d}$ and hence over bounded shrub-depth classes. The proof for the last of these is via an adaptation of the proof of the classical Lindstr\"om's theorem from model theory characterizing $\fo$ over arbitrary structures.  
\end{abstract}

\section{Introduction}\label{section:intro}
\newcommand{\els}{\ensuremath{\mathrm{ELS}}}

Classical model theory is a subject that studies mathematical structures for their properties that can be expressed in a formal language such as
first order logic ($\fo$). The structures considered are arbitrary, so finite or infinite (but are typically infinite). Two of the earliest results in the subject, which form its pillars even today, are the L\"owenheim-Skolem theorem and the compactness theorem~\cite{chang-keisler}. The L\"owenheim-Skolem ($\lsref$) theorem states that for every infinite structure $\str{A}$ over a countable vocabulary and every infinite cardinal $\lambda$, there is a structure of size $\lambda$ that is $\fo$ equivalent to $\str{A}$, that further is a substructure of $\str{A}$ if $\lambda$ is less than the size of $\str{A}$ (the ``downward" part of the theorem, denoted $\dlsref$), and a superstructure of $\str{A}$ otherwise (the ``upward" part, denoted $\ulsref$).
On the other hand, the compactness theorem states that if every finite subset of an arbitrary set of $\fo$ sentences has a model, then the entire set has a model. The importance of these results can be gauged from a classic result by Lindstr\"om~\cite{lindstrom} who showed that for a reasonable notion of an abstract logic, if such a logic either satisfies both the $\dlsref$ and compactness theorems, or satisfies the full $\lsref$ theorem (so both its parts), then it is no more expressive than $\fo$.  

Shrub-depth is a structural parameter of classes of finite graphs that was introduced in~\cite{shrub-depth-intro} to admit fixed parameter tractable ($\fpt$) algorithms for a variety of interesting algorithmic problems, with an \emph{elementary} parameter dependence. The notion parameterizes classes of dense graphs by the height of their defining tree models, where informally, a tree model $\tree{t}$ of a graph $G$ is a rooted tree whose leaves are the vertices of $G$, and these are assigned labels from a finite set. The presence or absence of an edge between two vertices of $G$ is determined by the labels of these vertices in $\tree{t}$ and the distance between them in $\tree{t}$.  Since its inception in~\cite{shrub-depth-intro}, the notion has seen a lot of active research for not just its algorithmic properties, but also its structural and logical aspects~\cite{shrub-depth-FO-equals-MSO, shrub-depth-oum, CF20}. For instance, graphs of shrub-depth at most $d$ admit small models for $\mso$ sentences (of sizes bounded by elementary functions of the lengths of the sentences), and admit the equivalence of $\mso$ and $\fo$ in terms of their expressive powers~\cite{shrub-depth-FO-equals-MSO}.

In~\cite{abhisekh-csl17}, a finitary analogue of the {\dls} property was introduced. A class of structures has this property denoted $\mc{L}$-$\mathsf{EBSP}$ in~\cite{abhisekh-csl17} for a logic $\mc{L}$ such as $\fo$ or $\mso$, 
if any large structure in the class contains a small $\mc{L}[m]$-equivalent 
substructure that belongs to class. Here $\mc{L}[m]$-equivalent means that the two
structures agree on all $\mathcal{L}$ sentences of quantifier nesting depth (or rank) at most $m$, and ``small" means of size bounded by a computable function of $m$. Bounded shrub-depth classes satisfy  $\mso$-$\mathsf{EBSP}$  (which can be shown to be equivalent to the small submodel property for $\mso$), and it turns out going further, that a more general property holds of these classes, one called the \emph{logical fractal} property~\cite[Section 6]{abhisekh-csl17}. This property asserts ``logical self-similarity" at all ``scales" less than the size of a given structure, in analogy with the notion of fractals studied in the natural sciences. More specifically, for a strictly increasing function $f: \mathbb{N} \rightarrow \mathbb{N}$, called a ``scale function" in~\cite{abhisekh-csl17}, define the $i^{\text{th}}$ scale of $f$ as the set of all numbers in the interval $(f(i), f(i+1)]$. Then an $\mc{L}$-fractal is a class $\cl{C}$ of structures for which for each natural number $m$, there is a scale function $f_m$ such that for each structure $\str{A}$ in $\cl{C}$, there is an $\mc{L}[m]$-equivalent substructure $\str{B}$ of $\str{A}$ in $\cl{C}$, at each scale less than the scale of $\str{A}$. It turns out that bounded shrub-depth classes are $\mso$-fractals with computable scale functions. The $\mc{L}$-fractal property can be seen as a finitary $\mc{L}$-$\dlsref$ property that is closer to the classical $\dlsref$ property than $\mc{L}$-$\mathsf{EBSP}$. 

In this paper, we generalize significantly the above result for bounded shrub-depth classes, that further yields us various notable consequences. Specifically, we consider a finitary analogue of the full {$\lsref$} property, namely both its downward and upward parts, that therefore generalizes the logical fractal property. For $f$ as above, define for an infinite cardinal $\lambda$, the $\lambda^{\text{th}}$ scale of $f$ simply as the set $\{\lambda\}$. A class $\cl{C}$ of arbitrary structures (so now finite and infinite structures) is said to satisfy the \emph{extended $\mc{L}$-{\ls} ($\lelsref$) property} if there exists a scale function $f$ such that for any structure $\str{A}$ in $\cl{C}$, if $\str{A}$ is at scale $\eta$ of $f$ for some cardinal $\eta$ (that is, the size of $\str{A}$ belongs to the scale $\eta$), then for all cardinals $\lambda$, the following hold:
\vspace{1pt}\begin{enumerate}[1.]
    \item (Downward $\mc{L}$-$\els$): If $\lambda$ is at most  $\eta$, then (i) if $\lambda$ is finite, then there is a substructure of $\str{A}$, at scale $\lambda$ that is $\mc{L}[\lambda]$-equivalent to $\str{A}$, and (ii) if $\lambda$ is infinite, there is a substructure of $\str{A}$, at scale $\lambda$ that is $\mc{L}$-equivalent to $\str{A}$, where $\mc{L}$-equivalent means indistinguishable with respect to all sentences of $\mc{L}$.
    \item (Upward $\mc{L}$-$\els$): If $\lambda$ is at least $\eta$, then (i) if $\eta$ is finite, then there is a superstructure of $\str{A}$ at scale $\lambda$ that is $\mc{L}[\eta]$-equivalent to $\str{A}$, and (ii) if $\eta$ is infinite, then there is a superstructure of $\str{A}$ at scale $\lambda$ that is $\mc{L}$-equivalent to $\str{A}$.
\end{enumerate}
We call $f$ a ``witness function" for the  $\mso$-$\els$ property of $\cl{C}$.
From the perspective of fractals, the $\lelsref$ property can be seen as asserting ``decreasing $\mc{L}$-self-similarity" going downwards from the size of a given structure and ``increasing $\mc{L}$-self-similarity" going upwards. It is not difficult to see that for any given $m$, by adjusting the scales suitably, specifically by merging all scales of $f$ that are at most $m$, the $\lelsref$ property, restricted to the finite, implies the $\mc{L}$-fractal property since $\mc{L}[m']$-equivalence implies $\mc{L}[m]$-equivalence for all $m' \ge m$. 
We now consider the extension of the notion of bounded shrub-depth to arbitrary graphs as introduced in~\cite{CF20}, obtained by considering tree models of bounded height but arbitrary branching. Specifically, for $r, p, d \in \mathbb{N}$, we consider the class $\tm{r, p}{d}$ of arbitrary $p$-labeled graphs (that is, graphs whose vertices are labeled with labels from $\{1, \ldots, p\}$) that admit arbitrary tree models of height $d$ and $r$ labels, and show as the central result of our paper, the following theorem. 

\begin{theorem}\label{thm:main-display}
The class $\tm{r, p}{d}$ satisfies the $\mso$-$\els$ property with elementary ($(d+1)$-fold exponential) witness functions.
\end{theorem}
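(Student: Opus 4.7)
The proof proceeds by induction on the tree-model height $d$, aiming for a witness function $f_{r, p, d}$ bounded by a $(d+1)$-fold exponential in the rank parameter. The base case $d = 1$ concerns graphs whose sole structure arises from pairs of leaf labels (edges are determined by the root's signature applied to pairs of vertex labels), so the $\mso[m]$-type of such a graph is determined by, for each label $a \in \{1, \ldots, r\}$, the cardinality of the set of $a$-labeled vertices, truncated at a threshold $T_0(m) = 2^{O(m)}$. Trimming and padding label classes handles every target scale, finite and infinite, at this base level. The key ingredient for the inductive step is a Feferman--Vaught--style composition lemma for the ``label-product'' operation at the root of a tree model: if $G \in \tm{r, p}{d+1}$ has tree model $\tree{t}$ with root $\rho$ and children $(c_i)_{i \in I}$, and $G_i$ denotes the labeled subgraph induced by the leaves in the subtree at $c_i$ (enriched with a label recording the label of $c_i$), then the $\mso[m]$-type of $G$ is determined by the multiset $\{\mathrm{tp}_{m+c}(G_i) : i \in I\}$ of $\mso[m+c]$-types of the children, with each multiplicity capped at a threshold $T(m) = 2^{O(m)}$, for a small absolute constant $c$.

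\textbf{Inductive step.} In the downward part at a finite target scale $\lambda \leq \eta$, two $\mso[\lambda]$-type-preserving operations suffice: remove any child subtree $G_i$ whose $\mso[\lambda+c]$-type already appears $T(\lambda)$ times among the remaining children, and replace each surviving $G_i$, which lies in $\tm{r, p'}{d}$ for some $p' = p'(r, p)$, by an $\mso[\lambda+c]$-equivalent substructure at a smaller scale furnished by the inductive hypothesis. A suitable interleaving of these two operations, with scales chosen to hit $\lambda$ exactly, produces the required substructure of $G$ at scale $\lambda$. The upward part is symmetric: duplicate child subtrees whose types already appear with sufficient multiplicity, and/or enlarge individual $G_i$'s using the inductive upward statement; the composition lemma guarantees that the composed graph is $\mso[m]$-equivalent to $G$. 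For the infinite cases, the same operations extend transfinitely --- downward infinite cardinalities are reached by iterating the shrinking operations together with the classical $\dlsref$ applied inside each $G_i$, and upward infinite cardinalities by duplicating child subtrees until the target cardinality is attained, appealing to $\ulsref$ at the base level if required.

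\textbf{Main obstacle.} The principal difficulty is the bookkeeping needed to keep the witness function $(d+1)$-fold exponential in $m$: at each inductive step one incurs exactly one additional exponential, coming from the number of $\mso[m+c]$-types of the children. Concretely, one takes $f_{r, p, d+1}(m) \approx 2^{f_{r, p', d}(c \cdot m)}$, so that starting from a structure at scale $i$ of $f_{r, p, d+1}$, the trimming/padding at the level of root-children produces an intermediate value that, after applying the inductive scale function inside each surviving child, lands precisely on scale $\lambda$. The mixed cases where $\eta$ is finite and $\lambda$ is infinite (or vice versa) are handled by combining the finite constructions above with the classical L\"owenheim-Skolem theorems, once the composition lemma reduces $\mso$-equivalence of the composed structures to agreement on the multisets of $\mso$-types of the children.
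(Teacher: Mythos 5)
Your proposal follows the same overall strategy as the paper's proof: induction on the height of the tree model, a Feferman--Vaught-style composition at the root, trimming/padding of child subtrees for the downward/upward directions, and one additional exponential per level of height. However, there are two genuine gaps.

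The first concerns your key composition lemma. The claim that the $\mso[m]$-type of the composed graph is determined by the multiset of $\mso[m+c]$-types of the children with multiplicities capped at a threshold $T(m)=2^{O(m)}$ cannot hold with a cap independent of the complexity of the children. In the $\mso$ Ehrenfeucht--Fra\"iss\'e game on two such compositions, a single set move refines the type of each child into one of up to $K$ finer types, where $K$ is the number of $(m-1)$-types of children over the vocabulary expanded by one set predicate; for the duplicator to absorb a discrepancy between two multiplicities that both exceed the cap, the cap at round $m$ must be at least $K$ times the cap at round $m-1$ (otherwise the spoiler spreads the children of one type over $K$ refined types, each with small multiplicity, and forces exact matching of totals that differ). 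Since $K$ for children in $\cl{T}_{d-1,p}$ is itself a $d$-fold exponential in $m$, the correct cap grows like $(\iota(m,\cdot,\mso))^{O(m)}$ where $\iota$ is the index of $\mequiv{m}$ over the expanded child class. This is precisely why the paper routes the composition through the Elberfeld--Grohe--Tantau theorem and tracks the rank of $\alpha_\Phi$ as $O(\iota(m,\cl{\widetilde{C}},\mso)^{m+1})$, applying the shrinking/growing to the \emph{monadic type-indicator} structure. Your final $(d+1)$-fold-exponential bound would survive the correction (the larger cap is absorbed into the outermost exponential), but the trimming rule ``delete a child whose type already occurs $2^{O(m)}$ times'' does not preserve the $\mso[m]$-type, so the inductive step as written fails.

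The second gap is the treatment of the infinite cases. When both $\eta$ and $\lambda$ are infinite, the $\mso$-$\mathrm{ELS}$ definition demands full $\mso$-equivalence, i.e.\ agreement on sentences of \emph{every} rank simultaneously; rank-by-rank trimming combined with the classical downward L\"owenheim--Skolem theorem (an $\fo$ statement) does not yield a single substructure equivalent at all ranks. The paper obtains this case by a different route entirely: $\tm{r,p}{d}$ is $\fo$-axiomatizable, $\mso$ collapses to $\fo$ over it, and only then does the classical L\"owenheim--Skolem theorem apply. Similarly, in the downward direction with $\eta$ infinite and $\lambda$ finite, one must first extract a \emph{finite} tree of controlled size from an infinite one, which requires locating all nodes of infinite degree and shrinking them coherently so that the result is finite yet still above the target scale; ``iterating the shrinking operations transfinitely'' does not by itself guarantee termination at a finite structure in the right scale.
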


We prove the above theorem in two parts, the first where one of $\lambda$ or $\eta$ in the $\melsref$ definition is finite (Theorem~\ref{thm:shrub-depth-full-LS-finite}), and the second when both are infinite (Proposition~\ref{prop:shrub-depth-full-LS-infinite}). Both of these parts crucially build upon the result that $\tm{r, p}{d}$ admits small $\mso[m]$-equivalent submodels whose sizes are bounded by a $d$-fold exponential function of $m$ (Theorem~\ref{thm:shrub-depth-DLS-and-index-bound:DLS}). Thus in the parlance of model theory, the graphs of $\tm{r, p}{d}$ are  \emph{pseudo-finite}, in that every $\fo$ (indeed also $\mso$) sentence true in a graph $G$ of $\tm{r, p}{d}$, is also true in a finite graph of $\tm{r, p}{d}$, and further even in infinitely many finite induced subgraphs of $G$ if $G$ is infinite.

As consequences of Theorem~\ref{thm:main-display}, we obtain entirely new proofs for a variety of known results in the literature concerning bounded shrub-depth classes. Firstly, it follows that shrub-depth $d$ classes of finite graphs admit small models for $\mso$ of elementary sizes (shown in~\cite{shrub-depth-FO-equals-MSO}); consequently, the index of the $\mso[m]$ equivalence relation over any shrub-depth $d$ class is elementary again. Secondly, we obtain a new proof of the classical compactness theorem from model theory mentioned at the outset, over the class $\tm{r, p}{d}$. That is, we show that for any set $T$ of $\mso$ sentences over the vocabulary of $p$-labeled graphs, if every finite subset of $T$ has a model in $\tm{r, p}{d}$, then all of  $T$ has a model in $\tm{r, p}{d}$; and further, a countable such model (Theorem~\ref{thm:shrub-depth-compactness}). It can be seen following~\cite{CF20} that $\tm{r, p}{d}$ can be axiomatized (over all structures) by a single $\fo$ sentence, so that $\tm{r, p}{d}$ is an \emph{elementary} class in the classical model-theoretic language, where elementary here means ``axiomatizable by a set of $\fo$ sentences". Since most classical model theoretic results, and in particular compactness, hold over elementary classes, the $\fo$ compactness theorem for $\tm{r, p}{d}$ follows already from literature. However the standard proofs of the compactness theorem are either via G\"odel's completeness theorem for $\fo$, or Henkin models, or Skolemization or ultraproducts~\cite{chang-keisler}. Our approach is completely different from all of these and goes via using the pseudo-finiteness of the graphs of $\tm{r, p}{d}$ to construct a chain under induced subgraph of models of increasingly larger subsets of the given theory $T$, whose union is a countable model of $T$. We believe this technique might be of independent interest. 

The reader might wonder why we have not mentioned our $\mso$ compactness theorem for $\tm{r, p}{d}$ to be more general than the $\fo$ compactness theorem. The reason is that, as it turns out, $\mso$ is no more expressive than $\fo$ over $\tm{r, p}{d}$ (Theorem~\ref{thm:mso=fo}). This fact has been shown for $\tm{r, 1}{d}$ (so basically the unlabeled versions of graphs in $\tm{r, p}{d}$) in~\cite{CF20}, and for the class of finite graphs in $\tm{r, 1}{d}$ in~\cite{shrub-depth-FO-equals-MSO}, but we provide a novel proof of this result (and further also for $\tm{r, p}{d}$ for all $p > 1$) by adapting the ideas from the proof of the classical Lindstr\"om's theorem mentioned at the outset, which indeed asserts a "logical collapse" to $\fo$ for any abstract logic having  the {\dlsref} and compactness properties. Since $\mso$ does satisfy the conditions of the mentioned abstract logic, it is natural to think that Lindstr\"om's theorem can be employed as is, to prove the $\mso$-$\fo$ equivalence over $\tm{r, p}{d}$. But it turns out that cannot be done since the proof of Lindstr\"om's theorem requires a compactness theorem for structures over the given vocabulary expanded with binary relation and function symbols. Therefore, we directly adapt and implement the ideas from Lindstr\"om's proof, in our setting. We remark that our $\mso$ compactness theorem, particularly the fact that it gives a \emph{countable} model, turns out to be vital for this adaptation to go through. 

The main tool at the heart of our results, is an ``unbounded" version of  Feferman-Vaught composition theorem proved by Elberfeld, Grohe and Tantau in~\cite{FO-MSO-coincide}. The standard Feferman-Vaught composition theorem shows that evaluating an $\mso$ sentence over the disjoint union of finitely many structures is equivalent to evaluating a finite set of $\mso$ sentences over each of the individual structures, and then putting together the results of the evaluations using a Boolean propositional logic formula. In~\cite{FO-MSO-coincide}, this result is extended elegantly to infinitely many structures using $\fo$ instead of propositional logic. Specifically, for any $\mso$ sentence $\Phi$, to know whether it is true over the disjoint union of an arbitrary family $\mc{F}$ of structures, it is equivalent to examine the truth of an $\fo$ sentence $\alpha_\Phi$ over an \emph{$\mso[m]$-type indicator} $\ind{m}{\mc{F}}$ which is a structure over a monadic vocabulary, that contains all the information about the equivalence classes of the $\mso[m]$ relation to which the structures in $\mc{F}$ belong (indeed these classes constitute the mentioned vocabulary). The $\fo$ sentence $\alpha_\Phi$ constructed inductively on the structure of $\Phi$, can be seen as doing both the job of evaluating suitable $\mso[m]$ sentences over the individual structures of $\mc{F}$ and the job of compiling together the results of the individual evaluations. The key observation for our results is the fact that the vocabulary of the type indicator is \emph{monadic}. Structures over monadic vocabularies are particularly nice when it comes to dealing with them using $\fo$ sentences of bounded rank, since, when they are sufficiently large, one can always shrink them (upto a certain threshold depending on the rank) or expand them (without any bounds, even infinitely) by respectively deleting or adding one element at a time without changing their $\fo[q]$ theory, where $q$ is the considered rank. This simple fact is exploited over and over again to prove all our results mentioned above. Specifically, the results are first proved for trees of bounded height noting that trees are after all constructed inductively from forests of lesser height, and the latter lend themselves to using the mentioned composition theorem. Subsequently, these results for trees are transferred to $\tm{r, p}{d}$ using the $\fo$ interpretability of the latter in the former.

The organization of the paper is as follows. In Section~\ref{section:prelims}, we provide the background and notation for the paper, also recall the Feferman-Vaught composition theorem from~\cite{FO-MSO-coincide}. In Section~\ref{section:elementary-bounds}, we show the small model property for $\mso$ over $\tm{r, p}{d}$ with elementary bounds. In Section~\ref{section:ELS}, we show the $\mso$-$\els$ property for $\tm{r, p}{d}$. In Section~\ref{section:compactness} we prove the $\mso$ compactness theorem over $\tm{r, p}{d}$ and utilize this in Section~\ref{section:mso=fo} to show the $\mso$-$\fo$ equivalence over $\tm{r, p}{d}$. We present our conclusions in Section~\ref{section:conclusion}.

\section{Background}\label{section:prelims}

We assume the reader is familiar with the terminology and notation in connection with $\mso$. A vocabulary, typically denoted $\tau$ or $\sigma$, is a finite set of relation symbols. We denote the class of all $\mso$ formulae over $\tau$ as $\mso(\tau)$ and its subclass of $\fo$ formulae as $\fo(\tau)$. A sequence $x_1, \ldots, x_n$ of $\fo$ variables is denoted $\bar{x}$. An $\mso$ formula $\varphi$ whose free variables are in $\bar{x}$ is denoted $\varphi(\bar{x})$. A sentence is a formula without free variables. The \emph{quantifier rank}, or simply rank, of an $\mso$ formula $\varphi$, denoted $\rank{\varphi}$, is the maximum number of quantifiers (both first order and second order) appearing in any root to leaf path in the parse tree of the formula. We denote by $\mso[m]$ the class of all $\mso$ formulae of rank at most $m$. 

A $\sigma$-structure $\str{A}$  consists of a universe $A$ equipped with interpretations of the predicates of $\sigma$. We denote by $\str{B} \subseteq \str{A}$ that $\str{B}$ is a substructure of $\str{A}$, and by $\str{A} \cong \str{B}$ that $\str{A}$ and $\str{B}$ are isomorphic. We denote by $\mathbb{N}$ the set of all natural numbers, by $\mathbb{N}_+$ the set  $\mathbb{N} \setminus \{0\}$, and by $[p]$ the set $\{1, \ldots, p\}$ for any $p \in \mathbb{N}_+$. For $p \in \mathbb{N}_+$, let $\tau = \sigma \cup \{P_1, \ldots, P_p\}$ be a vocabulary obtained by expanding $\sigma$ with $p$ new unary relation symbols $P_1, \ldots, P_p$. A \emph{$p$-labeled} $\sigma$-structure is a $\tau$-structure $\str{A}$ in which for every element $a$ in $A$, there exists a unique $i \in [p]$ such that $a \in P_i^{\str{A}}$. (It is allowed for $P_j^{\str{A}}$ to empty for some $j \in [p]$.) We say of the mentioned element $a$, that it is labeled with the label $i$. We consider simple, undirected and loop-free graphs in this paper, where such a graph $G$ is a $\sigma$-structure for $\sigma = \{E\}$, in which the binary relation symbol $E$ is interpreted as an irreflexive and symmetric relation. The graphs can have \emph{arbitrary} cardinality (i.e. whose universe could be finite or infinite). We denote by $V(G)$ and $E(G)$, the vertex and edge sets of $G$. As above,  a \emph{$p$-labeled graph} $G$ is graph each of whose vertices is assigned a unique label from $[p]$. If $G_1$ and $G_2$ are two $p$-labeled graphs, and $G_1 \subseteq G_2$, then we say $G_1$ is a \emph{labeled induced subgraph} of $G_2$.

A \emph{tree} is a connected graph that does not contain any cycles. A rooted tree $\tree{t}$ is a $\{E, \rootref\}$-structure whose $\{E\}$-reduct is a tree and in which the unary relation symbol $\rootref$ is interpreted as a set consisting of a single element called the \emph{root} of $\tree{t}$, and denoted $\root{\tree{t}}$. We can now talk of a \emph{$p$-labeled rooted tree} as a rooted tree in which each node (including the root) is assigned a unique label from $[p]$. We shall often call $p$-labeled rooted trees, as simply trees when $p$ is clear from context. For a tree $\tree{t}$ and a node $v$ of it, the \emph{subtree of $\tree{t}$ rooted at $v$}, denoted $\tree{t}_v$, is the $\{E, \rootref\}$-structure such that (i) the $\{E\}$-reduct of $\tree{t}_v$ is the substructure of $\tree{t}$ induced by the set of all nodes $u$ of $\tree{t}$ for which $v$ lies on the path from $u$ to $\root{\tree{t}}$, and (ii) $\rootref$ is interpreted in $\tree{t}_v$ as the set $\{v\}$. We say a tree $\tree{s}$ is a subtree of $\tree{t}$ if it is the subtree of $\tree{t}$ rooted at some node $v$ of $\tree{t}$ (note the slight abuse of the standard `subtree' terminology, intended for later convenience). If $v$ is a child of $\root{\tree{t}}$, then we say $\tree{t}_v$
is a \emph{child subtree} of $\root{\tree{t}}$ in $\tree{t}$.  
A $p$-labeled rooted forest $\forest{f}$ is a disjoint union of $p$-labeled rooted trees. If $(\tree{t}_i)_{i \in I}$ for an index set $I$ is the family of $p$-labeled rooted trees constituting $\forest{f}$, then we write $\forest{f} = \bigcupdot_{i \in I} \tree{t}_i$ where $\bigcupdot$ denotes disjoint union. A subtree $\tree{t}_2$ of a ($p$-labeled rooted) tree $\tree{t}_1$ is said to be \emph{leaf-hereditary} if the roots of $\tree{t}_2$ and $\tree{t}_1$ are the same, and every leaf of $\tree{t}_2$ is also a leaf of $\tree{t}_1$. It follows that if $\forest{f}_i$ is the forest of $p$-labeled rooted trees obtained by removing the root of $\tree{t}_i$ for $i \in \{1, 2\}$, then for every tree $\tree{s}_2$ of $\forest{f}_2$, there exists a tree $\tree{s}_1$ of $\forest{f}_1$ such that $\tree{s}_2$ is a leaf-hereditary subtree of $\tree{s}_1$. The \emph{height} of a tree $\tree{t}$ is the maximum root to leaf distance in $\tree{t}$. The height of a node $v$ of a tree $\tree{t}$ is the height of the subtree $\tree{t}_v$. Singleton trees have height 0. We denote by $\cl{T}_{d, p}$ the class of all (arbitrary cardinality) $p$-labeled rooted trees of height at most $d$. 

\subsection{Shrub-depth}\label{section:shrub-depth}
We recall the notion of tree models from~\cite{shrub-depth-definitive} and state it in its extended version for arbitrary cardinality graphs. For $r, d \in \mathbb{N}$, a \emph{tree model of $r$ labels and height $d$} for a graph $G$ is a pair $(\tree{t}, S)$ where $\tree{t}$ is an $(r+1)$-labeled rooted tree of height $d$ and $S \subseteq [r]^2 \times [d]$ is a set called the \emph{signature} of the tree model such that:
\vspace{2pt}\begin{enumerate}
    \item The length of every root to leaf path is exactly $d$.
    \item The set $V(G)$ is exactly the set of leaves of $\tree{t}$.
    \item Each leaf is assigned a unique label from $[r]$ and all internal nodes are labeled $r+1$.
    \item For any $i, j \in [r]$ and $l \in [d]$, it holds that $(i, j, l) \in S$ if and only if $(j, i, l) \in S$. 
    \item For vertices $u, v \in V(G)$, if $i$ and $j$ are the labels of $u$ and $v$ seen as leaves of $\tree{t}$, and the distance between $u$ and $v$ in $\tree{t}$ is $2l$, then $\{u, v\} \in E(G)$ iff $(i, j, l) \in S$. Observe that the distance between $u$ and $v$ in $\tree{t}$ is an even number as all root to leaf paths are of length $d$, and $l$ is thus the distance between $u$ (or $v$) and the least common ancestor of $u$ and $v$.
\end{enumerate}

The class of all (arbitrary) tree models of $r$ labels and height $d$ is denoted $\treemodel{r}{d}$ and the class of all (arbitrary) graphs that have tree models in $\treemodel{r}{d}$ is denoted $\tm{r}{d}$. A class $\cl{C}$ of arbitrary graphs is said to have \emph{shrub-depth} $d$ if $\cl{C} \subseteq \tm{r}{d}$ for some $r \ge 1$, and $\cl{C} \not\subseteq \tm{r'}{d-1}$ for any $r' \ge 1$. We say $\cl{C}$ has bounded shrub-depth if $\cl{C} \subseteq \tm{r}{d}$ for some $r, d \ge 1$. For the purposes of our results, we extend the notion of tree models recalled above in a simple way to handle classes of $p$-labeled graphs whose underlying unlabeled graphs form classes having bounded shrub-depth. Given $r, p, d \in \mathbb{N},  p \ge 1$, let $\tree{s}$ be a tree whose internal nodes are labeled $r+1$ and each of whose leaf nodes is labeled with a pair $(i, j)$  where $i \in [r]$ and  where $j \in [p]$. For a set $S \subseteq [r]^2 \times [d]$, we say $(\tree{s}, S) \in \treemodel{r, p}{d}$ if the tree $\tree{t}$ obtained from $\tree{s}$ by replacing the pair labeling any leaf node with the first component of the pair (so replacing every leaf label of the form $(i, j)$ with just $i$), is such that $(\tree{t}, S) \in \treemodel{r}{d}$. It is easy to see that $(\tree{s}, S)$ defines a $p$-labeled graph $G$ whose underlying unlabeled graph is the graph of $\tm{r}{d}$ for which $(\tree{t}, S)$ is a tree model, and such that for any node $u \in V(G)$, if $(i, j)$ is the label of $u$ seen as a leaf node of $\tree{s}$, then the label of $u$ in $G$ is $j$. We say that $(\tree{s}, S)$ is a tree-model of $G$. Denote by $\tm{r, p}{d}$ the class of  arbitrary  $p$-labeled graphs $G$ for which there is a tree model in $\treemodel{r, p}{d}$.  It is easy to see that $\tm{r, p}{d}$ is a hereditary class, i.e. it is closed under induced subgraphs.

From the definitions above, we see that for any $(\tree{s}, S) \in \tm{r, p}{d}$, the tree $\tree{s}$ can be seen as belonging to $\cl{T}_{d, r \cdot p+1}$ via the bijective function $f: [r] \times [p] \rightarrow [r \cdot p]$ that maps the pair $(i, j)$ labeling any leaf node of $\tree{s}$ to the number $(i-1) \cdot p + j$, with every internal node of $\tree{s}$ getting the label $r \cdot p + 1$. We will therefore also use the notation $\tree{s} \in \cl{T}_{d, r \cdot p  + 1}$. It is easy to write an $\fo$ sentence $\Omega_{r, p, d}$ over the vocabulary of $\cl{T}_{d, r \cdot p + 1}$ that says that for a given tree from $\cl{T}_{d, r \cdot p + 1}$, every root to leaf path has length exactly $d$, every internal node is labeled with the label $r \cdot p + 1$, and no leaf node is labeled with the label $r\cdot p + 1$. Then any model $\tree{s} \in \cl{T}_{d, r \cdot p + 1}$ of $\Omega_{r, p, d}$ when equipped with a signature $S \subseteq [r]^2 \times [p]$  gives a tree model $(\tree{s}, S) \in \treemodel{r, p}{d}$, and conversely for any tree model $(\tree{s}, S) \in \treemodel{r, p}{d}$, we have $\tree{s}$, seen as a tree of $\cl{T}_{d, r\cdot p +1}$, models $\Omega_{r, p, d}$. Observe that the rank of $\Omega_{r, p, d}$ is $O(d)$. For every signature $S \subseteq [r]^2 \times [p]$, there exists a tuple $\Xi_{S, p}$ of $\fo$ formulas given by $\Xi_{S, p} = (\xi_{V, S}(x), \xi_{E, S}(x, y), (\xi_{P_j, S}(x))_{1 \leq j \leq p})$ that when evaluated on a tree $\tree{s}$ satisfying $\Omega_{r, p, d}$ produces the graph $G \in \tm{r, p}{d}$ of which $(\tree{s}, S) \in \treemodel{r, p}{d}$ is a tree model. Specifically, (i) the formula $\xi_{V, S}(x)$ says that $x$ is a leaf node in $\tree{s}$, (ii) the formula $\xi_{E, S}(x, y)$ says that $x$ and $y$ are leaves of $\tree{s}$ and for some $(i, j, l) \in S$, it holds that $i$ and $j$ are resp. the labels of $x$ and $y$, and the distance between $x$ and $y$ in $\tree{t}$ is exactly $2l$, and (iii) the formula $\xi_{P_j, S}(x)$ says that $x$ is a leaf node whose label corresponds to the pair $(i, j)$ for some $i \in [r]$. We call $\Xi_{S, p}$ an \emph{$\fo$ interpretation} of $\tm{r, p}{d}$ in $\treemodel{r, p}{d}$. Thus $\Xi_{S, p}$ defines a function from  $\treemodel{r, p}{d}$ to $\tm{r, p}{d}$, which we also denote as $\Xi_{S, p}$; so for $G$ and $\tree{s}$ as above $G = \Xi_{S, p}(\tree{s})$. Then $\tm{r, p}{d} = \bigcup_{S \subseteq [r]^2 \times [p]} \{\Xi_{S, p}(\tree{s}) \mid (\tree{s}, S) \in \treemodel{r, p}{d}\}$. We now mention some important observations about $\Xi_{S, p}$. Firstly, the rank of $\Xi_{S, p}$, defined as the maximum rank of the formulae appearing in it, is $O(d)$. Next, if $G = \Xi_{S, p}(\tree{s})$, then for any $\mso$ formula $\varphi$ in the vocabulary of $G$, there is an $\mso$ formula that we denote $\Xi_{S, p}(\varphi)$ in the vocabulary of $\tree{s}$ such that $G \models \varphi$ iff $\tree{s} \models \Xi_{S, p}(\varphi)$. (This is a special case of a more general result called the fundamental theorem of interpretations.) As a consequence, if $\tree{s}_1$ and $\tree{s}_2$ are two models of $\Omega_{r, p, d}$ and $q = \mbox{rank}(\Xi_{S, p})$ for a signature $S$, then $\tree{s}_1 \mequiv{m+q} \tree{s}_2$ implies $\Xi_{S, p}(\tree{s}_1) \mequiv{m} \Xi_{S, p}(\tree{s}_2)$. Finally for $(\tree{s}, S) \in  \treemodel{r, p}{d}$, if $\tree{s}'$ is a leaf-hereditary subtree of $\tree{s}$, then $\Xi_{S, p}(\tree{s}')$ is a labeled induced subgraph of $\Xi_{S, p}(\tree{s})$.

\subsection{Feferman-Vaught composition}\label{section:FV-comp} 
Let $\mc{L}$ be one of the logic $\fo$ or $\mso$. Given $m \in \mathbb{N}$ and a $\tau$-structures $\str{A}$ and $\str{B}$, we say that $\str{A}$ and $\str{B}$ are $\mc{L}[m]$-equivalent, denoted $\str{A} \lequiv{m} \str{B}$, if $\str{A}$ and $\str{B}$ agree on all $\mc{L}$ sentences of rank at most $m$. The relation $\mequiv{m}$ is an equivalence relation of finite index. Given a class $\cl{C}$ of $\tau$-structures and $m \in \mathbb{N}$, we let $\ldelta{m}{\cl{C}}$ denote the set of all equivalence classes of the $\lequiv{m}$ relation over $\cl{C}$. We denote by $\lclass{m}{\cl{C}}(\str{A})$ the equivalence class of $\ldelta{m}{\cl{C}}$ that contains $\str{A}$. For every $\delta \in \ldelta{m}{\cl{C}}$, there exists an $\mc{L}[m]$ sentence $\Theta_\delta$ that defines $\delta$ over $\mc{C}$. Define the function $\mathsf{tower}: \mathbb{N} \times \mathbb{N} \rightarrow \mathbb{N}$ as: $\tower{0, n} = n$ and $\tower{d, n} = 2^{\tower{d-1, n}}$. Let $\iota(m, \cl{C}, \mc{L})$ denote the (finite) index of the $\lequiv{m}$ relation over $\cl{C}$; so $\iota(m, \cl{C}, \mc{L}) = |\ldelta{m}{\cl{C}}|$. The relation $\lequiv{m}$ has a characterization in terms of Ehrenfeucht-Fra\"iss\'e games. We point the reader~\cite[Chapters 3 \& 7]{libkin} for the details concerning these. 

Let $\tau$ and $\cl{C}$ be as above. Let $\mc{F} = (\str{A}_i)_{i \in I}$ be a family of structures of $\cl{C}$ with disjoint universes, indexed by an index set $I$ of an arbitrary cardinality. Let $m \in \mathbb{N}$ and $\tau_{m,  \cl{C}}$ be the relational vocabulary consisting of a distinct unary predicate symbol $T$ for each class $T \in \mdelta{m}{\cl{C}}$, and containing no other predicate symbols. Note that we are using the same symbol $T$ to denote both an equivalence class of $\mequiv{m}$ over $\cl{C}$, as well as the unary predicate symbol corresponding to the class for the ease of understanding; whether we mean $T$ as a class or a predicate symbol will be clear from context.  The \emph{$\mso[m]$-type indicator} for the family $\mc{F}$ is now defined as a $\tau_{m,  \cl{C}}$-structure $\ind{m}{\mc{F}}$ such that (i) the universe of $\ind{m}{\mc{F}}$ is $I$, and (ii) for $T \in \mdelta{m}{\cl{C}}$, the interpretation of $T$ in $\ind{m}{\mc{F}}$ is the set $\{i \in I \mid T = \mclass{m}{\cl{C}}(\str{A}_i)\}$. Observe that for each $i \in I$, there is exactly one predicate $T \in \tau_{m,  \cl{C}}$ such that  $i$ is in the interpretation of $T$ in $\ind{m}{\mc{F}}$. We now have the following theorem from~\cite{FO-MSO-coincide}. (This is the  special case of $w = 0$ and $\mathrm{L} = \mso$ in~\cite[Theorem 14]{FO-MSO-coincide}.)

%in~\cite{FO-MSO-coincide}.)

\begin{theorem}[Theorem 14,~\cite{FO-MSO-coincide}]\label{thm:FV-decomposition}
Let $\cl{C}$ be a class of structures over a vocabulary $\tau$. For every $\mso$ sentence $\Phi$ over $\tau$ of rank $m$, there exists an $\fo$ sentence $\alpha_{\Phi}$ over $\tau_{m, \cl{C}}$ such that if $\mc{F} = (\str{A}_i)_{i \in I}$ is a family of structures of $\cl{C}$ with disjoint universes, indexed by an index set $I$ of an arbitrary cardinality, then the following holds:
\[
\ind{m}{\mc{F}} \models \alpha_\Phi ~~~\mbox{if, and only if,}~~~ \bigcupdot_{i \in I} \str{A}_i \models \Phi
\]
Further, if $\cl{\widetilde{C}}$ represents the class of structures of $\cl{C}$ expanded with (all possible interpretations of) $m$ new unary predicate symbols, then the rank of $\alpha_\Phi$ is $O((\iota(m, \cl{\widetilde{C}}, \mso))^{m+1})$.
\end{theorem}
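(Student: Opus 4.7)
I would proceed by induction on the structure of $\Phi$, generalising the classical Feferman--Vaught composition to arbitrary (possibly infinite) index sets $I$. The central leverage is the \emph{monadic} nature of the type-indicator vocabulary $\tau_{m,\cl{C}}$: an $\fo$ sentence over a purely unary vocabulary is, up to equivalence, a Boolean combination of counting assertions over Boolean combinations of unary predicates, and such counting suffices to compose $\mso[m]$-types even when the family is infinite.

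The induction should be strengthened to cover $\mso$ formulas $\chi(\bar{X})$ with free monadic second-order variables (free first-order variables being subsumed by requiring the corresponding set to be a singleton and rewriting atoms accordingly). For such a $\chi$ of rank $m$, write $\cl{C}_{\bar{X}}$ for the class of structures of $\cl{C}$ expanded with all possible interpretations of $\bar{X}$. The inductive claim is that there is an $\fo$ sentence $\alpha_\chi$ over $\tau_{m, \cl{C}_{\bar{X}}}$ equivalent to $\chi$ on disjoint unions of families in $\cl{C}_{\bar{X}}$. The base cases (atomic and negated atomic formulas) are handled immediately: such a formula refers only to elements of a single member of the family, so its truth is determined by the $\mso[0]$-type of that member; the Boolean connectives are handled termwise.

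The heart of the argument is the quantifier step $\chi = \exists X\, \psi(\bar{X}, X)$, where $\psi$ has rank $m-1$. By the inductive hypothesis we have $\alpha_\psi$ over the type-indicator vocabulary for the class $\cl{C}_{\bar{X}, X}$. For a family $\mc{F} = ((\str{A}_i, \bar{S}_i))_{i \in I}$ drawn from $\cl{C}_{\bar{X}}$, selecting $X \subseteq \bigcupdot_i \str{A}_i$ is equivalent to selecting $X_i = X \cap \str{A}_i$ for each $i$, which determines the $\mso[m-1]$-type of each $(\str{A}_i, \bar{S}_i, X_i)$. The key structural fact --- a standard consequence of $\mso$ \ef{} games --- is that the set of $\mso[m-1]$-types realisable as $X_i$ ranges over subsets of $\str{A}_i$ depends only on the $\mso[m]$-type of $(\str{A}_i, \bar{S}_i)$. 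Thus $\alpha_\chi$ is a finite disjunction, indexed by ``refinement profiles'' that assign to each class $T \in \mdelta{m}{\cl{C}_{\bar{X}}}$ a distribution of its elements among the refinement classes in $\mdelta{m-1}{\cl{C}_{\bar{X}, X}}$ achievable from $T$, of an $\fo$ conjunction that (i) verifies this distribution is realisable in $\ind{m}{\mc{F}}$ via counting quantifiers over the unary predicates of $\tau_{m,\cl{C}_{\bar{X}}}$, and (ii) translates $\alpha_\psi$ --- itself a Boolean combination of counting assertions over $\tau_{m-1,\cl{C}_{\bar{X},X}}$ --- into a corresponding $\fo$ condition over $\tau_{m,\cl{C}_{\bar{X}}}$ using the chosen profile.

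The main obstacle I anticipate is managing the rank bound. Each quantifier step multiplies the rank roughly by $|\mdelta{m}{\cl{C}_{\bar{X}}}| \le \iota(m,\cl{\widetilde C},\mso)$ through the disjunction over refinement profiles, plus a constant overhead from counting layers. Iterating across the at most $m+1$ quantifier levels of $\Phi$ yields the claimed $O(\iota(m,\cl{\widetilde C},\mso)^{m+1})$ bound. The most delicate point is uniformity between finite and infinite $|I|$: this is again ensured by the monadic vocabulary, since the $\fo$-theory of a unary structure at a given rank is determined by a bounded-threshold multiset of counts of its atomic-type cells, which composes cleanly under the induction.
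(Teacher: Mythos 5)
Your proposal is essentially a self-contained reconstruction of the proof of the cited result (Theorem 14 of Elberfeld--Grohe--Tantau), and its ingredients are the right ones: the reduction of the $\exists X$ step to choosing $X_i = X \cap A_i$ independently per component, the \ef-game fact that the set of achievable $\mso[m-1]$-types of $(\str{A}_i, \bar{S}_i, X_i)$ depends only on the $\mso[m]$-type of $(\str{A}_i, \bar{S}_i)$, and the observation that an $\fo$ sentence over a monadic partition vocabulary is determined by threshold-capped counts of the cells, which is what makes the composition uniform across finite and infinite $I$. The paper itself does not reprove the equivalence --- it imports it wholesale and only supplies the rank analysis by walking through the published proof (their Lemma 8 is precisely your ``translate $\alpha_\psi$ into counting conditions over the coarser indicator'' step) --- so you have done strictly more work, but along the same route the paper relies on. One small misattribution in your rank accounting: the disjunction over refinement profiles contributes nothing to quantifier rank (rank of a Boolean combination is the maximum of the ranks of its members); the factor of $\iota(m, \cl{\widetilde{C}}, \mso)$ per quantifier level comes instead from the feasibility check, which must count elements of each of the $|\tau_{m-1, \cdot}|$-many coarse type predicates up to a threshold of roughly $\rank{\alpha_\psi}$, yielding the recurrence $\rank{\alpha_\chi} \leq \rank{\alpha_\psi} \cdot |\tau_{m-1, \cdot}| + 1$ that the paper extracts and iterates to get $O((\iota(m, \cl{\widetilde{C}}, \mso))^{m+1})$. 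You reach the correct bound regardless, and the remaining gaps (lifting subformula translations of differing ranks to the common vocabulary $\tau_{m, \cl{C}}$, and the rank-2 base case for $\mathrm{elem}$ atoms, which needs an $\exists\forall$ to say the witnessing singletons all sit in one component while being empty elsewhere) are routine bookkeeping.
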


\begin{remark}
In~\cite{FO-MSO-coincide}, the result is actually stated for $\Phi$ which does not contain any $\fo$ variables and whose atomic formulae, instead of being the usual atomic formulae (of the form $x_1 = x_2$, $Y(x)$ for an $\mso$ variable $Y$, and $R(x_1, \ldots, x_r)$ where $R \in \tau$ and $x, x_1, \ldots, x_r$ are $\fo$ variables), are instead of the ``second order" forms $\empty(X)$ and $\elem(X_1, \ldots, X_r, Z)$ where $X, X_1, \ldots, X_r$ are $\mso$ variables and $Z$ is either an $\mso$ variable or a predicate of $\tau$, with $r$ being the arity of $Z$. The semantics for these atomic forms is intuitive: $\empty(X)$ holds for set $P$ if $P$ is empty, and $\elem(X_1, \ldots, X_r, Z)$ holds for $X_i = P_i$ and $Z = Q$ if $|P_i| = 1$ for $1 \leq i \leq r$ and $P_i \times \cdots \times P_r \subseteq Q$. Every "usual" $\mso$ formula can be converted into an equivalent $\mso$ formula over the mentioned second order atomic formulae, without any change of quantifier rank (see~\cite[page 4]{FO-MSO-coincide}). We have hence recalled~\cite[Theorem 14]{FO-MSO-coincide} in the form stated above in Theorem~\ref{thm:FV-decomposition} which features $\Phi$ as a usual $\mso$ formula, since these are the kinds of formulae which we work with in this paper. 
\end{remark}

We provide here the justification for the last statement of Theorem~\ref{thm:FV-decomposition} which does not appear explicitly in~\cite{FO-MSO-coincide} but is indeed a consequence of the proof of~\cite[Theorem 14]{FO-MSO-coincide}. We refer the reader to~\cite[Section 3.1, pp. 6 -- 9]{FO-MSO-coincide} to find the formulae and other constructions we refer to in our description here. 

We first observe in the proof of Lemma 8 of~\cite{FO-MSO-coincide}, that the ``capping" constant $C$ for the formula $\alpha$ is simply the rank of $\alpha$, since $\alpha$ is an FO sentence over a monadic vocabulary $\tau_\Sigma$ (and we have also seen a similar such result in Lemma~\ref{lemma:monadic-structures}). Then the number $n$  mentioned in the proof is at most $\mbox{rank}(\alpha) \cdot |\tau_\Sigma|$, whereby the rank of $\beta_z$, and hence the rank of $\beta_\alpha$, is at most $\mbox{rank}(\alpha) \cdot |\mbox{vocab}(\alpha)| + 1$, where $\mbox{vocab}(\alpha)$  denotes the vocabulary of $\alpha$, namely $\tau_\Sigma$. Call this observation (*).

We now come to the proof of Theorem 14 of~\cite{FO-MSO-coincide}, and make the following observations about the rank of $\alpha_\Phi$ following the inductive construction of $\alpha_\Phi$ as given in the proof. For the base cases, since $\gamma^{\mathrm{L}}_\Psi(i)$ is a quantifier-free formula for any $\Psi$, we get that if $\Phi := \empty(X)$, then rank of $\alpha_\Phi$ is 1, and if $\Phi := \elem(X_1, \ldots, X_r, Z)$ or $\Phi := \elem(X_1, \ldots, X_r, R)$, then the rank of $\alpha_\Phi$ is 2 since the width $w$ is 0 by our assumption. If $\Phi$ is a Boolean combination of a set of formulae, then $\mbox{rank}(\alpha_\Phi)$ is the maximum of the ranks of the formulae in the mentioned set. We now come to the non-trivial case when $\Phi := \exists X (\Phi')$. 

We see from~\cite[page 9, para 2]{FO-MSO-coincide} that the rank of $\alpha_\Phi$ is the maximum of the ranks of $\alpha_C$ where $\alpha_C$ is obtained from $\beta_{\alpha_{\Phi'}}$ (denoted as simply $\beta$ in the proof as a shorthand) by substituting the atoms $T(i)$ with the quantifier-free formula $\gamma^{\mathrm{L}}_\Psi(i)$ for a suitable $\Psi$. Then $\mbox{rank}(\alpha_\Phi) = \mbox{rank}(\beta_{\alpha_{\Phi'}})$. It follows from (*) above that $\mbox{rank}(\beta_{\alpha_{\Phi'}}) \leq \mbox{rank}(\alpha_{\Phi'}) \cdot |\mbox{vocab}(\alpha_{\Phi'})| + 1$; call this inequality (**).  
Let $\equiv_{r, \mathrm{L}}$ denote the equivalence relation that relates two structures over the same vocabulary iff they agree on all $\mathrm{L}$ sentences (over the vocabulary of the structures) of rank at most $r$. Then the vocabulary of $\alpha_{\Phi'}$ is the set of all equivalence classes of the $\equiv_{q - 1, \mathrm{L}}$ relation over all structures over the vocabulary of the family $F$ (where $F$ is as in the statement of~\cite[Theorem 14]{FO-MSO-coincide}), expanded with (all possible interpretations of) $d$ set predicates, where $d$ is the number of free variables of $\Phi'$ and $q - 1$ is the rank of $\Phi'$. Here we now importantly observe that if the structures of the family $F$ come from a class $\cl{C}$, then it is sufficient to consider just those equivalence classes of the $\equiv_{q - 1, \mathrm{L}}$ relation that are non-empty when restricted to the structures of $\cl{C}$ expanded with $d$ set predicates. Then if $\Phi'$ is a subformula of a rank $m$ $\mathrm{L}$ sentence $\Phi$ over a vocabulary $\tau$ and we are interested only in a given class $\cl{C}$ of $\tau$-structures and expansions of these with set predicates, then the size of the vocabulary of $\alpha_{\Phi'}$ is at most the index of the $\equiv_{m, \mathrm{L}}$ equivalence relation over the class $\widetilde{\cl{C}}$ of structures of $\cl{C}$ expanded with (all possible interpretations of exactly) $m$ set predicates. Applying this observation to (**) iteratively, we then get that if $\mathrm{L} = \mso$ and $\lambda = \iota(m, \cl{\widetilde{C}})$ then
\[
\begin{array}{lll}
\mbox{rank}(\alpha_\Phi) &  \leq & 1 + \lambda \cdot \Big(1 + \lambda \cdot \big( 1 + \ldots \cdot (1 + 2 \cdot \lambda)\big)\Big)\\
     & \leq &  1 + \lambda + \ldots + \lambda^{m-1} + 2 \cdot \lambda^m\\
     & \leq & 2 \cdot \lambda^{m+1}\\
     & = & O((\iota(m, \cl{\widetilde{C}}))^{m+1})
\end{array}
\]
showing the last statement of Theorem~\ref{thm:FV-decomposition}.

\section{Small models and elementary bounds}\label{section:elementary-bounds}

Following  is the central result of this section.

\begin{theorem}\label{thm:shrub-depth-DLS-and-index-bound}
Let $d, p, r \in \mathbb{N}$ be given. There exists an increasing function $h: \mathbb{N} \rightarrow \mathbb{N}$ such that the following are true for each $m \in \mathbb{N}_+$.  
\vspace{1pt}\begin{enumerate}
    \item  
    For every graph  $G \in \tm{r, p}{d}$, there exists $H \in \tm{r, p}{d}$ such that (i) $H \subseteq G$. (ii) $|H|$ is at most $\tower{d, h(d) \cdot m \cdot (m + \log r + \log p)}$, and (ii) $H \mequiv{m} G$. \label{thm:shrub-depth-DLS-and-index-bound:DLS} 
    \item 
    The index $\iota(m, \tm{r, p}{d}, \mso)$ of the $\mequiv{m}$ relation over $\tm{r, p}{d}$ is at most $\tower{d+1, h(d) \cdot m^2 \cdot (\log r + \log p)^2}$.\label{thm:shrub-depth-DLS-and-index-bound-for-m:index}
\end{enumerate}
\end{theorem}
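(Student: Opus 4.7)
The strategy is to reduce part~(1) to the analogous statement for trees in $\cl{T}_{d, r\cdot p + 1}$, where ``substructure'' means \emph{leaf-hereditary subtree}, and then transfer using the $\fo$ interpretation $\Xi_{S, p}$. Given $G = \Xi_{S, p}(\tree{s}) \in \tm{r, p}{d}$, I will find a leaf-hereditary subtree $\tree{s}'$ of $\tree{s}$ with $\tree{s}' \mequiv{m + q} \tree{s}$ for $q = \rank{\Xi_{S, p}} = O(d)$, whose size is bounded by the required tower function. The preservation properties of $\Xi_{S, p}$ recalled at the end of Section~\ref{section:shrub-depth} then yield that $\Xi_{S, p}(\tree{s}')$ is a labeled induced subgraph of $G$, lies in $\tm{r, p}{d}$, and is $\mso[m]$-equivalent to $G$.

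The tree statement I prove by induction on $d$. The case $d = 0$ is trivial, since trees of height $0$ are single labeled nodes. For $d \geq 1$, view $\tree{t} \in \cl{T}_{d, r \cdot p + 1}$ as its root together with the forest $\forest{f} = \bigcupdot_{i \in I} \tree{t}_i$ of its child subtrees, each in $\cl{T}_{d - 1, r\cdot p + 1}$. Let $m' = m + q_0$, where $q_0$ is a fixed buffer absorbing the rank cost of attaching a root to a forest. By the induction hypothesis, each $\tree{t}_i$ admits a leaf-hereditary subtree $\tree{t}'_i$ of bounded size with $\tree{t}_i \mequiv{m'} \tree{t}'_i$; these replacements can be carried out in parallel to produce a forest $\forest{f}' = \bigcupdot_i \tree{t}'_i$ with $\forest{f} \mequiv{m'} \forest{f}'$ by Theorem~\ref{thm:FV-decomposition}. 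Next I consider the $\mso[m']$-type indicator of $(\tree{t}'_i)_{i \in I}$: this is a structure over a monadic vocabulary of size $\kappa := \iota(m', \cl{T}_{d - 1, r\cdot p + 1}, \mso)$. By Theorem~\ref{thm:FV-decomposition}, the $\mso[m']$-theory of $\forest{f}'$ is determined by the $\fo[R]$-theory of this indicator with $R = O(\kappa^{m' + 1})$. A standard fact about monadic vocabularies states that $\fo[R]$-types of such structures depend only on the multiplicity of each predicate, capped at $R$; hence one can discard all but $R$ representatives from each $\mequiv{m'}$-class among the $\tree{t}'_i$'s without changing the $\fo[R]$-type of the indicator, and hence without changing the $\mso[m']$-type of the forest. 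Reattaching the root produces a leaf-hereditary subtree of $\tree{t}$ of size at most $1 + R \cdot \kappa \cdot N_{d-1}(m')$, giving the recurrence defining the bound $N_d$ in part~(1).

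For part~(2), one route is simply to count: there are at most $2^{r^2 d}$ signatures $S \subseteq [r]^2 \times [d]$, and by part~(1) every $\mso[m]$-type in $\tm{r, p}{d}$ is realized by a graph of size at most $N(m) := \tower{d, h(d) \cdot m \cdot (m + \log r + \log p)}$, so $\iota(m, \tm{r, p}{d}, \mso)$ is bounded by $2^{r^2 d} \cdot p^{N(m)} \cdot 2^{\binom{N(m)}{2}}$, which simplifies to a $(d+1)$-fold exponential of the stated form. Alternatively, the same monadic-capping argument yields directly a recurrence $\iota_d(m) \leq (r\cdot p + 1) \cdot (R + 2)^{\kappa}$ at the tree level, which transfers to graphs via $\Xi_{S, p}$.

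The main obstacle is the bookkeeping required to keep the tower growing by exactly one level per increment of $d$. Since $\log R = O(m' \log \kappa)$ and monadic capping inflates the index at height $d$ by roughly $(R + 2)^{\kappa}$, one gets $\log \iota_d$ of the same order (up to a polynomial factor in $m$) as $\iota_{d-1}$, which is precisely what advances the tower by one level. The buffer $q_0$ must be chosen large enough to absorb the $O(d)$ rank overhead of $\Xi_{S, p}$ and of the root-reattachment step, and this is where the prefactor $h(d)$ in the bound enters.
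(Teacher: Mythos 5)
Your proposal is correct and follows essentially the same route as the paper: reduce to $\cl{T}_{d, r\cdot p+1}$ via the interpretation $\Xi_{S,p}$, induct on height by passing to the forest of child subtrees, apply the Feferman--Vaught composition of Theorem~\ref{thm:FV-decomposition} to the monadic $\mso[m]$-type indicator, and invoke the capping property of monadic structures (the paper's Lemma~\ref{lemma:monadic-structures}) to bound the number of children per type. The only (immaterial) differences are that the paper shrinks the index set before shrinking the subtrees rather than after, and bounds the index in part~(2) by counting small trees rather than small graphs.
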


We prove Theorem~\ref{thm:shrub-depth-DLS-and-index-bound} by first show the following core result for $\cl{T}_{d, p}$, and then transferring the latter to Theorem~\ref{thm:shrub-depth-DLS-and-index-bound} using $\fo$ interpretations. 
\begin{theorem}\label{thm:DLS-and-index-bound-for-m}
Let $d, p \in \mathbb{N}$ be given. There exists an increasing function $g: \mathbb{N} \rightarrow \mathbb{N}$ such that if $\zeta_{d, p}: \mathbb{N} \times \mathbb{N} \rightarrow \mathbb{N}$ is the function given by $\zeta_{d, p}(n_1, n_2) = \tower{n_2, g(d) \cdot (n_1 + 1) \cdot (n_1 + \log p)}$, then the following are true for each $m \in \mathbb{N}_+$.  
\vspace{1pt}\begin{enumerate}
    \item  
    For every tree $\tree{t} \in \cl{T}_{d, p}$, there exists a leaf-hereditary subtree $\tree{t}'$ of $\tree{t}$ such that (i) the heights of $\tree{t}'$ and $\tree{t}$ are the same, (ii) $|\tree{t}'|$ is at most $\zeta_{d, p}(m, d)$, and (ii) $\tree{t}' \mequiv{m} \tree{t}$.\label{thm:DLS-and-index-bound-for-m:DLS} 
    \item 
    The index $\iota(m, \cl{T}_{d, p}, \mso)$ of the $\mequiv{m}$ relation over $\cl{T}_{d, p}$ is at most $\zeta_{d, p}(m, d+1)$ if $d \ge 1$, and is $p$ if $d = 0$. \label{thm:DLS-and-index-bound-for-m:index}
\end{enumerate}
\end{theorem}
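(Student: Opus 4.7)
The proof goes by induction on $d$. The base case $d = 0$ is immediate: $\cl{T}_{0, p}$ consists of $p$-labeled singletons, so $\iota(m, \cl{T}_{0, p}, \mso) = p$ for every $m$, and any such tree is trivially its own leaf-hereditary subtree of the required size. For the inductive step I combine the inductive hypothesis applied to the child subtrees of the root with the Feferman--Vaught composition (Theorem~\ref{thm:FV-decomposition}) and a shrinking lemma for monadic structures.

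\textbf{Monadic shrinking lemma.} As a preliminary I will establish (or invoke, following the ``Lemma~\ref{lemma:monadic-structures}'' alluded to in the remark after Theorem~\ref{thm:FV-decomposition}) that every structure over a monadic vocabulary with $k$ unary predicates admits an $\fo[q]$-equivalent substructure of size at most $q \cdot 2^k$: partition the universe into its $2^k$ atomic color types and retain at most $q$ elements per type; a routine EF-game argument preserves the $\fo[q]$-theory. It follows that the number of $\fo[q]$-theories of such monadic structures is at most $(q+1)^{2^k}$, which will feed into the index bound in part 2.

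\textbf{Inductive step.} Fix $m \in \mathbb{N}_+$ and take $\tree{t} \in \cl{T}_{d+1, p}$ with root labeled $\ell$ and family $\mc{F} = (\tree{t}_i)_{i \in I}$ of child subtrees, each in $\cl{T}_{d, p}$. For part 1 I proceed in two stages. First, apply the inductive DLS to each $\tree{t}_i$ to obtain a leaf-hereditary $\tree{t}_i' \mequiv{m} \tree{t}_i$ of size at most $\zeta_{d, p}(m, d)$; the resulting family $\mc{F}'$ has the same $\mso[m]$-type indicator as $\mc{F}$ and so, by Theorem~\ref{thm:FV-decomposition}, its disjoint union is $\mso[m]$-equivalent to that of $\mc{F}$. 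Second, let $q$ be the maximum rank of $\alpha_\Phi$ over all $\mso[m]$ sentences $\Phi$, which by Theorem~\ref{thm:FV-decomposition} is bounded by $O\bigl(\iota(m, \widetilde{\cl{T}_{d, p}}, \mso)^{m+1}\bigr)$, and which by the inductive part 2 applied with $p$ replaced by $p \cdot 2^m$ (to absorb the $m$ unary expansions into the label count) is elementary in $m$. The monadic shrinking lemma applied to $\ind{m}{\mc{F}'}$ yields a substructure of size at most $q \cdot 2^{\iota(m, \cl{T}_{d, p}, \mso)}$, indexing a subfamily $\mc{F}''$ of $\mc{F}'$ whose disjoint union remains $\mso[m]$-equivalent to that of $\mc{F}$. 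Attaching a root with label $\ell$ to $\mc{F}''$ gives a leaf-hereditary subtree $\tree{t}'$ of $\tree{t}$, and a one-move EF argument on the roots shows $\tree{t}' \mequiv{m} \tree{t}$. For part 2, the $\mso[m]$-type of any tree in $\cl{T}_{d+1, p}$ is determined by the root label together with the $\fo[q]$-type of the monadic type indicator of its child forest, yielding an index bound $p \cdot (q+1)^{2^k}$ with $k = \iota(m, \cl{T}_{d, p}, \mso) \leq \zeta_{d, p}(m, d+1)$.

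\textbf{Main obstacle.} The substantive work is the tower bookkeeping: one must choose $g$ uniformly so that the single exponential ``$2^k$'' in the monadic count, the factor $\iota^{m+1}$ from the FV rank bound, and the blow-up from $\cl{T}_{d, p}$ to $\widetilde{\cl{T}_{d, p}}$ (equivalently $\cl{T}_{d, p \cdot 2^m}$) all collapse cleanly into exactly one additional tower level going from $d$ to $d+1$, and simultaneously that the induction on part 2 feeds back into part 1 without cascading across induction levels. This is precisely where the $(n_1{+}1)(n_1 + \log p)$ shape of $\zeta_{d, p}(n_1, n_2)$ in the theorem statement becomes essential: the $\log p$ term absorbs the $2^m$ multiplicative blow-up in the label count coming from the expansion to $\widetilde{\cl{T}_{d, p}}$, while the factor $(n_1{+}1)$ absorbs the $(m{+}1)$-power in the FV rank bound.
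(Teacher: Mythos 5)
Your overall strategy is the paper's: induction on the height, Feferman--Vaught composition via the $\mso[m]$-type indicator, and a shrinking argument on that indicator (the paper shrinks the indicator first and then the surviving children, you do it in the opposite order, which is harmless). But there is a genuine quantitative gap in your monadic shrinking lemma, and it is fatal to the stated tower bounds. You bound the $\fo[q]$-equivalent substructure by $q \cdot 2^k$ and the number of $\fo[q]$-theories by $(q+1)^{2^k}$, using all $2^k$ atomic colour types of a general monadic structure with $k$ predicates. The type indicator, however, is a \emph{partition} structure: each index lies in the interpretation of exactly one predicate of $\tau_{m, \cl{T}_{d,p}}$, so only $k$ colour types are realized and the correct bounds are $O(q \cdot k)$ and $(q+1)^{O(k)}$ --- this is exactly what the paper's Lemma~\ref{lemma:monadic-structures} encodes via the threshold $q_1 = (q-1)\cdot|\sigma|$. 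The difference is one exponential, and it does not ``collapse cleanly'' as your final paragraph asserts: in your inductive step $k = \iota(m, \cl{T}_{d,p}, \mso)$ is already bounded only by a $(d{+}1)$-fold tower, so $2^k$ is a $(d{+}2)$-fold tower, your subfamily (hence $\tree{t}'$) has size a $(d{+}2)$-fold tower where $\zeta_{d+1,p}(m,d{+}1)$ demands a $(d{+}1)$-fold one, and your index bound $p\cdot(q+1)^{2^k}$ lands at a $(d{+}3)$-fold tower where $\zeta_{d+1,p}(m,d{+}2)$ demands $(d{+}2)$. The error compounds at every level of the induction, so no choice of $g$ rescues the statement as written. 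With the linear-in-$k$ bound the arithmetic closes exactly as you describe, and your type-counting route to part 2 (root label times the number of $\fo[q]$-types of indicators, now $(q+1)^{O(k)}$) is a clean alternative to the paper's cruder ``count all trees of size at most $\mu$'' bound.

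A second, smaller omission: condition (i) of part 1 requires $\tree{t}'$ to have the \emph{same height} as $\tree{t}$, and discarding child subtrees can lower the height. The paper handles this by invoking the ``contains a designated element $a$'' clause of the monadic lemma to force retention of an index $i^*$ whose subtree has maximal height $d$; you should add the same provision when you pass from $\mc{F}'$ to $\mc{F}''$.
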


\begin{proof}[Proof of Theorem~\ref{thm:shrub-depth-DLS-and-index-bound}]
Since $G \in \tm{r, p}{d}$, from Section~\ref{section:prelims} there exists a tree model $(\tree{s}, S) \in \treemodel{r, p}{d}$ for $G$ and an $\fo$ interpretation $\Xi_{S, p}$ such that $G = \Xi_{S, p}(\tree{s})$. Let $q = \max\{\max\{\rank{\Xi_{S, p}} $ $ \mid S \subseteq [r]^2 \times [p]\}, \rank{\Omega_{r, p, d}}\}$ where $\Omega_{r, p, d}$ is as defined in Section~\ref{section:shrub-depth}; we know that $q = O(d)$. Since $\tree{s}$ can be seen as a tree in $\cl{T}_{d, r \cdot p + 1}$, by Theorem~\ref{thm:DLS-and-index-bound-for-m}(\ref{thm:DLS-and-index-bound-for-m:DLS}), there exists a leaf-hereditary subtree $\tree{s}'$ of $\tree{s}$ such that (i) $|\tree{s}'| \leq \zeta_{d, r \cdot p  + 1}(q + m, d)$, and (ii) $\tree{s}' \mequiv{q + m} \tree{s}$. Then $\tree{s}'$ models $\Omega_{r, p, d}$ and hence $(\tree{s}', S) \in \treemodel{r, p}{d}$ is a tree model for $H = \Xi_{S, p}(\tree{s}') \in \tm{r, p}{d}$. From the properties of the interpretation $\Xi_{S, p}$ (cf. Section~\ref{section:shrub-depth}), we infer the following: (i) Since $\tree{s}'$ is a leaf-hereditary subtree of $\tree{s}$, we have $H \subseteq G$; (ii) Since $V(H)$ is the set of leaves of $\tree{s}'$, we have $|H| \leq |\tree{s}'| \leq \zeta_{d, r \cdot p  + 1}(q + m, d) = \tower{d, g(d) \cdot (q + m + 1) \cdot (q + m + \log (r \cdot p + 1)} \leq  \tower{d, g(d) \cdot q^2 \cdot m \cdot (m + \log r + \log p)} \leq \tower{d, h(d) \cdot m \cdot (m + \log r + \log p)}$ where $h(d) = c_0 \cdot g(d) \cdot d^2$,  $q \leq c_1 \cdot d $ and $c_0 \ge (c_1)^2$; (iii) Since $\tree{s}' \mequiv{q+m} \tree{s}$, we have $H \mequiv{m} G$. 
\newcommand{\tms}[2]{\ensuremath{\mathrm{TM}^S_{#1}(#2)}}

We now look at the index of the $\mequiv{m}$ relation over $\tm{r, p}{d}$. For a given signature $S \subseteq [r]^2 \times [p]$, denote $\tms{r, p}{d}$ denote the subclass of $\tm{r, p}{d}$ of those graphs that have a tree model $(\tree{t}, S) \in \treemodel{r, p}{d}$, and $\mathrm{Mod}(\Omega_{r, p, d})$ denote the class of models of $\Omega_{r, p, d}$ in $\cl{T}_{d, r\cdot p +1}$. Then $\Xi_{S, p}$ is a surjective map from $\mathrm{Mod}(\Omega_{r, p, d})$ to $\tms{r, p}{d}$. For $q$ as above, since any equivalence class of the $\mequiv{q+m}$ relation over $\mathrm{Mod}(\Omega_{r, p, d})$ gets mapped by $\Xi_{S, p}$ to a subclass of an equivalence class of the $\mequiv{m}$ relation over $\tms{r, p}{d}$, we get by the surjectivity of $\Xi_S$ that $i(m, \tms{r, p}{d}, \mso) \leq i(q + m, \mathrm{Mod}(\Omega_{r, p, d}), \mso) \leq i(c_1 \cdot d + m, \cl{T}_{d, r\cdot p + 1}, \mso)$. Then $\iota(m, \tm{r, p}{d}, \mso) \leq \sum_{S \subseteq [r]^2 \times [p]} i(m, \tms{r, p}{d}, \mso) \leq 2^{r^2 \cdot p} \cdot i(c_1 \cdot d + m, \cl{T}_{d, r\cdot p + 1}, \mso) \leq 2^{r^2 \cdot p} \cdot \tower{d+1, g(d) \cdot (c_1 \cdot d + m +1) \cdot (c_1 \cdot d + m + \log (r \cdot p + 1))} \leq \tower{d+1, h(d) \cdot (m \cdot (\log r  + \log p))^2}$. 
\end{proof}

We now look at the proof of Theorem~\ref{thm:DLS-and-index-bound-for-m}. The proof uses the following lemma in a central way. The lemma can be verified using a simple {\ef} game argument.

% (A sketch of the proof of a special case of part (\ref{lemma:monadic-structures:1}) of the lemma can be found in~\cite[Section 3]{abhisekh-CSL17}.)

\begin{lemma}\label{lemma:monadic-structures}
Let $\sigma$ be a finite vocabulary consisting of only monadic relation symbols. Let $\str{A}$ be an arbitrary (finite or infinite) $\sigma$-structure such that every element of $\str{A}$ is in the interpretation of exactly one predicate of $\sigma$. Let $a$ be a given element of $\str{A}$, and let $q \in \mathbf{N}_+$ and $q_1 = (q - 1) \cdot |\sigma|$. Then the following are true:
\vspace{1pt}\begin{enumerate}
    \item If $|\str{A}| > q_1$, then for every cardinal $\lambda$ such that $q_1 < \lambda \leq |\str{A}|$, there exists a substructure $\str{B}$ of $\str{A}$ such that (i) $\str{B}$ contains $a$, (ii) $|\str{B}| = \lambda$, and (iii) $\str{B} \fequiv{q} \str{A}$.\label{lemma:monadic-structures:1}
    % \item There exists a substructure $\str{B}$ of $\str{A}$ such that (i) $|\str{B}| \leq p_1 = q \cdot 2^{|\sigma|}$, and (ii) $\str{B} \fequiv{m} \str{A}$.\label{lemma:monadic-structures:1}
    \item If $|\str{A}| > q_1$, then for every cardinal $\lambda \ge |\str{A}|$, there exists a $\sigma$-structure $\str{B}$ containing $\str{A}$ as substructure such that (i) $|\str{B}| = \lambda$, and (ii) $\str{B} \fequiv{q} \str{A}$. Further, for any $T \in \sigma$ such that $|T^{\str{A}}| \ge q$, we can take $\str{B}$ to be a structure such that all elements of $\str{B}$ not in $\str{A}$ are in $T^{\str{B}}$, and none of these elements is in the interpretation of any other predicate of $\sigma$ in $\str{B}$.
    \label{lemma:monadic-structures:2}
\end{enumerate}
\end{lemma}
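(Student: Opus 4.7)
The plan is to reduce both parts to the classical Ehrenfeucht-Fra\"iss\'e characterization of $\fo[q]$-equivalence over purely monadic vocabularies. Specifically, when each element of a $\sigma$-structure lies in exactly one monadic predicate, a standard $q$-round EF-game argument yields that $\str{A} \fequiv{q} \str{A}'$ if and only if for every $T \in \sigma$, the cardinalities $|T^{\str{A}}|$ and $|T^{\str{A}'}|$ are either equal or both at least $q$. Duplicator's strategy is to match Spoiler's moves by color, always playing a fresh same-color element when possible; this succeeds so long as neither side runs out of a color below the threshold $q$, and Spoiler's winning strategy when the profiles disagree below $q$ is to exhaust the smaller side.

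For part 1 (downward), I would build $\str{B}$ greedily. First place $a$ into $\str{B}$; then for each $T \in \sigma$ with $|T^{\str{A}}| < q$, put all of $T^{\str{A}}$ into $\str{B}$; the elements forced so far number at most $q_1 = (q-1)|\sigma|$. For each $T$ with $|T^{\str{A}}| \geq q$, include at least $q$ elements of $T^{\str{A}}$ (containing $a$ if $a \in T$). The hypothesis $|\str{A}| > q_1$ together with pigeonhole guarantees that at least one ``big'' predicate exists, and $\lambda > q_1$ provides enough budget to meet the threshold on every big predicate; any remaining slack up to $|\str{B}| = \lambda$ is absorbed by adjoining additional elements from the big predicates (which works for finite and infinite $\lambda$ alike, using cardinal arithmetic for the latter). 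By construction the cap-$q$ color profile of $\str{B}$ matches that of $\str{A}$, so $\str{B} \fequiv{q} \str{A}$.

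For part 2 (upward), I would form $\str{B}$ by adjoining to $\str{A}$ enough fresh elements to reach cardinality $\lambda$, placing them all into the interpretation of a single predicate $T$ with $|T^{\str{A}}| \geq q$ (such a $T$ exists by $|\str{A}| > q_1$, or is given explicitly in the ``further'' clause). No other predicate's interpretation changes and $|T^{\str{B}}| \geq q$, so the cap-$q$ profile is preserved and $\str{B} \fequiv{q} \str{A}$. The main obstacle I anticipate is bookkeeping in part 1: one must verify that under $|\str{A}| > q_1$ and $q_1 < \lambda \leq |\str{A}|$ the greedy construction can actually land at cardinality exactly $\lambda$ without undershooting the minimum needed to preserve the cap-$q$ profile or overshooting $\lambda$. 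For infinite $\lambda$ this is immediate from cardinal arithmetic; for finite $\lambda$ one needs to trace carefully how the budget is spent between ``small'' predicates (all of whose elements are forced) and ``big'' predicates (where one has flexibility). Part 2 is then essentially routine once the EF characterization is in hand.
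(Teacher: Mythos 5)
Your overall strategy --- reducing both parts to the cap-$q$ color-profile characterization of $\fequiv{q}$ over partitioned monadic vocabularies via an EF game --- is exactly the ``simple {\ef} game argument'' the paper invokes (it gives no further detail), and your part~2 is correct. The gap is precisely the bookkeeping point you flagged in part~1, and it does not resolve: the claim that ``$\lambda > q_1$ provides enough budget to meet the threshold on every big predicate'' is false once two or more predicates are big. To preserve the cap-$q$ profile, $\str{B}$ must retain all of every small color class and at least $q$ elements of every big one, so the minimum possible size of a suitable $\str{B}$ is $\sum_{|T^{\str{A}}|<q}|T^{\str{A}}| + q\cdot b$, where $b$ is the number of big predicates; this can be as large as $q\cdot|\sigma|$, which exceeds $q_1+1=(q-1)\cdot|\sigma|+1$ whenever $b\ge 2$. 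Concretely, take $\sigma=\{T_1,T_2\}$, $q=2$ (so $q_1=2$) and $\str{A}$ with five elements of each color: for $\lambda=3$, every three-element substructure has at most one element of some color and is separated from $\str{A}$ by the rank-$2$ sentence $\exists x\exists y(x\ne y\wedge T_i(x)\wedge T_i(y))$. So no amount of careful budget-tracing closes this; for finite $\lambda$ with $(q-1)\cdot|\sigma| < \lambda < \sum_{T\in\sigma}\min(|T^{\str{A}}|,q)$ the conclusion of part~1 as stated simply fails.

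The repair is to take $q_1 = q\cdot|\sigma|$ rather than $(q-1)\cdot|\sigma|$ (equivalently, to restrict part~1 to $\lambda \ge \sum_{T\in\sigma}\min(|T^{\str{A}}|,q)$). With that threshold your greedy construction goes through verbatim: the mandatory elements number at most $q\cdot|\sigma| < \lambda$, the remaining slack is absorbed by big classes (for infinite $\lambda$, note that $|\str{A}|$ is a finite sum of color-class cardinalities, so some single big class already has cardinality $|\str{A}|\ge\lambda$), and the cap-$q$ profile is preserved. This off-by-one propagates harmlessly into the paper's applications of the lemma, where the resulting $|\str{B}|$ is in any case only ever bounded above by quantities of the form $q\cdot|\sigma|$. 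You should state the corrected threshold explicitly rather than asserting that the arithmetic works out under the hypotheses as given.
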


We will also need the following lemma. The proof is easy and is skipped.
\begin{lemma}\label{lem:tree-forest-transfer}
Let $\tree{t}_i$ for  $i = 1, 2$ be a rooted tree and let $\forest{f}_i$ be the forest of rooted trees obtained by removing the root of $\tree{t}_i$. Then $\tree{t}_1 \mequiv{m} \tree{t}_2$ if, and only if, $\forest{f}_1 \mequiv{m} \forest{f}_2$. %(\look{look: $\mc{L}$})
\end{lemma}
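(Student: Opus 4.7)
The plan is to prove both directions via the standard Ehrenfeucht-Fra\"iss\'e characterization of $\mequiv{m}$ for $\mso$: two structures are $\mequiv{m}$ if and only if duplicator has a winning strategy in the $m$-round $\mso$ EF game on them. I will lift duplicator strategies between the tree game and the forest game, using two observations: (i) in $\tree{t}_i$, the root is distinguished by the unary predicate $\rootref$, so any strategy preserving atomic formulas preserves the property of being/not being a root; and (ii) in $\tree{t}_i$, $\root{\tree{t}_i}$ is adjacent precisely to those elements that are marked by $\rootref$ in $\forest{f}_i$, namely the roots of the component trees of $\forest{f}_i$.

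For the forward direction, let $\sigma$ be a winning strategy for duplicator on $\tree{t}_1$ versus $\tree{t}_2$. To play on $\forest{f}_1$ versus $\forest{f}_2$, duplicator lifts every spoiler move from the forest to the corresponding tree unchanged (an element move is automatically a non-root, and a set move $S$ does not contain the root), applies $\sigma$ to obtain a response in the other tree, and projects the response back to the forest by discarding the root from it if present. Since spoiler only plays non-root elements, duplicator's element responses under $\sigma$ are also non-roots (as $\sigma$ preserves $\rootref$), and removing the root from a response set never affects the membership constraints between element and set moves. Hence the projected strategy wins on the forests.

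For the backward direction, let $\tau$ be a winning strategy on $\forest{f}_1$ versus $\forest{f}_2$. In the tree game, if spoiler picks $\root{\tree{t}_i}$, duplicator picks $\root{\tree{t}_{3-i}}$; if spoiler picks a non-root element, duplicator uses $\tau$ on the corresponding forest. For a set move $S \subseteq V(\tree{t}_1)$, duplicator feeds $S \setminus \{\root{\tree{t}_1}\}$ to $\tau$ to obtain a response $S' \subseteq V(\forest{f}_2)$, and returns $S'$ if $\root{\tree{t}_1} \notin S$ and $S' \cup \{\root{\tree{t}_2}\}$ otherwise (symmetric if spoiler plays in $\tree{t}_2$). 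Membership between element moves and set moves is preserved by construction. For the edge relation: between two non-root elements the edges in tree and forest coincide and are preserved by $\tau$; between a root and a non-root element $v$, observation (ii) reduces preservation of $\{\root{\tree{t}_i}, v\} \in E(\tree{t}_i)$ to preservation of $\rootref^{\forest{f}_i}(v)$, which $\tau$ guarantees. The $\rootref$ predicate itself is preserved because roots are always matched to roots.

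The only point I expect to require care is the bookkeeping for set moves and verifying that the induced map between the final configurations is a partial isomorphism; this reduces, as sketched, to the single observation that adjacencies with the tree root translate into the $\rootref$ predicate of the forest. All label predicates transfer verbatim since $\forest{f}_i$ is the labeled substructure of $\tree{t}_i$ obtained by deleting the single root element, so no increase in quantifier rank is incurred in either direction and the equivalence holds at the same rank $m$.
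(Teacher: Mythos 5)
Your backward direction is essentially sound, and it is the only direction the paper actually uses (in the proof of Theorem~\ref{thm:DLS-and-index-bound-for-m} one passes from $\forest{f}' \mequiv{m} \forest{f}$ to $\tree{t}' \mequiv{m} \tree{t}$): adjacency to $\root{\tree{t}_i}$ is recorded as the unary fact $\rootref(v)$ in $\forest{f}_i$, which the forest strategy preserves as an atomic fact about played elements, so re-attaching the two roots to a winning forest position keeps a partial isomorphism, and root moves consume no forest rounds. One caveat you do not address even here: for $p \ge 2$ nothing in $\forest{f}_1 \mequiv{m} \forest{f}_2$ forces $\root{\tree{t}_1}$ and $\root{\tree{t}_2}$ to carry the same label, so strictly the implication needs the extra hypothesis that the two roots have the same label. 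In all of the paper's applications the roots are internal nodes of trees in $\cl{T}_{d, r\cdot p+1}$ and carry the fixed label $r\cdot p+1$, so this is harmless, but it should be said.

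The forward direction of your argument has a genuine gap, and in fact that implication is false at the same rank $m$. The atomic fact you never check is the forest's own $\rootref$ predicate: $\rootref^{\forest{f}_i}(v)$ holds of exactly the children of $\root{\tree{t}_i}$, so preserving it amounts to preserving the \emph{binary} fact $E(\root{\tree{t}_i}, v)$ in the tree, where the root is an element that spoiler never played. A winning tree strategy only guarantees the quantifier-free type of the played tuple; ``$v$ is adjacent to the root'' is a rank-$1$ but not quantifier-free property, so your projected strategy can fail on the last round. Concretely, take $p=1$, let $\tree{t}_1$ be a root with a single child and $\tree{t}_2$ a root--child--grandchild path. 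Both trees realize exactly the two unary types ``root'' and ``non-root'', and rank-$1$ set quantification is vacuous, so $\tree{t}_1 \mequiv{1} \tree{t}_2$; but $\forest{f}_1$ is a single node satisfying $\rootref$ while $\forest{f}_2$ contains a node falsifying it, so $\exists x\, \neg\rootref(x)$ separates the forests and $\forest{f}_1 \not\mequiv{1} \forest{f}_2$. The standard repair is to have duplicator pretend the two roots were played in a round $0$, which yields $\tree{t}_1 \mequiv{m+1} \tree{t}_2 \Rightarrow \forest{f}_1 \mequiv{m} \forest{f}_2$ (equivalently, translate a rank-$m$ sentence about the forest into a rank-$(m+1)$ sentence about the tree by rewriting $\rootref(x)$ as $\exists u(\rootref(u) \wedge E(u,x))$). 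Since the paper invokes only the backward implication, the loss of one rank costs nothing downstream, but your proof of the forward implication as written does not go through.
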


\begin{proof}[Proof of Theorem~\ref{thm:DLS-and-index-bound-for-m}]
We prove the theorem by induction on $d$. The base case of $d = 0$ is trivial to see by taking $g(0) = 1$. Assume as induction hypothesis that the statement is true for $d - 1$ for $d \ge 1$. 

Consider a tree $\tree{t} \in \cl{T}_{d, p}$ of height equal to $d$. Let $\forest{f}$ be the forest of $p$-labeled rooted trees obtained by removing the root of $\tree{t}$. Specifically, let $\forest{f} = \bigcupdot_{i \in I} \tree{s}_i$ where $\tree{s}_i \in \cl{T}_{d-1, p}$ and the (unrooted tree underlying) $\tree{s}_i$ is a child subtree of $\tree{t}$, and $I$ is a possibly infinite set. Let $\cl{S}$ be the class of rooted forests whose constituent trees belong to $\cl{T}_{d-1, p}$; so $\forest{f} \in \cl{S}$. Consider now the $\mso[m]$-type indicator $\ind{m}{\mc{F}}$ for the family $\mc{F} = (\tree{s}_i)_{i \in I}$ and the sentence $\Phi := \Theta_\delta$ for $\delta = \mclass{m}{\cl{S}}(\forest{f})$ (the $\mso[m]$ sentence axiomatizing the $\mequiv{m}$ equivalence class of $\forest{f}$ in $\cl{S}$).
By Theorem~\ref{thm:FV-decomposition}, there exists an FO sentence $\alpha_\Phi$ over the vocabulary $\tau_{m, \cl{T}_{d-1, p}}$ such that 
\[
\ind{m}{\mc{F}} \models \alpha_\Phi~~\mbox{if, and only if,}~~\forest{f} \models \Phi
\]

Now we know from Theorem~\ref{thm:FV-decomposition} that for any $\mso[m]$ sentence $\Psi$ over the vocabulary of $\cl{T}_{d-1, p}$, the sentence $\alpha_\Psi$ given by the theorem has rank that is $O((\iota(m, \widetilde{\cl{T}}_{d-1, p}, \mso))^{m+1})$ where $\widetilde{\cl{T}}_{d-1, p}$ is the expansion of $\cl{T}_{d-1, p}$ with $m$ new unary predicates. Now there is a natural 1-1 correspondence between $\widetilde{\cl{T}}_{d-1, p}$ and $\cl{T}_{d-1, p \cdot 2^m}$, and two $\widetilde{\cl{T}}_{d-1, p}$ structures are $\mso[m]$-equivalent iff their corresponding $\cl{T}_{d-1, p \cdot 2^m}$ structures are. Then the rank of $\alpha_\Psi$ is $O((\iota(m, \cl{T}_{d-1, p \cdot 2^m}, \mso))^{m+1})$. Let $c_0$ be a constant such that $\mbox{rank}(\alpha_\Psi) \leq c_0 \cdot (\iota(m, \cl{T}_{d-1, p \cdot 2^m}, \mso))^{m+1}$ for all $m \ge m_0$. In fact, as shown in Section~\ref{section:FV-comp}, the values of $c_0$ and $m_0$ can be taken to be $c_0 = 2$ and $m_0 = 1$.

Returning to $\alpha_\Phi$, we get that $\mbox{rank}(\alpha_\Phi) \leq c_0 \cdot (\iota(m, \cl{T}_{d-1, p \cdot 2^m}, \mso))^{m+1}$. Now by induction hypothesis, we have that (i) if $d - 1 = 0$, then $\iota(m, \cl{T}_{d-1, p \cdot 2^m}, \mso) =  p \cdot 2^m$, and (ii) if $d > 1$, then  $\iota(m, \cl{T}_{d-1, p \cdot 2^m}, \mso) \leq \zeta_{d-1, p \cdot 2^m}(m, d) = \tower{d, g(d-1) \cdot (m + 1) \cdot (2m + \log p))} \leq \tower{d, 2 \cdot g(d-1) \cdot (m + 1) \cdot (m + \log p)}$. Then the rank of $\alpha_\Phi$ is at most 
\vspace{1pt}\begin{enumerate}
    \item $c_0 \cdot (p \cdot 2^m)^{m+1} \leq  2^{c_0 \cdot (m + 1) \cdot (m + \log p))}$  if $d = 1$, and 
    \item $c_0 \cdot \tower{d,  2 \cdot g(d-1) \cdot (m + 1) \cdot (m + \log p)})^{m+1} \leq \tower{d, c_0 \cdot 4 \cdot g(d-1) \cdot (m + 1) \cdot (m + \log p)}$, if $d > 1$.
\end{enumerate} 

If $\rho_{d, p}: \mathbb{N} \rightarrow \mathbb{N}$ is the function given by $\rho_{d, p}(m) = \tower{d+1, 4 \cdot c_0 \cdot g(d) \cdot (m + 1) \cdot (m + \log p)}$, then we see that in either case above, the rank of $\alpha_\Phi$ is at most $\rho_{d-1, p}(m)$. We also see that $\tau_{m, \cl{T}_{d-1, p}}$ is the vocabulary which contains one unary predicate symbol for every element of $\alltypes{m, \cl{T}_{d-1, p}}$, and only those predicates; then $|\tau_{m, \cl{T}_{d-1, p}}| = \iota(m, \cl{T}_{d-1, p}, \mso)$ which is equal to $p$ if $d - 1 = 0$, and at most $\tower{d, g(d-1) \cdot (m + 1) \cdot (m + \log p))}$ if $d > 1$. Then $\mbox{rank}(\alpha_\Phi) \cdot |\tau_{m, \cl{T}_{d-1, p}}| \leq \rho_{d-1, p}(m) \cdot |\tau_{m, \cl{T}_{d-1, p}}| \leq \tower{d, 5 \cdot c_0 \cdot g(d-1) \cdot (m + 1) \cdot (m + \log p)}$ for all $d \ge 1$.

We observe now that $\ind{m}{\mc{F}}$ is a structure over a finite monadic vocabulary such that each element of its universe is in the interpretation of exactly one predicate in the vocabulary. Let $\tree{s}_{i^*} \in \mc{F}$ for $i^* \in I$ be such that the height of $\tree{s}_{i^*}$ is equal to $d-1$ (there must be such a tree in $\mc{F}$ since height of $\tree{t}$ is equal to $d$). Recall that $I$ is the universe of $\ind{m}{\mc{F}}$. Then by Lemma~\ref{lemma:monadic-structures}(\ref{lemma:monadic-structures:1}), taking $a = i^*$ and $q = \rho_{d-1, p}(m)$ in the lemma, we get that there exists a substructure $\str{B}$ of $\ind{m}{\mc{F}}$ such that (i) $\str{B}$ contains $i^*$, (ii) $|\str{B}| \leq 1 + (q - 1) \cdot |\tau_{m, \cl{T}_{d-1, p}}| \leq q \cdot |\tau_{m, \cl{T}_{d-1, p}}| \leq \tower{d, 5 \cdot c_0 \cdot g(d-1) \cdot (m + 1) \cdot (m + \log p)}$, and (iii) $\str{B} \fequiv{q} \ind{m}{\mc{F}}$. Then $\str{B}$ can be seen as the $\mso[m]$-type indicator $\ind{m}{\mc{F}'}$ of the family $\mc{F}' = (\tree{s}_j)_{j \in I'}$. for a subset $I' \subseteq I$, that contains $i^*$. Then by Theorem~\ref{thm:FV-decomposition}, we have
\[
\ind{m}{\mc{F}'} \models \alpha_\Phi~~\mbox{if, and only if,}~~\bigcupdot_{j \in I'} \tree{s}_j \models \Phi
\]
Since (i) $\ind{m}{\mc{F}'} = \str{B} \fequiv{q} \ind{m}{\mc{F}}$, (ii) $\ind{m}{\mc{F}} \models \alpha_\Phi$, and (iii) $\mbox{rank}(\alpha_\Phi) \leq q$, we have $\ind{m}{\mc{F}'} \models \alpha_\Phi$ and therefore $\bigcupdot_{j \in I'} \tree{s}_j \models \Phi$. Then $\bigcupdot_{j \in I'} \tree{s}_j \mequiv{m} \forest{f}$. 

We now utilize the induction hypothesis for Part (\ref{thm:DLS-and-index-bound-for-m:DLS}). Since $\tree{s}_j \in \cl{T}_{d-1, p}$ for all $j \in I'$, there exists a leaf-hereditary subtree $\tree{s}'_j$ of $\tree{s}_j$ such that (i) the heights of $\tree{s}'_j$ and $\tree{s}_j$ are the same, (ii) $|\tree{s}'_j| \leq \zeta_{d-1, p}(m, d-1)$, and (ii) $\tree{s}'_j \mequiv{m} \tree{s}_j$. Then consider the forest $\forest{f}' = \bigcupdot_{j \in I'} \tree{s}'_j$. Since disjoint union satisfies the Feferman-Vaught composition property (see Section~\ref{section:prelims}), we get that $\forest{f}' \mequiv{m} \bigcupdot_{j \in I'} \tree{s}_j \mequiv{m} \forest{f}$. Further, we have $1 + |\forest{f}'| = 1 + \sum_{j \in I'} |\tree{s}'_j| \leq  1 + |I'| \cdot \max_{j \in I'} |\tree{s}'_j| \leq 1 +  \tower{d, 5 \cdot c_0 \cdot g(d-1) \cdot (m + 1) \cdot (m + \log p)} \cdot \tower{d-1, g(d-1) \cdot (m + 1) \cdot (m + \log p)} \leq \tower{d, 6 \cdot c_0 \cdot g(d-1) \cdot (m + 1) \cdot (m + \log p)}$ for all $d \ge 1$. Let $\tree{t}'$ be the leaf-hereditary subtree of $\tree{t}$ obtained by removing all child subtrees (that are the unrooted versions of) $\tree{s}_j$ for $j \notin I'$, and replacing $\tree{s}_j$ with $\tree{s}'_j$ (again the replacement being for the unrooted versions of the trees) for each $j \in I'$ in $\tree{t}$. Then $\forest{f'}$ is indeed the forest of rooted trees obtained by deleting the root of $\tree{t}'$. Observe that $\forest{f}'$ contains the tree $\tree{s}'_{i^*}$ (since $i^* \in I'$) whose height is the same as that of $\tree{s}_{i^*}$ whose height in turn is equal to $d-1$; then $\tree{t}'$ has height equal to $d$ which is the height of $\tree{t}$. Further, since $\forest{f}' \mequiv{m} \forest{f}$, we get by Lemma~\ref{lem:tree-forest-transfer}, that $\tree{t}' \mequiv{m} \tree{t}$. 
Finally, $|\tree{t}'| = |\forest{f}'|+ 1 \leq \tower{d, 6 \cdot c_0 \cdot g(d-1) \cdot (m + 1) \cdot (m + \log p)}$ for all $d \ge 1$. 

We now observe that the existence of $\tree{t}'$ as above for every tree $\tree{t}$ in $\cl{T}_{d, p}$ implies that $\iota(m, \cl{T}_{d, p}, \mso)$ is at most the number of structures of $\cl{T}_{d, p}$ whose size is at most $\tower{d, 6 \cdot c_0 \cdot g(d-1) \cdot (m + 1) \cdot (m + \log p)}$. Since the number of structures of $\cl{T}_{d, p}$ with universe size at most $\mu$ for any number $\mu$ is at most $\mu \cdot 2^{\mu \cdot (\mu + \log p)} \leq 2^{3 \mu^2}$ if $\log p \leq \mu$, we get, by taking $\mu = \tower{d, 6 \cdot c_0 \cdot g(d-1) \cdot (m + 1) \cdot (m + \log p)}$, that $\iota(m, \cl{T}_{d, p}, \mso) \leq \tower{d+1, 14 \cdot c_0 \cdot g(d-1) \cdot (m + 1) \cdot (m + \log p)}$. Then defining $g(d) = 14 \cdot c_0 \cdot g(d-1)$, we see that both parts of the present theorem are true for $d$. This completes the induction and hence the proof.
\end{proof}

\begin{remark}\label{remark:alpha-phi-properties}
By the same reasoning as in the proof of Theorem~\ref{thm:DLS-and-index-bound-for-m}, it follows that if $\cl{C}$ in Theorem~\ref{thm:FV-decomposition} is taken to be $\cl{T}_{d, p}$, then for the $\mso[m]$ sentence $\Phi$ as considered in Theorem~\ref{thm:FV-decomposition}, so an arbitrary $\mso$ sentence of rank $m$ over the vocabulary of $\cl{T}_{d, p}$, the sentence $\alpha_\Phi$ is such that (i) the vocabulary $\tau_{m, \cl{T}_{d, p}}$ of $\alpha_\Phi$ has size at most $\zeta_{d, p}(m, d+1)$, (ii) the rank of $\alpha_\Phi$ is at most $\rho_{d, p}(m)$, and (iii) $\mbox{rank}(\alpha_\Phi) \cdot |\tau_{m, \cl{T}_{d, p}}| \leq \zeta_{d+1, p}(m, d+1)$.
\end{remark}

The proof of Theorem~\ref{thm:DLS-and-index-bound-for-m} actually shows a stronger relationship between $\tree{t}'$ and $\tree{t}$. We recall the function $\rho_{d, p}: \mathbb{N} \rightarrow \mathbb{N}$ introduced in the proof of Theorem~\ref{thm:DLS-and-index-bound-for-m}, defined as $\rho_{d, p}(m) = \tower{d+1, 4 \cdot g(d) \cdot (m  + 1) \cdot (m + \log p)}$ where $g$ is the function given by Theorem~\ref{thm:DLS-and-index-bound-for-m}. We now have the following definition.

\begin{definition}\label{defn:chaineq}
Let $d, p, m \in \mathbb{N}$ be given. For $\tree{t}_1, \tree{t}_2 \in \cl{T}_{d, p}$ both of the same height, the binary relation $\tree{t}_1 \chaineq_m \tree{t}_2$ is defined inductively on the height $h \in \{0, \ldots, d\}$ of the trees as follows: 
\vspace{1pt}\begin{enumerate}
    \item If $h = 0$, then $\tree{t}_1 \chaineq_{m} \tree{t}_2$ if $\tree{t}_1 \cong \tree{t}_2$.
    \item Assume $\chaineq_{m}$ has been defined when the height of $\tree{t}_1$ and $\tree{t}_2$ is $h \leq k < d$. Suppose now that $\tree{t}_1$ and $\tree{t}_2$ have height equal to $k+1$. For $i \in \{1, 2\}$, let $\forest{f}_i$ be the forest of rooted trees of $\cl{T}_{d-1, p}$ obtained by removing the root of $\tree{t}_i$ and let $\mc{F}_i$ be the family of trees constituting $\forest{f}_i$. Let $\str{A}_i = \ind{m}{\mc{F}_i}$ be the $\mso[m]$-type indicator for $\mc{F}_i$. 
    
    We now say that $\tree{t}_1 \chaineq_{m} \tree{t}_2$ holds if: (i) the roots of $\tree{t}_1$ and $\tree{t}_2$ are the same, and have the same labels;\label{defn:chaineq:cond-1}  (ii)  $\str{A}_1 \fequiv{q} \str{A}_2$ where $q = \rho_{k, p}(m)$; and (iii) for each tree $\tree{t}_1' \in \mc{F}_1$, there exists a tree $\tree{t}_2' \in \mc{F}_2$ such that $\tree{t}_1' \chaineq_{m} \tree{t}_2'$.
\end{enumerate} 
\end{definition} 
\vspace{1pt}The tree $\tree{t}_2'$ in the above definition is clearly unique for $\tree{t}_1'$. We now observe the following properties of the $\chaineq_m$ relation, the first two of which can be verified by an easy induction, and the last of which can be verified by a very similar reasoning as in the proof of Theorem~\ref{thm:DLS-and-index-bound-for-m}.

\begin{lemma}\label{lem:chaineq-props}
For $d, p, m \in \mathbb{N}$, the following are true of the $\chaineq_m$ relation on $\cl{T}_{d, p}$:
\vspace{1pt}\begin{enumerate}
    \item $\chaineq_{m}$ is reflexive and transitive.
    \item Condition (\ref{defn:chaineq:cond-1}.i) in the $\chaineq_m$ definition can be replaced with `` $\tree{t}_1$ is a leaf-hereditary subtree of $\tree{t}_2$" to get an equivalent definition.
    \item  For trees $\tree{t}_1, \tree{t}_2 \in \cl{T}_{d, p}$, if $\tree{t}_1 \chaineq_m \tree{t}_2$, then (i) $\tree{t}_1 \chaineq_{m'} \tree{t}_2$ for all $m' \leq m$, and (ii) $\tree{t}_1 \mequiv{m} \tree{t}_2$. 
\end{enumerate}
\end{lemma}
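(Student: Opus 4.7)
My approach is to prove all three parts by induction on the common height $d$ of $\tree{t}_1, \tree{t}_2 \in \cl{T}_{d, p}$, exactly paralleling the inductive structure of Definition~\ref{defn:chaineq}; the base case $d = 0$ is immediate throughout since $\chaineq_m$ reduces to isomorphism. For Part 1, reflexivity holds at each inductive step because $\str{A} \fequiv{q} \str{A}$ trivially and each child subtree is $\chaineq_m$ to itself by the induction hypothesis; transitivity follows by routing each $\tree{t}_1' \in \mc{F}_1$ through its match $\tree{t}_2' \in \mc{F}_2$ and then to a match $\tree{t}_3' \in \mc{F}_3$, invoking transitivity of $\fequiv{q}$ together with the induction hypothesis. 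For Part 2, condition (\ref{defn:chaineq:cond-1}.i) in the original definition asserts that the roots of $\tree{t}_1$ and $\tree{t}_2$ coincide as nodes, so by condition (\ref{defn:chaineq:cond-1}.iii) applied recursively each $\tree{t}_1' \in \mc{F}_1$ shares its root with its matched $\tree{t}_2' \in \mc{F}_2$; hence distinct $\tree{t}_1'$s match to distinct $\tree{t}_2'$s and by induction are leaf-hereditary subtrees of them, and assembling these with the common root yields $\tree{t}_1$ as a leaf-hereditary subtree of $\tree{t}_2$. The converse implication is obvious since leaf-hereditary containment immediately gives a common root with the same label.

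Part 3(ii) is a direct application of Feferman-Vaught at the inductive step. Lemma~\ref{lem:tree-forest-transfer} (valid since the roots of $\tree{t}_1, \tree{t}_2$ share a label) reduces the goal to $\forest{f}_1 \mequiv{m} \forest{f}_2$. For any $\mso$ sentence $\Phi$ of rank $m$ over the vocabulary of $\cl{T}_{k, p}$, Theorem~\ref{thm:FV-decomposition} supplies an $\fo$ sentence $\alpha_\Phi$ over $\tau_{m, \cl{T}_{k, p}}$ whose rank is at most $\rho_{k, p}(m) = q$ by Remark~\ref{remark:alpha-phi-properties}; since $\str{A}_1 \fequiv{q} \str{A}_2$ by hypothesis, the two type indicators agree on $\alpha_\Phi$, and hence $\forest{f}_1$ and $\forest{f}_2$ agree on $\Phi$, as required.

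The step I expect to be most delicate is Part 3(i), since the type indicator $\ind{m}{\mc{F}_i}$ is a structure over a vocabulary that depends on $m$ and the FO rank $q = \rho_{k, p}(m)$ also depends on $m$. The key observation is that since every $\mequiv{m}$ equivalence class is contained in a unique $\mequiv{m'}$ equivalence class for $m' \le m$, the indicator $\ind{m'}{\mc{F}_i}$ is a quantifier-free reduct of $\ind{m}{\mc{F}_i}$: each predicate $T' \in \tau_{m', \cl{T}_{k, p}}$ is interpreted as the union of the interpretations of those predicates $T \in \tau_{m, \cl{T}_{k, p}}$ whose class is contained in $T'$. Translating any $\fo[q]$ sentence over $\tau_{m', \cl{T}_{k, p}}$ into an $\fo[q]$ sentence over $\tau_{m, \cl{T}_{k, p}}$ by substituting such disjunctions for atoms, one sees that $\str{A}_1 \fequiv{q} \str{A}_2$ transfers to $\ind{m'}{\mc{F}_1} \fequiv{q} \ind{m'}{\mc{F}_2}$, and since $\rho_{k, p}$ is increasing in its argument, $q' = \rho_{k, p}(m') \le q$ gives the required $\fequiv{q'}$. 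Combining this with the $m$-independent condition (\ref{defn:chaineq:cond-1}.i) and the induction hypothesis applied to child subtrees closes Part 3(i).
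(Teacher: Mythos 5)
Your proof is correct and follows essentially the route the paper itself indicates: it only remarks that parts 1 and 2 follow "by an easy induction" and part 3 "by a very similar reasoning as in the proof of Theorem~\ref{thm:DLS-and-index-bound-for-m}", and your induction on height, together with the Feferman--Vaught argument via Theorem~\ref{thm:FV-decomposition} and Remark~\ref{remark:alpha-phi-properties} for part 3(ii), is exactly that reasoning made explicit. Your treatment of part 3(i) via the observation that $\ind{m'}{\mc{F}_i}$ is a quantifier-free (finite-disjunction) reduct of $\ind{m}{\mc{F}_i}$, so that $\fequiv{q}$ transfers to $\fequiv{q'}$ for $q' = \rho_{k,p}(m') \le q$, is a welcome filling-in of a point the paper glosses over, but it is not a different approach.
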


We now obtain the following by an entirely analogous set of arguments as for Theorem~\ref{thm:DLS-and-index-bound-for-m}.

\begin{corollary}\label{cor:DLS-chaineq-form}
 For every $d, p, m \in \mathbb{N}$ and every tree $\tree{t} \in \cl{T}_{d, p}$, there exists a tree $\tree{t}' \in \cl{T}_{d, p}$ such that (i) $\tree{t}' \chaineq_{m} \tree{t}$, and (ii) $|\tree{t}'| \leq \zeta_{d, p}(m, d)$ where $\zeta_{d, p}$ is the function given by Theorem~\ref{thm:DLS-and-index-bound-for-m}.
\end{corollary}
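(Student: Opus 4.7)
The plan is to prove Corollary~\ref{cor:DLS-chaineq-form} by induction on $d$, running essentially the same construction as in the proof of Theorem~\ref{thm:DLS-and-index-bound-for-m} but now tracking the stronger $\chaineq_m$ relation instead of just $\mequiv{m}$ at each step. The base case $d = 0$ is immediate: a single labeled node satisfies $\tree{t}' = \tree{t}$ and trivially $\tree{t}' \cong \tree{t}$, hence $\tree{t}' \chaineq_m \tree{t}$. For the inductive step, fix $\tree{t} \in \cl{T}_{d, p}$; without loss of generality we can assume $\tree{t}$ has height exactly $d$ (otherwise view $\tree{t}$ in $\cl{T}_{d-1, p}$ and apply the inductive hypothesis directly).

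In the inductive case, write $\forest{f} = \bigcupdot_{i \in I} \tree{s}_i$ for the forest of child subtrees of $\root{\tree{t}}$, pick an index $i^*$ with $\tree{s}_{i^*}$ of height $d-1$, and form the $\mso[m]$-type indicator $\str{A} = \ind{m}{\mc{F}}$ over the monadic vocabulary $\tau_{m, \cl{T}_{d-1, p}}$. Setting $q = \rho_{d-1, p}(m)$ and invoking Lemma~\ref{lemma:monadic-structures}(\ref{lemma:monadic-structures:1}) with $a = i^*$, we obtain a substructure $\str{B} \subseteq \str{A}$ with $\str{B} \fequiv{q} \str{A}$, containing $i^*$, and of size bounded by $q \cdot |\tau_{m, \cl{T}_{d-1, p}}|$. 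This $\str{B}$ is naturally the type indicator $\ind{m}{\mc{F}'}$ for the subfamily $\mc{F}' = (\tree{s}_j)_{j \in I'}$ with $I' \subseteq I$ and $i^* \in I'$.

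Now apply the inductive hypothesis to each $\tree{s}_j$ with $j \in I'$ to obtain a tree $\tree{s}'_j \in \cl{T}_{d-1, p}$ with $\tree{s}'_j \chaineq_m \tree{s}_j$ and $|\tree{s}'_j| \leq \zeta_{d-1, p}(m, d-1)$. Build $\tree{t}'$ by taking the root (and root label) of $\tree{t}$ and attaching the family $(\tree{s}'_j)_{j \in I'}$ as its child subtrees. By Lemma~\ref{lem:chaineq-props}(2), each $\tree{s}'_j$ is a leaf-hereditary subtree of $\tree{s}_j$, so $\tree{t}'$ is a leaf-hereditary subtree of $\tree{t}$. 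The key observation, and the only place that requires care, is that by Lemma~\ref{lem:chaineq-props}(3)(ii), $\tree{s}'_j \mequiv{m} \tree{s}_j$, hence both trees lie in the same $\mequiv{m}$-class in $\cl{T}_{d-1, p}$; consequently $\ind{m}{(\tree{s}'_j)_{j \in I'}}$ coincides with $\ind{m}{(\tree{s}_j)_{j \in I'}} = \str{B}$, and $\str{B} \fequiv{q} \str{A}$ by construction. This gives exactly condition (ii) in the definition of $\chaineq_m$ at level $d$, while condition (iii) is immediate from the choice of the $\tree{s}'_j$; thus $\tree{t}' \chaineq_m \tree{t}$.

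Finally, the size bound follows from the same calculation as in the proof of Theorem~\ref{thm:DLS-and-index-bound-for-m}: $|\tree{t}'| = 1 + \sum_{j \in I'} |\tree{s}'_j| \leq 1 + |\str{B}| \cdot \zeta_{d-1, p}(m, d-1) \leq \zeta_{d, p}(m, d)$ after absorbing constants into $g(d)$. The main obstacle, if any, is the bookkeeping for the type indicator equivalence step: one must check that replacing each $\tree{s}_j$ by its inductively produced $\chaineq_m$-substructure $\tree{s}'_j$ leaves the underlying $\mso[m]$-type indicator pointwise unchanged, so that the FO-equivalence guarantee $\str{B} \fequiv{q} \str{A}$ transfers verbatim to the indicator of the child forest of $\tree{t}'$. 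This is exactly what Lemma~\ref{lem:chaineq-props}(3)(ii) buys us, and it is the feature that lets the stronger $\chaineq_m$ version of the downward property go through with no change in the quantitative bound.
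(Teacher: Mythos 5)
Your proposal is correct and matches the paper's intent exactly: the paper obtains this corollary ``by an entirely analogous set of arguments as for Theorem~\ref{thm:DLS-and-index-bound-for-m}'', and your induction — shrinking the type indicator via Lemma~\ref{lemma:monadic-structures}(\ref{lemma:monadic-structures:1}) with $q = \rho_{d-1,p}(m)$, recursing on the surviving children, and noting that $\tree{s}'_j \mequiv{m} \tree{s}_j$ keeps the indicator of the new child forest equal to $\str{B}$ so that condition (ii) of Definition~\ref{defn:chaineq} holds — is precisely that analogous argument, with the same size bookkeeping.
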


Before we proceed, we define the following functions $\chi_{d, p}, \xi_{d, p}: \mathbb{N} \rightarrow \mathbb{N}$ for $d \ge 1$. Observe that both of the functions are strictly increasing.
\begin{equation}\label{eqn:quantities}
  \begin{aligned}
    \chi_{d, p}(m) & = \tower{d+1, g(d+1) \cdot (m + 1) \cdot (m + \log p)} ~~~( =  \zeta_{d+1, p}(m, d+1) )\\
    \xi_{d, p}(m) & =  \tower{d+1, g(d+1) \cdot (d + 1) \cdot (m + 1) \cdot (m + \log p)}\\
  \end{aligned}
\end{equation}

An important consequence of Theorem~\ref{thm:FV-decomposition} and Theorem~\ref{thm:DLS-and-index-bound-for-m} is the following lemma which will be of much use to us in the next section. 

\begin{lemma}\label{lem:degree-regulation}
Let $d, p, m \in \mathbb{N}$ be given. Let $\mc{F}  = (\tree{s}_i)_{i \in I}$ be a family of trees of $\cl{T}_{d, p}$ where $I$ is an index set of arbitrary cardinality. Let $\str{A} = \ind{m}{\mc{F}}$ be the $\mso[m]$ type indicator of $\mc{F}$ and let $i^* \in I$ be a given element of $\str{A}$. Then the following are true for any $q \in \mathbb{N}$:
\vspace{1pt}\begin{enumerate}
    \item If $|I| > \chi_{d, p}(m)$, then for every cardinal $\lambda$ such that $\chi_{d, p}(m) \leq \lambda \leq |I|$, there exists a subset $I' \subseteq I$ such that (i) $I'$ contains $i^*$,  (ii) $|I'| = \lambda$, and (iii) if $\mc{F}' = (\tree{s}_i)_{i \in I'}$ and $\str{B} = \ind{m}{\mc{F}'}$, then $\str{B} \fequiv{q} \str{A}$ where $q = \rho_{d, p}(m)$.\label{lem:degree-regulation-downward}
    \item If $|I| > \chi_{d, p}(m)$, then for every cardinal $\lambda$ such that $\lambda \ge |I|$, there exists a set $I'$ of which $I$ is a subset and trees $\tree{s}_i \in \cl{T}_{d, p}$ for $i \in I' \setminus I$ such that (i) $|I'| = \lambda$, and (ii) if $\mc{F}' = (\tree{s}_i)_{i \in I'}$ and $\str{B} = \ind{m}{\mc{F}'}$, then $\str{B} \fequiv{q} \str{A}$ where $q = \rho_{d, p}(m)$. Further, there exists $j \in I$ such that all the trees $\tree{s}_i$ for $i \in I' \setminus I$ can be taken to be isomorphic to $\tree{s}_j$.\label{lem:degree-regulation-upward}
\end{enumerate}
\end{lemma}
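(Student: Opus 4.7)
My plan is to apply Lemma \ref{lemma:monadic-structures} essentially as a black box to the indicator $\str{A} = \ind{m}{\mc{F}}$, which is a structure over the finite monadic vocabulary $\tau_{m, \cl{T}_{d, p}}$ in which every element lies in exactly one predicate (the one naming the $\mso[m]$-type of the corresponding tree). The key observation that makes the transfer clean is that any substructure or monadic expansion of $\str{A}$ whose elements retain unique predicate memberships is itself the $\mso[m]$-type indicator of a corresponding family of trees. First I would check the size precondition: by Remark \ref{remark:alpha-phi-properties}, $\rho_{d, p}(m) \cdot |\tau_{m, \cl{T}_{d, p}}| \leq \chi_{d, p}(m)$, so setting $q = \rho_{d, p}(m)$ and $q_1 = (q - 1) \cdot |\tau_{m, \cl{T}_{d, p}}|$ gives $q_1 < \chi_{d, p}(m) < |I| = |\str{A}|$, so $|\str{A}| > q_1$ as required by both parts of Lemma \ref{lemma:monadic-structures}.

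For Part (\ref{lem:degree-regulation-downward}), fix a cardinal $\lambda$ with $\chi_{d, p}(m) \leq \lambda \leq |I|$. Since $q_1 < \chi_{d, p}(m) \leq \lambda$, I would apply Lemma \ref{lemma:monadic-structures}(\ref{lemma:monadic-structures:1}) to $\str{A}$ with the designated element $a = i^*$ to obtain a substructure $\str{B} \subseteq \str{A}$ containing $i^*$, of cardinality $\lambda$, with $\str{B} \fequiv{q} \str{A}$. Taking $I' \subseteq I$ to be the universe of $\str{B}$ and $\mc{F}' = (\tree{s}_i)_{i \in I'}$, each $i \in I'$ is in precisely the same (unique) predicate in $\str{B}$ as in $\str{A}$, namely the $\mso[m]$-type of $\tree{s}_i$; hence $\str{B} = \ind{m}{\mc{F}'}$ by definition of the indicator, delivering the claim.

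For Part (\ref{lem:degree-regulation-upward}), since $|I| > \chi_{d, p}(m) \geq q \cdot |\tau_{m, \cl{T}_{d, p}}|$, a pigeonhole argument produces some predicate $T \in \tau_{m, \cl{T}_{d, p}}$ with $|T^{\str{A}}| \geq q$; pick any $j \in T^{\str{A}}$, so $\tree{s}_j$ has $\mso[m]$-type $T$. For any cardinal $\lambda \geq |I|$, Lemma \ref{lemma:monadic-structures}(\ref{lemma:monadic-structures:2}) applied with this $T$ yields a superstructure $\str{B} \supseteq \str{A}$ of cardinality $\lambda$ with $\str{B} \fequiv{q} \str{A}$, in which every new element lies in $T^{\str{B}}$ alone and in no other predicate. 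Letting $I'$ be the universe of $\str{B}$ and defining $\tree{s}_i$ for $i \in I' \setminus I$ to be an isomorphic copy of $\tree{s}_j$, each such $\tree{s}_i$ has $\mso[m]$-type $T$, so the indicator $\ind{m}{\mc{F}'}$ of the enlarged family $\mc{F}' = (\tree{s}_i)_{i \in I'}$ agrees with $\str{B}$ on every element (old and new), establishing the required equivalence. The only part of the argument requiring real care is the arithmetic tying $q_1$ to $\chi_{d, p}(m)$ via Remark \ref{remark:alpha-phi-properties}; everything else is just unwinding the definition of $\ind{m}{\cdot}$.
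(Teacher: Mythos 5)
Your proposal is correct and follows essentially the same route as the paper: both proofs reduce the statement to Lemma~\ref{lemma:monadic-structures} applied to the monadic structure $\ind{m}{\mc{F}}$ with $q = \rho_{d,p}(m)$, verify the size precondition via the bound $\rho_{d,p}(m)\cdot|\tau_{m,\cl{T}_{d,p}}|\leq\chi_{d,p}(m)$, and obtain the last claim of Part~2 by routing all new elements into a single predicate $T$ with $|T^{\str{A}}|\ge q$ and taking the corresponding trees isomorphic to some $\tree{s}_j$ with $j\in T^{\str{A}}$. The paper's proof additionally sets up the sentence $\alpha_\Phi$ from Theorem~\ref{thm:FV-decomposition}, but that scaffolding is not needed for the statement as written, so your omission of it is a harmless (indeed cleaner) streamlining rather than a gap.
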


\begin{proof}
The proof is very similar to the proof of Part~(\ref{thm:DLS-and-index-bound-for-m:DLS}) of  Theorem~\ref{thm:DLS-and-index-bound-for-m}.
Let $\forest{f}$ be the forest given by $\forest{f} = \bigcupdot_{i \in I} \tree{s}_i$. As in the proof of Theorem~\ref{thm:DLS-and-index-bound-for-m}, let $\cl{S}$ be the class of rooted forests whose constituent trees are all in $\cl{T}_{d, p}$; then $\forest{f} \in \cl{S}$. Consider the sentence $\Phi := \Theta_\delta$ for $\delta = \mclass{m}{\cl{S}}(\forest{f})$.
By Theorem~\ref{thm:FV-decomposition}, there exists an $\fo$ sentence $\alpha_\Phi$ of over the vocabulary $\tau_{m, \cl{T}_{d, p}}$ such that
\[
\str{A} \models \alpha_\Phi~~\mbox{if, and only if,}~~\forest{f} \models \Phi
\]
By Remark~\ref{remark:alpha-phi-properties}, we have that $\mbox{rank}(\alpha_\Phi) \leq \rho_{d, p}(m)$ and that $\mbox{rank}(\alpha_\Phi) \cdot |\tau_{m, \cl{T}_{d, p}}| \leq \chi_{d, p}(m)$.
Once again, we observe that $\str{A}$ is a structure over a finite monadic vocabulary such that each element of its universe is in the interpretation of exactly one predicate in the vocabulary. Then by Lemma~\ref{lemma:monadic-structures}, taking $q = \rho_{d, p}(m)$, we get that for every $\lambda \neq |I|$ and $\lambda \ge \chi_{d, p}(m) > (\mbox{rank}(\alpha_\Phi) - 1) \cdot |\tau_{m, \cl{T}_{d, p}}|$, there exists a structure $\str{B}$ such that (i) $\str{B}$ is a substructure of $\str{A}$ containing $i^*$ if $\lambda < |I|$, and $\str{A}$ is a substructure of $\str{B}$ if $\lambda > |I|$, (ii) $|\str{B}| = \lambda$, and (ii) $\str{B} \fequiv{q} \str{A}$. If $I'$ is the universe of $\str{B}$, then $\str{B}$ can be seen as the $\mso[m]$-type indicator of the family $\mc{F}' = (\tree{s}_i)_{i \in I'}$ where if $I' \setminus I \neq \emptyset$ (when $\lambda > |I|$), then $\tree{s}_i$ for $i \in I' \setminus I$ is a tree in $\cl{T}_{d, p}$. We see that $I'$ always contains $i^*$, the size of $I'$ is $\lambda$, and as shown above $\str{B} \fequiv{q} \str{A}$.

To see the last statement of the lemma, we see by Lemma~\ref{lemma:monadic-structures}(\ref{lemma:monadic-structures:2}) that for any $T \in \tau_{m, \cl{T}_{d, p}}$ such that $|T^{\str{A}}| \ge q$, all the elements of $I' \setminus I$ above can be taken such that they are all in the interpretation of $T$ in $\str{B}$ and in the interpretation of no other predicate of $\tau_{m, \cl{T}_{d, p}}$ in $\str{B}$. In other words, one can choose a tree $\tree{s}_j$ for $j \in I$ such that $j \in T^{\str{A}}$ (so $T = \mclass{m}{\cl{S}}(\tree{s}_j)$), and for all $i \in I' \setminus I$, it holds that $i \in T^{\str{B}}$ (so $T = \mclass{m}{\cl{S}}(\tree{s}_i)$). In other words, $\tree{s}_i \mequiv{m} \tree{s}_j$ for all $i \in I' \setminus I$. In particular then, $\tree{s}_i$ can be taken to be isomorphic to $\tree{s}_j$ for all $i \in I' \setminus I$.
\end{proof}

\section{The extended L\"owenheim-Skolem property for $\tm{r, p}{d}$}\label{section:ELS}

We prove Theorem~\ref{thm:main-display} in this section. We prove the theorem in two parts, first for the case when one of $\eta$ or  $\lambda$ in the $\melsref$ property described in Section~\ref{section:intro} is finite, and the other when both of these cardinals are infinite. We recall that a function $f: \mathbb{N}_+ \rightarrow \mathbb{N}$ is a \emph{scale function} if it is strictly increasing. For a cardinal $\lambda$, the $\lambda^{th}$ scale, denoted $\scale{\lambda}{f}$, is defined as follows. If $\lambda$ is finite, then $\scale{\lambda}{f}$ is the interval $[f(\lambda) + 1, f(\lambda+1)] = \{j \mid f(\lambda) + 1 \leq j \leq f(\lambda+1)\}$ for $\lambda > 0$, and $[1, f(1)] = \{j \mid 1 \leq  j \leq f(1)\}$ for $\lambda = 0$. If $\lambda$ is infinite, then $\scale{\lambda}{f} = \{\lambda\}$. 

\begin{theorem}\label{thm:shrub-depth-full-LS-finite}
Let $d, p \in \mathbb{N}$ be given. There exists a $(d+1)$-fold exponential scale function $\Upsilon_{r, p, d}: \mathbb{N}_+ \rightarrow \mathbb{N}$ such that the following is true. Let $G \in \tm{r, p}{d}$ be such that $|G| \in \scale{\eta}{\vartheta_{r, p, d}}$ for some (possibly infinite) $\eta \ge 1$. 
\vspace{1pt}\begin{enumerate}
    \item (Downward $\mso$-$\els$) For all cardinals $\lambda \leq \eta$, if $\lambda$ is finite, then there exists $H \in \tm{r, p}{d}$ such that (a) $H \subseteq G$, (b) $|H| \in \scale{\lambda}{\Upsilon_{r, p, d}}$, and (c) $H \mequiv{\lambda} G$.\label{thm:shrub-depth-full-LS-finite-downward}
    \item (Upward $\mso$-$\els$) For all cardinals $\lambda \geq \eta$, if $\eta$ is finite, then there exists $H \in \tm{r, p}{d}$ such that (a) $G \subseteq H$, (b) $|H| \in \scale{\lambda}{\Upsilon_{r, p, d}}$, and (c) $H \mequiv{\eta} G$.\label{thm:shrub-depth-full-LS-finite-upward}
\end{enumerate} 
\end{theorem}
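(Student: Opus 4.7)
My plan is to reduce both directions of the theorem to surgical modifications on tree models, transferred back to graphs via the $\fo$ interpretation $\Xi_{S,p}$. Fix a tree model $(\tree{s}, S) \in \treemodel{r,p}{d}$ with $G = \Xi_{S,p}(\tree{s})$, so that $\tree{s} \in \cl{T}_{d, r\cdot p+1}$, and let $q = O(d)$ bound both $\rank{\Xi_{S,p}}$ and $\rank{\Omega_{r,p,d}}$. Two facts from Section~\ref{section:shrub-depth} drive the reduction: leaf-hereditary subtrees of $\tree{s}$ correspond under $\Xi_{S,p}$ to labeled induced subgraphs of $G$; and $\tree{s}_1 \mequiv{m+q} \tree{s}_2$ implies $\Xi_{S,p}(\tree{s}_1) \mequiv{m} \Xi_{S,p}(\tree{s}_2)$. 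I define $\Upsilon_{r,p,d}$ as a $(d+1)$-fold exponential function of the form $\Upsilon_{r,p,d}(\lambda) = c \cdot \chi_{d, r\cdot p+1}(\lambda + q)$ for a sufficiently large constant $c$, so that each scale interval $\scale{\lambda}{\Upsilon_{r,p,d}}$ sits well above the minimum size $\chi_{d, r\cdot p+1}(\lambda + q)$ to which a tree in $\cl{T}_{d, r\cdot p+1}$ can be shrunk via Corollary~\ref{cor:DLS-chaineq-form}.

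For the downward case, setting $m := \lambda + q$, I process $\tree{s}$ top-down: at each internal node $v$ of depth $h$ whose family $\mc{F}_v$ of child subtrees satisfies $|\mc{F}_v| > \chi_{d-1-h, r\cdot p+1}(m)$, I apply Lemma~\ref{lem:degree-regulation}(1) to restrict $\mc{F}_v$ to any sub-family $\mc{F}'_v$ of cardinality of my choice in the range $[\chi_{d-1-h, r\cdot p+1}(m), |\mc{F}_v|]$, then recurse on each retained child subtree. Shrinking to the minimum at every node yields a leaf-hereditary subtree $\tree{s}_{\min}$ with $\tree{s}_{\min} \chaineq_m \tree{s}$ and $|\tree{s}_{\min}| \leq \chi_{d, r\cdot p+1}(m)$ by Corollary~\ref{cor:DLS-chaineq-form}, while only partially shrinking a single high-branching family (adding retained child subtrees back one at a time) lets me realize any integer tree size between $|\tree{s}_{\min}|$ and $|\tree{s}|$. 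I pick this size inside $\scale{\lambda}{\Upsilon_{r,p,d}}$, obtain a $\chaineq_m$-subtree $\tree{s}'$ of $\tree{s}$ of that size, and set $H := \Xi_{S,p}(\tree{s}') \subseteq G$; then $H \mequiv{\lambda} G$ follows from Lemma~\ref{lem:chaineq-props}(3) combined with the second fact above.

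For the upward case with $\eta$ finite, I dually apply Lemma~\ref{lem:degree-regulation}(2) with $m := \eta + q$: identify an internal node $v$ of $\tree{s}$ whose child-family exceeds the relevant $\chi$ threshold (first canonicalizing $\tree{s}$ via the downward construction if no such $v$ exists), and enlarge $\mc{F}_v$ by adding new child subtrees isomorphic to an existing one, as permitted by the last statement of Lemma~\ref{lem:degree-regulation}(2). I choose the number of added copies so that $|\tree{s}'|$ lies in $\scale{\lambda}{\Upsilon_{r,p,d}}$, taking $\lambda$-many copies if $\lambda$ is infinite. Since $\tree{s}$ is a leaf-hereditary subtree of $\tree{s}'$, the graph $H := \Xi_{S,p}(\tree{s}')$ contains $G$ as a labeled induced subgraph and satisfies $H \mequiv{\eta} G$ for the same reasons as in the downward case.

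The main obstacle is engineering $\Upsilon_{r,p,d}$ and the branching choices so that some achievable tree size lands in every target scale. This reduces to showing that, by adjusting the branching at a single high-branching internal node by one element at a time (as Lemma~\ref{lem:degree-regulation}(1) permits), the integer tree sizes reachable by $\chaineq_m$-preserving modifications cover a sufficiently dense range; the choice of $\Upsilon_{r,p,d}$ ensures the scale widths are $(d+1)$-fold exponentially large, so a careful tracking of the per-level granularity of achievable sizes should make the dense-coverage argument go through without requiring tighter control.
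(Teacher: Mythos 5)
Your overall architecture matches the paper's: the paper also proves a tree-level version (Theorem~\ref{thm:full-LS-finite}) giving a scale function $\vartheta_{d,p}$ for $\cl{T}_{d,p}$, and then transfers it to $\tm{r,p}{d}$ via $\Xi_{S,p}$, absorbing the rank shift $q_0 = O(d)$ and the factor $d+1$ between $|\tree{s}|$ and $|G|$ into an interleaved definition of $\Upsilon_{r,p,d}$ in terms of $\vartheta_{d,p}$. However, there is a genuine gap at the heart of your downward argument. The claim that partially shrinking \emph{a single} high-branching family, adding retained child subtrees back one at a time, ``lets me realize any integer tree size between $|\tree{s}_{\min}|$ and $|\tree{s}|$'' is false as stated: each added or removed child subtree changes the size by the cardinality of that subtree, which need not be small (a child subtree of the root can have size comparable to $|\tree{s}|$), and even when the retained subtrees are themselves shrunk, one family need not supply enough additive range to climb back up to an arbitrary target scale (the original tree may owe its size to many nodes each branching only slightly above threshold). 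Your closing remark defers exactly this ``per-level granularity'' issue, but that is the step that carries the whole theorem.

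The paper closes this gap with three interlocking ingredients that your proposal does not supply. First, Lemma~\ref{lemma:useful-lemma-1} locates a high-branching node $z$ of \emph{minimal height} $h$, which forces every child subtree of $z$ to have size at most $\xi_{h-1,p}(m) \leq \xi_{d-1,p}(m)$; this is what bounds the granularity of a single deletion (Lemma~\ref{lem:full-LS-helper}) or of a single added isomorphic copy in the upward case. Second, the process deletes (or adds) \emph{one} such small child subtree at a time and then \emph{re-applies} Lemma~\ref{lemma:useful-lemma-1} to the new tree, rather than committing to a single family once and for all. Third, the scale function is calibrated by inequality~(\ref{ineq:vartheta}), $\vartheta_{d,p}(\lambda) \ge \vartheta_{d,p}(\lambda-1) + \xi_{d-1,p}(\lambda-1)$, so that every scale is wider than one step of the process and hence cannot be jumped over. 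Your choice $\Upsilon_{r,p,d}(\lambda) = c\cdot\chi_{d,r\cdot p+1}(\lambda+q)$ does not visibly guarantee this width condition, nor the compatibility between tree scales and graph scales under the $(d+1)$-factor distortion. Finally, note that the downward statement also covers infinite $\eta$ (the hypothesis is only that $\lambda$ is finite); the paper needs a separate preliminary step there, first producing a \emph{finite} tree $\tree{t}''$ with $\tree{t}'' \chaineq_\lambda \tree{t}$ of controlled size before the scale-descent can even begin, and your top-down sketch does not address how the one-at-a-time size control survives infinite branching at several incomparable nodes.
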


\begin{prop}\label{prop:shrub-depth-full-LS-infinite}
Let $d, p \in \mathbb{N}$ be given. Let $G \in \tm{r, p}{d}$ be such that $|G| = \eta$ for an infinite cardinal $\eta$. Then for any infinite cardinal $\lambda$, there exists $H \in \tm{r, p}{d}$ such that (i) $H \subseteq G$ if $\lambda \leq \eta$, and $G \subseteq H$ if $\eta \leq \lambda$, (ii) $|H| = \lambda$, and (iii) $H \mequiv{m} G$ for all $m \in \mathbb{N}$. 
\end{prop}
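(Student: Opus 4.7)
The plan is to reduce the claim to a statement about trees via the $\fo$ interpretation $\Xi_{S, p}$ from Section~\ref{section:shrub-depth}, and then to argue by induction on $d$ on the tree side. Fix a tree model $(\tree{s}, S) \in \treemodel{r, p}{d}$ with $G = \Xi_{S, p}(\tree{s})$; since $|G| = \eta$ is infinite and the tree has bounded height, $|\tree{s}| = \eta$ as well. The tree-level statement to establish is: for every $d \in \mathbb{N}$, $p \in \mathbb{N}_+$, every $\tree{t} \in \cl{T}_{d, p}$ with $|\tree{t}| = \eta$ infinite, and every infinite cardinal $\lambda$, there exists $\tree{t}' \in \cl{T}_{d, p}$ with $|\tree{t}'| = \lambda$, with $\tree{t}'$ a leaf-hereditary subtree of $\tree{t}$ when $\lambda \leq \eta$ and $\tree{t}$ a leaf-hereditary subtree of $\tree{t}'$ when $\eta \leq \lambda$, and $\tree{t}' \mequiv{m} \tree{t}$ for every $m \in \mathbb{N}$. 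Applied to $\tree{s}$ viewed as an element of $\cl{T}_{d, r\cdot p + 1}$ and to any $m \geq \rank{\Omega_{r, p, d}}$, this produces a $\tree{s}'$ satisfying $\Omega_{r, p, d}$, and then $H = \Xi_{S, p}(\tree{s}')$ is the desired graph by the $\mso$-preservation property of $\fo$ interpretations.

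The tree statement is proved by induction on $d$. The case $d = 0$ is vacuous, and the base is $d = 1$: such a tree is simply a root together with a set of labeled leaves, whose edge relation is $\fo$-definable from $\rootref$, so its $\mso$-theory coincides with the $\mso$-theory of its pure monadic reduct. The key underlying fact is a strengthening of Lemma~\ref{lemma:monadic-structures} to all ranks in the infinite regime: two pure monadic structures over a finite vocabulary agree on every $\mso$ sentence (at every rank) if and only if, for each atomic type, the numbers of elements realizing it are equal when finite and both infinite otherwise. This follows by a standard back-and-forth / Ehrenfeucht-Fra\"iss\'e argument. The base case is then settled by preserving each finite label count exactly, freely resizing each infinite label count to an arbitrary infinite cardinal so that the totals add up to $\lambda$, and choosing subsets of the existing leaves for the downward direction or adding new leaves for the upward direction.

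For the inductive step ($d \geq 2$), decompose $\tree{t}$ as its root together with the forest $\forest{f} = \bigcupdot_{i \in I} \tree{s}_i$ with each $\tree{s}_i \in \cl{T}_{d-1, p}$. I perform two orthogonal adjustments. First, partition $I$ by the full $\mso$-equivalence classes $\{I_\kappa\}$ of the children and choose $I'$ (a subset of $I$ for the downward case, a superset for the upward case) with $|I' \cap I_\kappa| = |I_\kappa|$ whenever $|I_\kappa|$ is finite and $|I' \cap I_\kappa|$ any chosen infinite cardinal otherwise; for indices appearing only in the upward case, let $\tree{s}_i$ be isomorphic to a fixed representative of the relevant class. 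Second, for each $i \in I' \cap I$ with $\tree{s}_i$ of infinite cardinality, apply the inductive hypothesis to replace $\tree{s}_i$ by some $\tree{s}'_i \in \cl{T}_{d-1, p}$ of any prescribed infinite cardinality, with $\tree{s}'_i \mequiv{m} \tree{s}_i$ for every $m$. Because each $\mso[m]$-equivalence class on $\cl{T}_{d-1, p}$ is a union of full $\mso$-equivalence classes, these choices preserve, for every $m$ and every predicate of $\ind{m}{\mc{F}}$, the ``same finite number, otherwise both infinite'' condition; the monadic fact from the base case then yields $\ind{m}{\mc{F}'} \equiv_\mso \ind{m}{\mc{F}}$, and Theorem~\ref{thm:FV-decomposition} combined with Lemma~\ref{lem:tree-forest-transfer} gives $\tree{t}' \mequiv{m} \tree{t}$ for every $m$.

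The main obstacle is the bookkeeping: arranging the cardinality choices above so that $|\tree{t}'| = 1 + \sum_{i \in I'} |\tree{s}'_i|$ equals exactly $\lambda$. Since $|\tree{t}| = \eta$ is infinite, either $|I| = \eta$ or some $|\tree{s}_{i^*}| = \eta$, so at least one infinite degree of freedom is always available and can be set to any infinite cardinal bounded above by $\eta$ (for the downward case) or below by $\eta$ (for the upward case); standard cardinal arithmetic then lets us realize any prescribed infinite $\lambda$. The leaf-hereditary subtree/supertree relation is immediate from the construction, since the downward modifications take literal substructures while the upward ones only add new branches or enlarge infinite children, never altering or removing any existing leaf.
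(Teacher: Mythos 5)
Your overall strategy---reduce to $\cl{T}_{d,p}$ via the interpretation $\Xi_{S,p}$ and induct on height, controlling the $\mso[m]$-type indicators of the child forests for all $m$ simultaneously---is a genuinely different route from the paper, which instead observes that $\tm{r,p}{d}$ is $\fo$-axiomatizable, invokes the $\mso$-to-$\fo$ collapse of Theorem~\ref{thm:mso=fo}, and then applies the classical L\"owenheim--Skolem theorem for elementary classes. Your route would be more self-contained and constructive, but the inductive step has a genuine gap in the downward direction. You require $I'$ to preserve \emph{exactly} the cardinality of every finite class of the \emph{full} $\mso$-equivalence relation on the children. The full equivalence relation can have continuum many nonempty classes, each finite, even though every $\mso[m]$-class is large for every $m$. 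Concretely, for each infinite, coinfinite $S \subseteq \mathbb{N}$ let $\tree{c}_S \in \cl{T}_{2,p}$ be the tree whose child subtrees are the stars with exactly $n$ leaves for $n \in S$; the trees $\tree{c}_S$ are pairwise non-equivalent for the full $\mso$ relation (some sentence detects membership of each $n$ in $S$), yet for each fixed $m$ the $\mso[m]$-class of $\tree{c}_S$ depends only on a finite initial segment of $S$ and so contains $2^{\aleph_0}$ of them. A tree $\tree{t} \in \cl{T}_{3,p}$ with all the $\tree{c}_S$ as children has $|\tree{t}| = 2^{\aleph_0}$, every finite full class among its children is a singleton, and there are $2^{\aleph_0}$ of them; your recipe then forces $|I'| = 2^{\aleph_0}$ and cannot produce a countable $\tree{t}'$, i.e.\ the case $\lambda = \aleph_0 < \eta = 2^{\aleph_0}$ is unreachable. (A secondary, smaller inaccuracy: when $\eta$ is singular, e.g.\ $\aleph_\omega$, it can happen that neither $|I|$ nor any single $|\tree{s}_i|$ equals $\eta$, so ``one infinite degree of freedom that can be set to any value up to $\eta$'' is not literally available, though the bookkeeping can still be arranged.)

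The fix is to demand less: what Theorem~\ref{thm:FV-decomposition} actually needs, for each $m$, is $\fo$-equivalence (at every rank) of $\ind{m}{\mc{F}'}$ with $\ind{m}{\mc{F}}$, i.e.\ for each $m$ and each $\mso[m]$-class $T$, either $|T \cap I'| = |T \cap I|$ or both are infinite. Since for each $m$ there are only finitely many $\mso[m]$-classes (Theorem~\ref{thm:DLS-and-index-bound-for-m}(\ref{thm:DLS-and-index-bound-for-m:index})), the set of indices lying in some finite $\mso[m]$-class for some $m$ is countable; keeping that mandatory countable set, plus countably many witnesses from each of the countably many pairs $(m, T)$ with $T \cap I$ infinite, yields an $I'$ of any prescribed infinite cardinality $\leq |I|$ meeting the requirement. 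With that replacement (and the corresponding care in the cardinal arithmetic for singular $\eta$), your induction goes through; as written, it does not.
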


\begin{proof}[Proof of Proposition~\ref{prop:shrub-depth-full-LS-infinite}]
It is known~\cite{CF20} that there exists an $\fo$ sentence $\varphi_1$ that defines $\tm{r}{d}$ over all structures. The property of $p$-labeling -- that the unary predicates $P_1. \ldots, P_p$ are mutually exclusive and exhaustive  (and allowed to be empty) -- is easily defined by an $\fo$ sentence $\varphi_2$. Then the sentence $\varphi_1 \wedge \varphi_2$ defines $\tm{r, p}{d}$ over all structures. Now using the fact that $\mso$ collapses to $\fo$ over $\tm{r, p}{d}$ as shown by Theorem~\ref{thm:mso=fo} in Section~\ref{section:mso=fo}, the statement of the present proposition is just the classical {\lsref} theorem which is known to be true over elementary classes of structures. 
\end{proof}

The proof of Theorem~\ref{thm:shrub-depth-full-LS-finite} proceeds by first showing Theorem~\ref{thm:full-LS-finite} below which is the analogue of Theorem~\ref{thm:shrub-depth-full-LS-finite} for $\cl{T}_{d, p}$. This is the core result which can then be easily transferred to $\tm{r, p}{d}$ via $\fo$ interpretations. 

\begin{theorem}\label{thm:full-LS-finite}
Let $d, p \in \mathbb{N}$ be given. There exists a $(d+1)$-fold exponential scale function $\vartheta_{d, p}: \mathbb{N}_+ \rightarrow \mathbb{N}$ such that the following is true. Let $\tree{t} \in \cl{T}_{d, p}$ be a given tree such that $|\tree{t}| \in \scale{\eta}{\vartheta_{d, p}}$ for some (possibly infinite) $\eta \ge 1$. 
\vspace{1pt}\begin{enumerate}
    \item (Downward) For all cardinals $\lambda \leq \eta$, if $\lambda$ is finite, then there exists a tree $\tree{t}' \in \cl{T}_{d, p}$ such that (i) $|\tree{t}'| \in \scale{\lambda}{\vartheta_{d, p}}$, and (ii) $\tree{t}' \chaineq_{\lambda} \tree{t}$.\label{thm:full-LS-finite-downward}
    \item (Upward) For all cardinals $\lambda \ge \eta$, if $\eta$ is finite, then there exists a tree $\tree{t}' \in \cl{T}_{d, p}$ such that (i) $|\tree{t}'| \in \scale{\lambda}{\vartheta_{d, p}}$, and (ii) $\tree{t} \chaineq_{\eta} \tree{t}'$.\label{thm:full-LS-finite-upward}
\end{enumerate} 
\end{theorem}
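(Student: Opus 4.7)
The proof of Theorem~\ref{thm:full-LS-finite} proceeds by induction on $d$, with the scale function $\vartheta_{d, p}$ defined recursively at the same time. The base case $d = 0$ is immediate since every tree in $\cl{T}_{0, p}$ is a single labeled node, so only $\eta = 1$ is relevant and one takes $\tree{t}' = \tree{t}$.

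For the induction step with $d \ge 1$, let $\tree{t}$ be a tree at scale $\eta$ and let $\mc{F} = (\tree{s}_i)_{i \in I}$ be its family of child subtrees, each $\tree{s}_i \in \cl{T}_{d-1, p}$. The strategy is to build $\tree{t}'$ by constructing a new family of child subtrees $(\tree{s}'_j)_{j \in I'}$ and attaching them under a new root bearing the same label as the root of $\tree{t}$. To obtain the $\chaineq$ relation between $\tree{t}$ and $\tree{t}'$, two conditions beyond root agreement must be verified (see Definition~\ref{defn:chaineq}): the $\mso[\lambda]$-type indicators of the two families must be $\fo[\rho_{d-1, p}(\lambda)]$-equivalent, and each child in the smaller family must be $\chaineq$-related to some child in the larger family.

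For the downward direction (finite $\lambda \leq \eta$), the construction proceeds in two phases. First, fix $i^* \in I$ indexing a child of maximum height in $\mc{F}$, and apply Lemma~\ref{lem:degree-regulation}(\ref{lem:degree-regulation-downward}) to select a subfamily $(\tree{s}_i)_{i \in I'}$ containing $i^*$, with $|I'|$ chosen within the interval allowed by the lemma; this preserves $\fo[\rho_{d-1, p}(\lambda)]$-equivalence of type indicators and retains the height at $d$. Second, for each retained child $\tree{s}_i$, apply the induction hypothesis to replace $\tree{s}_i$ by some $\tree{s}'_i \in \cl{T}_{d-1, p}$ with $\tree{s}'_i \chaineq_\lambda \tree{s}_i$ at a suitable small scale (when $\tree{s}_i$ is already small enough one takes $\tree{s}'_i = \tree{s}_i$ using reflexivity of $\chaineq$ from Lemma~\ref{lem:chaineq-props}). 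Since $\chaineq_\lambda$ implies $\mequiv{\lambda}$, the type indicator of $(\tree{s}'_i)_{i \in I'}$ is structurally identical to that of $(\tree{s}_i)_{i \in I'}$, so both $\chaineq$-conditions are inherited from the first phase. The upward direction is symmetric: enlarge the family via Lemma~\ref{lem:degree-regulation}(\ref{lem:degree-regulation-upward}) by adjoining isomorphic copies of an existing child, and apply the induction hypothesis upwards to grow each child as needed when $\lambda$ is finite; for infinite $\lambda$, Lemma~\ref{lem:degree-regulation}(\ref{lem:degree-regulation-upward}) alone yields a family of the requested cardinality, whose disjoint union has total size $\lambda$.

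The main obstacle is defining $\vartheta_{d, p}$ so that the constructed tree lands in exactly the target scale $\lambda$, not below it. The natural recursive choice is $\vartheta_{d, p}(\lambda) \approx \chi_{d-1, p}(\lambda) \cdot \vartheta_{d-1, p}(\lambda)$, which is $(d+1)$-fold exponential since $\chi_{d-1, p}$ is $d$-fold exponential. Because $|I'|$ can range between $\chi_{d-1, p}(\lambda)$ and $|I|$ while each child's size can range between $1$ and $\vartheta_{d-1, p}(\lambda)$, the total tree size $1 + \sum_{i \in I'} |\tree{s}'_i|$ can be tuned to hit any point of the scale interval $(\vartheta_{d, p}(\lambda-1), \vartheta_{d, p}(\lambda)]$, provided the floor $\chi_{d-1, p}(\lambda)$ already exceeds $\vartheta_{d, p}(\lambda-1)$. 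This separation between consecutive scales is exactly what the tower growth rates of $\chi_{d-1, p}$ and $\vartheta_{d-1, p}$ guarantee, allowing the induction to close with $\vartheta_{d, p}$ being $(d+1)$-fold exponential as claimed.
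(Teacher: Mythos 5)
Your skeleton (control the number of children via Lemma~\ref{lem:degree-regulation}, shrink or grow each retained child recursively, and verify the three clauses of the $\chaineq$ definition) matches the spirit of the paper, but there is a genuine gap at the crux: the claim that the total size $1+\sum_{i\in I'}|\tree{s}'_i|$ ``can be tuned to hit any point of the scale interval.'' Two things break here. First, the induction hypothesis only places each shrunken child somewhere \emph{inside} a scale interval of $\vartheta_{d-1,p}$, not at a prescribed size; summing over up to $|I'|$ children, each with slack up to the width of its interval, the set of achievable totals has gaps you have not bounded, so nothing prevents every achievable total from missing $(\vartheta_{d,p}(\lambda),\vartheta_{d,p}(\lambda+1)]$. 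Second, you prune at the root, but the root may have very few children (so Lemma~\ref{lem:degree-regulation} is inapplicable there) while some deeper node has enormously many; and even when the root does have many children, each child may itself be huge, so keeping or dropping a single child moves the total by far more than the width of the target interval --- you can jump clean over the target scale. The condition you impose (``the floor $\chi_{d-1,p}(\lambda)$ exceeds $\vartheta_{d,p}(\lambda-1)$'') does not repair either problem.

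The paper's proof supplies exactly the missing mechanism. Lemma~\ref{lemma:useful-lemma-1} locates a node $z$ of \emph{minimal} height $h$ having more than $\chi_{h,p}(m)$ children, so that every proper descendant of $z$ has few children and hence every child subtree of $z$ has size at most $\xi_{h-1,p}(m)\leq\xi_{d-1,p}(m)$. Lemma~\ref{lem:full-LS-helper} then deletes exactly \emph{one} such child subtree, changing $|\tree{t}|$ by at most $\xi_{d-1,p}(m)$ while preserving $\chaineq_m$ (via Lemma~\ref{lem:degree-regulation} applied at $z$ and Lemma~\ref{lemma:simple-FV-comp} to propagate up to the root). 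Since $\vartheta_{d,p}=\xi_{d,p}$ satisfies inequality~(\ref{ineq:vartheta}), each scale interval of $\vartheta_{d,p}$ is at least $\xi_{d-1,p}(\lambda)$ wide, so iterated single deletions cannot skip over $\scale{\lambda}{\vartheta_{d,p}}$; transitivity of $\chaineq_\lambda$ (Lemma~\ref{lem:chaineq-props}) chains the steps. The upward direction symmetrically adjoins one isomorphic copy of a small child of $z$ per step. Finally, the downward case with $\eta$ infinite needs a separate preliminary argument (the paper's $(\dagger)$): one must prune \emph{every} node with too many children, wherever it sits in the tree, to first reach a finite tree $\chaineq_\lambda$-below $\tree{t}$ of size above the target scale; root-level subfamily selection does not achieve this. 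Without the one-small-step-at-a-time mechanism and this infinite-case preprocessing, your scale-landing claim does not follow.
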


\begin{proof}[Proof of Theorem~\ref{thm:shrub-depth-full-LS-finite}]
Let $\vartheta_{d, p}$ be the $(d+1)$-fold exponential function given by Theorem~\ref{thm:full-LS-finite}.
Let $q_0 \in \mathbb{N}$ be such that $q_0 = \max\{\max\{\rank{\Xi_{S, p}} $ $ \mid S \subseteq [r]^2 \times [p]\}, \rank{\Omega_{r, p, d}}\}$ where $\Omega_{r, p, d}$ is as defined in Section~\ref{section:prelims}. Define $\Upsilon_{r, p, d}$ inductively as follows. Firstly, $\Upsilon_{r, p, d}(1) = \vartheta_{d, p}(q_0+1)$. Suppose $\Upsilon_{r, p, d}(i)$ has been defined for $i \ge 1$. Let $j$ be the smallest number such that $\vartheta_{d, p}(j) \ge (d+1) \cdot \Upsilon_{r, p, d}(i)$. Then $\Upsilon_{r, p, d}(i+1) = \vartheta_{d, p}(j+1)$. We verify that $\Upsilon_{r, p, d}$ is increasing and $(d+1)$-fold exponential, and that $\Upsilon_{r, p, d}(i) \ge \vartheta_{d, p}(i)$ for all $i \ge 1$. We now show that $\Upsilon_{r, p, d}$ is as desired. Let $G \in \tm{r, p}{d}$ be such that $|G| \in \scale{\eta}{\Upsilon_{r, p, d}}$. Let $(\tree{s}, S) \in \treemodel{r, p}{d}$ be a tree model for $G$; then $G = \Xi_{S, p}(\tree{s})$. Since the number of leaves of $\tree{s}$ equals $|G|$, and the height of $\tree{s}$ is $d$, we see that $|G| \leq |\tree{s}| \leq d \cdot |G| + 1 \leq (d+1) \cdot |G|$. Let $\eta'$ be such that $|\tree{s}| \in \scale{\eta'}{\vartheta_{d, p}}$. We now show Part~(\ref{thm:shrub-depth-full-LS-finite-downward}) of the theorem. Part~(\ref{thm:shrub-depth-full-LS-finite-upward}) can be done similarly.

Let $\lambda \leq \eta$ be finite. Assume $\lambda < \eta$ since if $\lambda = \eta$, we are done by taking $H = G$. Let $\lambda'$ be the smallest number such that $\vartheta_{d, p}(\lambda') \ge (d+1) \cdot \Upsilon_{r, d, p}(\lambda)$. Then $\lambda'$ is finite, and $\lambda' - q_0 \ge \lambda$ because $\Upsilon_{r, p, d}(i) \ge \vartheta_{d, p}(i)$, and $\vartheta_{d, p}$ and $\Upsilon_{r, d, p}$ are both increasing. Further $\lambda' < \eta'$ -- this is because $|\tree{s}| \ge |G| \ge \Upsilon_{r, p, d}(\eta) \ge  \Upsilon_{r, d, p}(\lambda+1) = \vartheta_{d, p}(\lambda' + 1) $
the last of these equalities by definition of $\Upsilon_{r, d, p}$. Then by Theorem~\ref{thm:full-LS-finite} and Lemma~\ref{lem:chaineq-props}, there exists a leaf-hereditary subtree $\tree{s}'$ of $\tree{s}$ such that (i) $|\tree{s}'| \in \scale{\lambda'}{\vartheta_{d, p}}$ and (ii) $\tree{s}' \mequiv{\lambda'} \tree{s}$. Now since $\vartheta_{d, p}(\lambda') \ge \Upsilon_{r, d, p}(\lambda) \ge \Upsilon_{r, d, p}(1) = \vartheta_{d, p}(q_0 + 1)$, we have $\lambda' \ge q_0 + 1$. Then $\tree{s}'$ models $\Omega_{r, d, p}$ so that $(\tree{s}', S) \in \treemodel{r, p}{d}$. Let $H = \Xi_S(\tree{s}')$. Then  since $\tree{s}'$ is a leaf-hereditary subtree of $\tree{s}$, we have $H \subseteq G$. Further since $ \lambda' \ge q_0 + \lambda$ and $\tree{s}' \mequiv{\lambda'} \tree{s}$, we get  that $H \mequiv{\lambda} G$  (cf. Section~\ref{section:shrub-depth}). Finally, we observe that $|H| \leq \tree{s}' \leq (d+1) \cdot |H|$. Since $|\tree{s}'| \leq \vartheta_{d, p}(\lambda'+1) = \Upsilon_{r, d, p}(\lambda+1)$ and since $(d+1) \cdot \Upsilon_{r, d, p}(\lambda) \leq \vartheta_{d, p}(\lambda') < |\tree{s}'|$, we get that indeed $|H| \in \scale{\lambda}{\Upsilon_{r, p, d}}$.
\end{proof}

We prove Theorem~\ref{thm:full-LS-finite} in the remainder of this section. We claim that the function $\vartheta_{d, p}$ in the statement of Theorem~\ref{thm:full-LS-finite} can be taken to be the function $\vartheta_{d, p}(\lambda) = \xi_{d, p}(\lambda)$ for all finite cardinals $\lambda \ge 1$. We prove this below. A key property available to us with this choice is that 
\begin{equation}\label{ineq:vartheta}
  \begin{aligned}
  \vartheta_{d, p}(\lambda) &\ge \vartheta_{d, p}(\lambda-1) + \xi_{d-1, p}(\lambda - 1) 
  \end{aligned}
\end{equation}
This is because $\xi_{d-1, p}(\lambda - 1) + \xi_{d, p}(\lambda-1) \leq 2 \cdot \xi_{d, p}(\lambda - 1)$ and the latter now is easily seen to be $\leq \xi_{d, p}(\lambda)$. We now prove the two parts of Theorem~\ref{thm:full-LS-finite} separately below. We will need the following 
two lemmas.
\begin{lemma}\label{lemma:simple-FV-comp}
Let $d, p, m \in \mathbb{N}$ be given. Let $\tree{t} \in \cl{T}_{d, p}$ be a given tree and $z$ be a node of $\tree{t}$ at height $h$.  Let $\tree{t}'$ be the tree obtained by replacing the (unrooted version of the) subtree $\tree{t}_z$ of $\tree{t}$ with (the unrooted version of) a tree $\tree{s} \in \cl{T}_{h, p}$. If $\tree{s} \chaineq_{m} \tree{t}_z$, then $\tree{t}' \chaineq_{m} \tree{t}$.
\end{lemma}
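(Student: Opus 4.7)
The plan is to induct on the depth $k$ of the node $z$ in $\tree{t}$, that is, the distance from $\root{\tree{t}}$ to $z$. The statement is purely structural and local, so this induction is natural: replacing a subtree far below the root only affects the $\chaineq_m$ conditions along the path from that subtree up to the root, and at each level of the path what changes is exactly one child subtree of the current root.

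For the base case $k = 0$, the node $z$ is the root of $\tree{t}$, so $\tree{t}_z = \tree{t}$ and $\tree{t}' = \tree{s}$, and $\tree{t}' \chaineq_m \tree{t}$ is just the hypothesis. For the inductive step, let $v$ be the unique child of $\root{\tree{t}}$ on the path from $\root{\tree{t}}$ to $z$, and let $\tree{t}_v'$ be the subtree obtained from $\tree{t}_v$ by replacing $\tree{t}_z$ with $\tree{s}$. Since $z$ has depth $k-1$ in $\tree{t}_v$, the induction hypothesis gives $\tree{t}_v' \chaineq_m \tree{t}_v$.

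Now I would verify the three defining clauses of $\tree{t}' \chaineq_m \tree{t}$ from Definition~\ref{defn:chaineq}. Clause (i) is immediate because neither the root nor its label is touched. For clause (iii), the child subtrees of $\root{\tree{t}'}$ other than $\tree{t}_v'$ coincide with the corresponding child subtrees of $\root{\tree{t}}$, so reflexivity of $\chaineq_m$ (Lemma~\ref{lem:chaineq-props}) takes care of them, while $\tree{t}_v' \chaineq_m \tree{t}_v$ handles the remaining one. The one step that deserves a moment's thought is clause (ii): we need the $\mso[m]$-type indicators $\ind{m}{\mc{F}'_1}$ and $\ind{m}{\mc{F}_1}$ of the children forests of $\tree{t}'$ and $\tree{t}$ to be $\fo[q]$-equivalent, where $q = \rho_{k, p}(m)$. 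Index both forests by the same index set $I$ (the set of children of $\root{\tree{t}}$); then the two families differ only at the index $v$, where $\tree{t}_v$ is replaced by $\tree{t}_v'$. By Lemma~\ref{lem:chaineq-props}(3.ii), $\tree{t}_v' \chaineq_m \tree{t}_v$ implies $\tree{t}_v' \mequiv{m} \tree{t}_v$, so both trees lie in the same class of $\mdelta{m}{\cl{T}_{k, p}}$. Hence the two type indicators are identical as $\tau_{m, \cl{T}_{k, p}}$-structures on $I$, which is far stronger than $\fo[q]$-equivalence.

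I don't expect any serious obstacle here; the whole argument hinges on the observation that $\chaineq_m$ is a local, level-by-level relation, together with the fact that the type indicator is insensitive to replacing one tree by an $\mso[m]$-equivalent one (which is exactly what $\chaineq_m$ guarantees at each recursive step). No new composition theorem or counting argument is needed; the inductive step simply transports the guarantee one level up.
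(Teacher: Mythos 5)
Your proposal is correct and essentially identical to the paper's proof: the paper establishes the special case where $z$ is a child of the root (using that $\tree{s} \mequiv{m} \tree{t}_z$ makes the type indicators isomorphic, and reflexivity of $\chaineq_m$ for the untouched children) and then propagates it up the path from $z$ to the root, which is exactly your induction on the depth of $z$. The only cosmetic difference is that you phrase the "repeated application" as an explicit induction; the content of each step is the same.
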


\begin{proof}
We show that the statement is true for the special case when $z$ is a child of $\root{\tree{t}}$. The general statement follows by repeated applications of this special case to the subtrees rooted at the nodes appearing along the path from the parent of $z$ to the root of $\tree{t}$, and the corresponding subtrees rooted at the nodes appearing along the path from the parent of $\root{\tree{s}}$ to the root of $\tree{t}'$ (here we are seeing (the unrooted version of)  $\tree{s}$ as a subtree of $\tree{t}'$).

Consider $\tree{t}$; let $\forest{f}$ be the forest of rooted trees obtained by deleting the root of $\tree{t}$; one of these rooted trees is $\tree{t}_z$. Let $\forest{f}'$ be the forest obtained by replacing $\tree{t}_z$ with $\tree{s}$, and keeping the other trees of $\forest{f}$ intact. Let $\mc{F}$ and $\mc{F}'$ resp. be the families of the trees in the forests $\forest{f}$ and $\forest{f}'$, and let $\str{A}$ and $\str{A}'$ be the $\mso[m]$-type indicators for $\mc{F}$ and $\mc{F}'$. Since $\tree{s} \chaineq_m \tree{t}_z$, it follows that $\tree{s} \mequiv{m} \tree{t}_z$, so that $\str{A} \cong \str{A}'$ and hence $\str{A}' \fequiv{q} \str{A}$ for any $q$. Also for every tree $\tree{x} \in \mc{F}'$, choose $\tree{y} \in \mc{F}$ as: $\tree{y} = \tree{x}$ if $\tree{x} \neq \tree{s}$ and $\tree{y} = \tree{t}_z$ otherwise; then we see that $\tree{x} \chaineq_m \tree{y}$ (since $\chaineq_m$ is reflexive; see Lemma~\ref{lem:chaineq-props}). Further the heights of $\tree{x}$ and $\tree{y}$ are the same, so that $\tree{t}$ and $\tree{t}'$ have the same height. Then all the conditions for $\tree{t}' \chaineq_m \tree{t}$ being true have been met.
\end{proof}

\begin{lemma}\label{lemma:useful-lemma-1}
Let $d, p, m \in \mathbb{N}$ be given. For each tree $\tree{t} \in \cl{T}_{d, p}$, if $|\tree{t}| > \xi_{d, p}(m)$, then there is some node $z$ of $\tree{t}$ at height say $h$ in $\tree{t}$, such that (i) the number of children of $z$ in $\tree{t}$ is $> \chi_{h, p}(m)$, and (ii) for any node $y \in \tree{t}_z$ such that $y \neq z$, the number of children of $y$ in $\tree{t}$ (and hence $\tree{t}_z$) is at most $\chi_{h_y, p}(m)$ where $h_y$ is the height of $y$ in $\tree{t}$. (Consequently, $|\tree{t}_y| \leq \xi_{h_y, p}(m)$).
\end{lemma}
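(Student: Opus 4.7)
\textbf{Proof proposal for Lemma~\ref{lemma:useful-lemma-1}.}
The plan is to let $z$ be a node of \emph{minimum height} in $\tree{t}$ whose number of children exceeds $\chi_{h, p}(m)$ (with $h$ being its height), and then verify the two conditions by combining the construction with a size bound. Existence of such a $z$ will come from a contrapositive argument using the size hypothesis $|\tree{t}| > \xi_{d, p}(m)$.

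The key intermediate fact I would establish first is a recursive size bound: if $v$ is a node of $\tree{t}$ at height $h_v$ such that every node $w$ in $\tree{t}_v$ has at most $\chi_{h_w, p}(m)$ children (where $h_w$ denotes the height of $w$ in $\tree{t}$), then $|\tree{t}_v| \leq \xi_{h_v, p}(m)$. This is a straightforward induction on $h_v$: the base case $h_v = 0$ gives $|\tree{t}_v| = 1 \leq \xi_{0, p}(m)$, and in the inductive step each child $w$ of $v$ satisfies $|\tree{t}_w| \leq \xi_{h_w, p}(m) \leq \xi_{h_v - 1, p}(m)$, so
\[
|\tree{t}_v| \;\leq\; 1 + \chi_{h_v, p}(m) \cdot \xi_{h_v - 1, p}(m) \;\leq\; \xi_{h_v, p}(m).
\]
The final inequality is a routine tower-function calculation that I would defer; it holds comfortably because the definitions in~(\ref{eqn:quantities}) place an extra factor $(h_v + 1)$ inside the top level of the tower for $\xi_{h_v, p}(m)$ relative to $\chi_{h_v, p}(m)$, while $\xi_{h_v - 1, p}(m)$ only contributes at one tower level lower and is therefore absorbed with room to spare.

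With this bound in hand, the proof of the lemma is quick. Since $|\tree{t}| > \xi_{d, p}(m)$, the contrapositive of the inductive bound applied at the root forces some node $v$ of $\tree{t}$ at height $h_v$ to have strictly more than $\chi_{h_v, p}(m)$ children. Among all such nodes, pick $z$ of minimum height $h$; condition~(i) holds by construction. For condition~(ii), any node $y \in \tree{t}_z$ with $y \neq z$ has $h_y < h$, and minimality of $z$ rules out the possibility that $y$ has more than $\chi_{h_y, p}(m)$ children in $\tree{t}$ (equivalently, in $\tree{t}_z$). Finally, the consequence $|\tree{t}_y| \leq \xi_{h_y, p}(m)$ follows by applying the recursive size bound to $\tree{t}_y$: its hypothesis holds because every descendant of $y$ lies in $\tree{t}_z$ and is distinct from $z$, so condition~(ii) supplies exactly the required per-height children bound.

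The only nontrivial step is the tower-arithmetic inequality $1 + \chi_{h, p}(m) \cdot \xi_{h-1, p}(m) \leq \xi_{h, p}(m)$, which I expect to be the main obstacle in writing a careful proof, though it is purely mechanical given the definitions~(\ref{eqn:quantities}). Everything else is a clean minimality-plus-induction argument on the tree structure, with no appeal to logic or to the Feferman--Vaught machinery.
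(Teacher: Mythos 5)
Your proposal is correct and follows essentially the same route as the paper: establish (by induction/explicit summation) that a subtree all of whose nodes respect the per-height children bound $\chi_{h_w,p}(m)$ has size at most $\xi_{h_v,p}(m)$, use the contrapositive at the root to find a violating node, take $z$ of minimum height, and derive condition (ii) and the size consequence from that minimality. The only difference is presentational — you package the size estimate as a recursive bound where the paper writes out the sum $1 + \chi_{d,p}(m) + \chi_{d,p}(m)\chi_{d-1,p}(m) + \cdots \leq (\chi_{d,p}(m))^{d+1} \leq \xi_{d,p}(m)$ — and the deferred tower-arithmetic inequality is equally routine in both versions.
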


\begin{proof}
Let $\tree{t}$ be as in the statement of the lemma. Suppose there is no $z \in \tree{t}$ such that the number of children of $z$ in $\tree{t}$ is $> \chi_{h, p}(m)$ where $h$ is the height of $z$ in $\tree{t}$. Then $|\tree{t}| \leq 1 + \chi_{d, p}(m) + \chi_{d, p}(m) \cdot \chi_{d-1, p}(m) + \ldots + \chi_{d, p}(m) \cdot \chi_{d-1, p}(m) \cdot \ldots \cdot \chi_{1, p}(m) \leq (\chi_{d, p}(m))^{d+1} \leq \xi_{d, p}(m)$, contradicting the premise about size of $\tree{t}$. Then there exists a node $z$ of smallest height $h$ such that the number of children of $z$ is $> \chi_{h, p}(m)$; so that for every node $y \neq z$ in the subtree $\tree{t}_z$ of $\tree{t}$ rooted at $z$, it is the case that the number of children of $y$ in $\tree{t}$ (and hence in $\tree{t}_z$) is at most $\chi_{h_y, p}(m)$ where $h_y$ denotes the height of $y$ in $\tree{t}$. By a similar calculation as above, it follows that $|\tree{t}_y| \leq \xi_{h_y, p}(m)$ for all $y \neq z$ in $\tree{t}_z$.
\end{proof}

\subsection{The downward direction}\label{subsection:ELS-downward}
We first show the downward direction of Theorem~\ref{thm:full-LS-finite}. Our proof is along similar lines as the proof of~\cite[Proposition 6.3]{abhisekh-csl17-arxiv}. We first need the following result akin to~\cite[Lemma 6.2]{abhisekh-csl17-arxiv}.

\begin{lemma}\label{lem:full-LS-helper}
Let $d, p, m  \in \mathbb{N}$ be given. Then for every tree $\tree{t} \in \cl{T}_{d, p}$ such that $|\tree{t}| > \xi_{d, p}(m)$, there exists a tree $\tree{s} \in \cl{T}_{d, p}$ such that (i) $|\tree{t}| - |\tree{s}| \leq \xi_{d-1, p}(m)$, and (ii) $\tree{s} \chaineq_{m} \tree{t}$.
\end{lemma}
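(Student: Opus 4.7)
The plan is to locate a node of sufficiently high child-degree, remove exactly one of its child subtrees, and verify that the $\chaineq_m$ relation survives the surgery by invoking the degree regulation lemma. Since $|\tree{t}| > \xi_{d, p}(m)$, Lemma~\ref{lemma:useful-lemma-1} produces a node $z$ at some height $h$ (necessarily $h \ge 1$, because leaves have no children) such that $z$ has more than $\chi_{h, p}(m)$ children in $\tree{t}$, while every proper descendant $y$ of $z$ satisfies $|\tree{t}_y| \le \xi_{h_y, p}(m) \le \xi_{h-1, p}(m) \le \xi_{d-1, p}(m)$.

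Let $\mc{F} = (\tree{s}_i)_{i \in I}$ be the family of child subtrees of $z$ in $\tree{t}$; each $\tree{s}_i \in \cl{T}_{h-1, p}$. Since the subtree $\tree{t}_z$ has height exactly $h$, some element $i^* \in I$ indexes a child subtree $\tree{s}_{i^*}$ of height $h-1$; pin $i^*$ as the designated element. Apply Lemma~\ref{lem:degree-regulation}(\ref{lem:degree-regulation-downward}) to $\mc{F}$ with its parameter $d$ replaced by $h-1$ and with $\lambda = |I| - 1$. The hypothesis $|I| > \chi_{h-1, p}(m)$ holds because $\chi$ is increasing in its first argument, and the constraint $\lambda \ge \chi_{h-1, p}(m)$ reduces to $|I| \ge \chi_{h-1, p}(m) + 1$, which follows from $|I| > \chi_{h, p}(m)$. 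The lemma yields $I' \subseteq I$ containing $i^*$, of size $|I| - 1$, such that if $\mc{F}' = (\tree{s}_i)_{i \in I'}$ then $\ind{m}{\mc{F}'} \fequiv{\rho_{h-1, p}(m)} \ind{m}{\mc{F}}$.

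Let $\{i_0\} = I \setminus I'$, and let $\tree{s}_z$ be the subtree of $\tree{t}_z$ obtained by deleting the single child subtree $\tree{s}_{i_0}$ from $\tree{t}_z$. Because $i^* \in I'$ and $\tree{s}_{i^*}$ has height $h-1$, the tree $\tree{s}_z$ still has height $h$, matching $\tree{t}_z$. I would then check the three clauses of Definition~\ref{defn:chaineq} for $\tree{s}_z \chaineq_m \tree{t}_z$: clause (\ref{defn:chaineq:cond-1}.i) holds trivially as the root of $\tree{s}_z$ is $z$ with its original label; clause (\ref{defn:chaineq:cond-1}.ii) is the $\fo[\rho_{h-1, p}(m)]$-equivalence of the type indicators just produced; and clause (\ref{defn:chaineq:cond-1}.iii) follows because every child subtree of $\root{\tree{s}_z}$ is literally an element of $\mc{F}$, and $\chaineq_m$ is reflexive by Lemma~\ref{lem:chaineq-props}. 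Finally, let $\tree{s}$ be the tree obtained from $\tree{t}$ by replacing the subtree at $z$ with $\tree{s}_z$. Lemma~\ref{lemma:simple-FV-comp} lifts $\tree{s}_z \chaineq_m \tree{t}_z$ to $\tree{s} \chaineq_m \tree{t}$, giving clause (ii) of the lemma, while the size reduction is exactly $|\tree{s}_{i_0}| \le \xi_{h-1, p}(m) \le \xi_{d-1, p}(m)$, giving clause (i).

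The only subtle point I expect to be the main obstacle is ensuring that the surgery does not lower the height of $\tree{t}_z$, since the $\chaineq_m$ relation requires both trees to have the same height. This is precisely why we use the freedom in Lemma~\ref{lem:degree-regulation} to specify the designated element $i^*$, and why we must choose $i^*$ to index a child subtree of maximum height $h-1$; deleting any other single child subtree then keeps $\tree{s}_{i^*}$ in $\mc{F}'$ and preserves the height.
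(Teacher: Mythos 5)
Your proof is correct and follows essentially the same route as the paper: locate the high-degree node $z$ via Lemma~\ref{lemma:useful-lemma-1}, prune one child subtree of $z$ using Lemma~\ref{lem:degree-regulation}(\ref{lem:degree-regulation-downward}) while pinning a maximum-height child $i^*$, and lift $\chaineq_m$ back to the whole tree with Lemma~\ref{lemma:simple-FV-comp}. The only (immaterial) difference is that you instantiate the degree-regulation lemma at height $h-1$, getting exactly the $\fo[\rho_{h-1,p}(m)]$-equivalence the $\chaineq_m$ definition asks for, whereas the paper invokes it so as to get the slightly stronger $\fo[\rho_{h,p}(m)]$-equivalence.
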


\begin{proof}
Let $\tree{t}$ be as in the statement of the lemma. We first observe by Lemma~\ref{lemma:useful-lemma-1} that there is a node $z$ in $\tree{t}$ at height $h \leq d$ such that (i) the number of children of $z$ in $\tree{t}$ is $> \chi_{h, p}(m)$, and (ii) $|\tree{t}_y| \leq \xi_{h-1, p}(m)$ for all children $y$ of $z$ in $\tree{t}$.
Consider $\tree{t}_z$ (which has height equal to $h$); let $\forest{f}$ be the forest of rooted trees obtained by deleting the root of $\tree{t}_z$. Then $\forest{f} = \bigcupdot_{i \in I} \tree{x}_i$ where $\tree{x}_i \in \cl{T}_{h-1, p}$ and $|\tree{x}_i| \leq \xi_{h-1, p}(m)$, and $I$ is of size $> \chi_{h, p}(m)$ (and $I$ could be infinite as well). Let $i^* \in I$ be such that $\tree{x}_{i^*}$ has height equal to $h-1$ (there must be such an $i^*$ since the height of $\tree{t}_z$ is $h$). Then by Lemma~\ref{lem:degree-regulation}(\ref{lem:degree-regulation-downward}), there exists a subset $I'$ of $I$ such that (i) $I'$ contains $i^*$, (ii) $|I'| = |I| - 1$, and (iii) if $\str{A}$ and $\str{A}'$  are resp. the $\mso[m]$ type indicators for the families of trees in $\forest{f}$ and $\forest{f}' = \bigcupdot_{i \in I'} \tree{x}_i$, then $\str{A} \fequiv{q} \str{A}'$ for $q = \rho_{h, p}(m) > \rho_{h-1, p}(m)$. Let $I \setminus I' = \{j\}$ and let $\tree{t}'_z$ be the leaf-hereditary subtree of $\tree{t}_z$ obtained by deleting (the unrooted version of) $\tree{x}_j$ from $\tree{t}_z$; then $\forest{f}'$ is the forest of rooted trees obtained by deleting the root of $\tree{t}'_z$. We observe now that $\tree{t}'_z$ and $\tree{t}_z$ have the same height $h$ (since $\forest{f}'$ contains $\tree{x}_{i^*}$) and further that $\tree{t}'_z \chaineq_m \tree{t}_z$ is indeed true. (The first two conditions of the $\chaineq_m$ definition are already shown satisfied above, and since $\mc{F}' \subseteq \mc{F}$, we use the reflexivity of $\chaineq_m$ from Lemma~\ref{lem:chaineq-props} to see that the last condition of the $\chaineq_m$ definition is satisfied as well). Let $\tree{s}$ be the leaf-hereditary subtree of $\tree{t}$ obtained by replacing (the unrooted version of) $\tree{t}_z$ with (the unrooted version of) $\tree{t}'_z$; then by Lemma~\ref{lemma:simple-FV-comp}, we get that $\tree{s} \chaineq_{m} \tree{t}$. We now observe that $|\tree{t}| - |\tree{s}| = |\tree{t}_z| - |\tree{t}'_z| = |\forest{f}| - |\forest{f}'| = |\tree{s}_j| \leq \xi_{h-1, p}(m) \leq \xi_{d-1, p}(m)$. Then $\tree{s}$ is indeed as desired.
\end{proof}

\begin{proof}[Proof of Theorem~\ref{thm:full-LS-finite}(\ref{thm:full-LS-finite-downward})]

Suppose $\tree{t} \in \cl{T}_{d, p}$ is such that $|\tree{t}| \in \scale{\eta}{\vartheta_{d, p}}$. If $\lambda = \eta$, then taking $\tree{t}' = \tree{t}$ we are done; so assume $\lambda < \eta$.  To prove the theorem, we consider two cases depending on whether $\eta$ is finite or infinite. 

If $\eta$ is finite, then it suffices to show the existence of the desired tree $\tree{t}'$ for $\lambda = \eta - 1$. This is because for all $\mu < \eta$, the  $\chaineq_{\mu}$ relation is transitive and because $\tree{t}' \chaineq_{\eta-1} \tree{t}$ implies $\tree{t}' \chaineq_{\mu} \tree{t}$  (see Lemma~\ref{lem:chaineq-props}). So let $\lambda = \eta -1$. Now since $|\tree{t}| \in \scale{\eta}{\vartheta_{d, p}}$, we have that $|\tree{t}| > \vartheta_{d, p}(\eta) = \xi_{d, p}(\eta) > \xi_{d, p}(\lambda)$. Then by Lemma~\ref{lem:full-LS-helper}, we get that there exists $\tree{s} \in \cl{T}_{d, p}$ such that (i) $|\tree{t}| - |\tree{s}| \leq \xi_{d-1, p}(\lambda)$, and (ii) $\tree{s} \chaineq_{\lambda} \tree{t}$. Now we observe that $|\scale{\lambda}{\vartheta_{d, p}}| = \vartheta_{d, p}(\lambda+1) - \vartheta_{d, p}(\lambda) \ge \xi_{d-1, p}(\lambda)$ by the inequality~(\ref{ineq:vartheta}). Then either $|\tree{s}| \in \scale{\lambda}{\vartheta_{d, p}}$ or $|\tree{s}| \in \scale{\eta}{\vartheta_{d, p}}$. In the former case,  we are done  by taking $\tree{t}' = \tree{s}$ and in the  latter case, we apply Lemma~\ref{lem:full-LS-helper} recursively to $\tree{s}$ and again utilize the transitivity of $\chaineq_{\lambda}$, until we eventually get the desired tree $\tree{t}'$. 

If $\eta$ is infinite, then define $\nu = \vartheta_{d, p}(\lambda+1) + 1$. Observe that $\chi_{d-1, p}(\lambda) < \nu < \eta$. We show below the following statement:

($\dagger$) There exists a finite tree $\tree{t}'' \in \cl{T}_{d, p}$ such that (i) $\nu \leq |\tree{t}''|$, and (ii) $\tree{t}'' \chaineq_{\lambda} \tree{t}$. 

Assuming ($\dagger$), we see that $|\tree{t}''| \in \scale{\eta'}{\vartheta_{d, p}}$ for some finite $\eta' > \lambda$. Then by the previous case above, there exists a tree $\tree{t}' \in \cl{T}_{d, p}$ such that (i) $|\tree{t}'| \in \scale{\lambda}{\vartheta_{d, p}}$, and (ii) $\tree{t}' \chaineq_{\lambda} \tree{t}''$. Then $\tree{t}'$ is as desired since $\chaineq_{\lambda}$ is transitive. We now show ($\dagger$) to complete the proof.

Consider $\tree{t}$; it is such that $|\tree{t}| = \eta$ where $\eta$ is infinite (since  $\scale{\eta}{\vartheta_{d, p}} = \{\eta\}$ for such $\eta$). Then the set $X = \{z \in \tree{t} \mid z~\mbox{has}~ > \nu ~\mbox{many children in}~\tree{t}\}$ is non-empty. (For if not, then every node in $\tree{t}$ has at most $\nu$ children, implying $\tree{t}$ is finite, since it has height $\leq d$.) We first show the existence of $\tree{t}''$ when $\root{\tree{t}} \in X$. Let $\forest{f}$ be the forest of rooted trees obtained by deleting the root of $\tree{t}$; then $\forest{f} = \bigcupdot_{i \in I} \tree{x}_i$ where $\tree{x}_i \in \cl{T}_{d-1, p}$ and $|I| > \nu$. By Corollary~\ref{cor:DLS-chaineq-form}, for each $i \in I$, there exists a tree $\tree{x}'_i \in \cl{T}_{d-1, p}$ such that (i) $|\tree{x}'_i| \leq \zeta_{d-1, p}(\lambda, d-1)$, and (ii) $\tree{x}'_i \chaineq_{\lambda} \tree{x}_i$, and hence $\tree{x}'_i \mequiv{\lambda} \tree{x}_i$ (by Lemma~\ref{lem:chaineq-props}). Then consider the forest $\forest{f}' = \bigcupdot_{i \in I} \tree{x}'_i$. We see that the $\mso[\lambda]$ type indicators $\str{A}$ and $\str{A}'$ resp. of the families of trees in forests $\forest{f}$ and $\forest{f}'$, are isomorphic, and hence $\fo[q]$-equivalent for any $q$. Let the height of $\tree{t}$ be $h \leq d$, and let $i^* \in I$ be such that $\tree{x}_{i^*}$, and hence $\tree{x}'_{i^*}$, has height $h-1$. Since $\nu > \chi_{d-1, p}(\lambda)$, we have by Lemma~\ref{lem:degree-regulation}(\ref{lem:degree-regulation-downward}), that there exists a subset $I'' \subseteq I$ of size exactly $\nu$ and containing $i^*$, such that if $\forest{f}'' = \bigcupdot_{i \in I''} \tree{x}'_i$, then the $\mso[\lambda]$ type indicator $\str{A}''$ of the family of trees in $\forest{f}''$ satisfies $\str{A}'' \fequiv{q} \str{A}' = \str{A}$ for $q = \rho_{d-1, p}(\lambda) \ge \rho_{h-1, p}(\lambda)$.
Then if $\tree{t}''$ is the leaf-hereditary subtree of $\tree{t}$ obtained by deleting (the unrooted versions of) all child subtrees $\tree{x}_i$ for $i \notin I''$, and replacing each $\tree{x}_i$ for $i \in I''$ with $\tree{x}'_i$ (the replacements being of the unrooted versions of the trees), then we see that  $\tree{t}'' \chaineq_{\lambda} \tree{t}$ is true. Further, observe that $\nu \leq |\tree{t}''|$, and that $|\tree{t}''| \leq 1 + \sum_{i \in I''} |\tree{x}_i'| \leq 1 + \nu \cdot \zeta_{d-1, p}(\lambda, d-1) < \omega$, so $\tree{t}''$ is finite. Then $\tree{t}''$ is indeed as desired.

We now consider the case when the $\root{\tree{t}} \notin X$. Let $X' = \{ z \in X \mid z~\mbox{has no ancestor}~z'~\mbox{in}~\tree{t}$ $~\mbox{such that}~z' \in X\}$. By the preceding part, we have for each $z \in X'$, that if $\tree{t}_z$ is the subtree of $\tree{t}$ rooted at $z$, then there is a finite tree $\tree{t}''_z \in \cl{T}_{d-1, p}$ such that (i) $\nu \leq |\tree{t}''_z|`$, and (ii) $\tree{t}''_z \chaineq_{\lambda} \tree{t}_z$. Let $\tree{t}''$ be the leaf-hereditary subtree of $\tree{t}$ obtained by replacing (the unrooted version of) the subtree $\tree{t}_z$ in $\tree{t}$ with (the unrooted version of)  $\tree{t}''_z$ for all $z \in X'$. It is immediate that $\nu \leq |\tree{t}''|$. Further we observe that every node $y$ of $\tree{t}''$ has only finitely many children. This is clear if $y \in \tree{t}''_z$ for some $z \in X'$, since $\tree{t}''_z$ is itself finite. Else, consider a node $y$ of $\tree{t}''$ that is not an element of $\tree{t}''_z$ for any $z \in X'$. If $y$ does not have finitely many children, then $y \in X$ and therefore has an ancestor $y'$ (possibly itself) that is in $X'$. Then $y \in \tree{t}''_{y'}$ (since $y \in \tree{t}''$) contradicting the assumption above about $y$. Thus every node of $\tree{t}''$ has finitely many children; then  $\tree{t}''$ is finite since it has height $\leq d$. Further since $\tree{t}''_z \chaineq_{\lambda} \tree{t}_z$ for all $z \in X'$, we get by repeated applications of Lemma~\ref{lemma:simple-FV-comp} and the transitivity of $\chaineq_\lambda$ from Lemma~\ref{lem:chaineq-props}, that $\tree{t}'' \chaineq_{\lambda} \tree{t}$, completing the proof of ($\dagger$).
\end{proof}

\subsection{The upward direction}\label{subsection:ELS-upward}
We now show the upward direction of Theorem~\ref{thm:full-LS-finite}. \begin{proof}[Proof of Theorem~\ref{thm:full-LS-finite}(\ref{thm:full-LS-finite-upward})]

If $\lambda = \eta$, then taking $\tree{t}' = \tree{t}$ we are done. So assume $\lambda > \eta$.

% Consider the function $\vartheta_{d, p}: \mathbb{N} \times \mathbb{N} \rightarrow \mathbb{N}$ defined as: $\vartheta_{d, p}(m, 1) = \tower{d, 2 \cdot \const{d} \cdot \log (d+1) \cdot m \cdot (m + \log p)}$ where $\const{d}$ is the suitably large constant from Remark~\ref{remark:hidden-constant}, and $\vartheta_{d, p}(m, \lambda) = (\lambda - 1) \cdot \vartheta_{d-1, p}(m, 1) + \vartheta_{d, p}(m, 1)$ for all $\lambda > 1$ with $\lambda$ finite.  We show now that $\vartheta_{d, p}$ is indeed as desired. Clearly $\vartheta_{d, p}(m, \cdot)$ is a scale function. 
%(So $\vartheta_{d, p}(m, \lambda) = \lambda$ for $\lambda$ infinite.)

Consider a tree $\tree{t} \in \cl{T}_{d, p}$ and suppose $|\tree{t}| \in \scale{\eta}{\vartheta_{d, p}}$ for some finite $\eta \ge 1$. Then $|\tree{t}| > \vartheta_{d, p}(\eta) = \xi_{d, p}(\eta)$ and hence by Lemma~\ref{lemma:useful-lemma-1}, there is some node $z$ of $\tree{t}$ at height $h \leq d$ such that (i) the number of children of $z$ in $\tree{t}$ is $> \chi_{h, p}(\eta)$, and (ii) $|\tree{t}_y| \leq \xi_{h-1, p}(\eta)$ for each child $y$ of $z$.
 
Let $\lambda > \eta$ be a cardinal. Consider the subtree $\tree{t}_{z}$ of $\tree{t}$ rooted at $z$ where $z$ is as above. We show below the following statement:

($\ddagger$) There exists a tree $\tree{t}'_{z} \in \cl{T}_{h, p}$ of which $\tree{t}_z$ is a leaf-hereditary subtree such that (i) if $\lambda$ is finite, then $|\tree{t}'_z| - |\tree{t}_z| \leq \xi_{d-1, p}(\eta)$, and if $\lambda$ is infinite, then $|\tree{t}'_z| - |\tree{t}_z| =  \lambda$, and (ii) $\tree{t}_z \chaineq_{\eta} \tree{t}'_z$. 

Then if $\tree{t}''$ is the tree obtained by replacing the subtree $\tree{t}_z$ in $\tree{t}$ with $\tree{t}'_z$ (the replacements being of the unrooted versions of the mentioned trees), then by Lemma~\ref{lemma:simple-FV-comp}, we have that $\tree{t} \chaineq_{\eta} \tree{t}''$. We now examine $|\tree{t}''|$; observe that  $|\tree{t}''| = |\tree{t}'_z| - |\tree{t}_z| + |\tree{t}|$. We have two cases as below:
\begin{enumerate}
    \item If $\lambda$ is finite, then we observe firstly that it suffices to show the theorem for simply $\lambda = \eta + 1$; this is once again due to the transitivity of $\chaineq_{\eta}$ and because $\chaineq_{\mu}$ implies  $\chaineq_{\eta}$ for all $\mu \ge \eta$ (Lemma~\ref{lem:chaineq-props}). So let $\lambda = \eta + 1$. Then by ($\ddagger$), we have $|\tree{t}''| - |\tree{t}| = |\tree{t}'_z| - |\tree{t}_z| \leq \xi_{d-1, p}(\eta)$. Now observe that by inequality~(\ref{ineq:vartheta}), we have $|\scale{\lambda}{\vartheta_{d, p}}| = \vartheta_{d, p}(\lambda + 1) - \vartheta_{d, p}(\lambda) \ge \xi_{d-1, p}(\lambda) \ge \xi_{d-1, p}(\eta)$. Then either $\tree{t}'' \in \scale{\lambda}{\vartheta_{d, p}}$ or $\tree{t}'' \in \scale{\eta}{\vartheta_{d, p}}$. In the former case, we are done by taking $\tree{t}' = \tree{t}''$, and in the latter case, we repeat all of the above arguments with $\tree{t}''$ in place of $\tree{t}$ recursively to eventually obtain $\tree{t}'$ such that $|\tree{t}'| \in \scale{\lambda}{\vartheta_{d, p}}$. The transitivity of $\chaineq_\eta$ again ensures that $\tree{t} \chaineq_\eta \tree{t}'$, as desired. 
    \item If $\lambda$ is infinite, then by ($\ddagger$), $|\tree{t}''| = |\tree{t}| + \lambda = \lambda$, since $\tree{t}$ is finite (as $|\tree{t}| \in \scale{\eta}{\vartheta_{d, p}}$). We are then done by taking $\tree{t}' = \tree{t}''$.
\end{enumerate}

We now show ($\ddagger$) to complete the proof.

Let $\forest{f}$ be the forest of rooted trees obtained by removing the root of $\tree{t}_z$. Then $\forest{f} = \bigcupdot_{i \in I} \tree{s}_i$ where $\tree{s}_i \in \cl{T}_{h-1, p}$ and $|\tree{s}_i| \leq  \xi_{h-1, p}(\eta)$ for all $i \in I$, and $|I| > \chi_{h, p}(\eta)$. By Lemma~\ref{lem:degree-regulation}(\ref{lem:degree-regulation-upward}), for each $\mu > |I|$, there exists $I' \supseteq I$ of size $\mu$ and trees $\tree{s}_i \in \cl{T}_{h-1, p}$ for $i \in I' \setminus I$ such that if $\forest{f}' = \bigcupdot_{i \in I'} \tree{s}_i$, and $\str{A}$ and $\str{A}'$ are resp. the $\mso[m]$ type indicators of $\forest{f}$ and $\forest{f}'$, then $\str{A} \fequiv{q} \str{A}'$ where $q = \rho_{h, p}(\eta) > \rho_{h-1, p}(\eta)$. Further, for all $i \in I' \setminus I$, the tree $\tree{s}_i$ is isomorphic to $\tree{s}_j$ for some $j \in I$. Let now $\tree{t}'_{z, \mu}$ be the extension of $\tree{t}_z$ obtained by adding for each $i \in I' \setminus I$, the (unrooted version of) tree $\tree{s}_i$ as a child subtree of the root of $\tree{t}_z$. So that $\tree{t}_z$ is a leaf-hereditary subtree of $\tree{t}'_{z, \mu}$, and $\forest{f}'$ is exactly the forest of rooted trees obtained by deleting the root of $\tree{t}'_{z, \mu}$. Then $\tree{t}_z$ and $\tree{t}'_{z, \mu}$ have the same height $h$, and further one verifies that $\tree{t}_z \chaineq_{\eta} \tree{t}'_{z, \mu}$.

We now choose $\mu$ suitably to show that ($\ddagger$) is true with $\tree{t}'_z = \tree{t}'_{z, \mu}$. If $\lambda$ is finite, then choose $|I'| = \mu = |I| + 1$. Then $|\tree{t}'_z| - |\tree{t}_z| = |\forest{f}'| - |\forest{f}| = \sum_{i \in I' \setminus I} |\tree{s}_i| = |\tree{s}_j| < \xi_{h-1, p}(\eta) \leq \xi_{d-1, p}(\eta)$. If $\lambda$ is infinite, then choose $|I'| = \mu = \lambda$. Then $|\tree{t}'_z| - |\tree{t}_z| = \sum_{i \in I' \setminus I} |\tree{s}_i| = \lambda \cdot |\tree{s}_j| = \lambda$. 
\end{proof}

\section{An $\mso$ compactness theorem}\label{section:compactness}

For $r, p \in \mathbb{N}$, recall that every tree model $(\tree{s}, S) \in \treemodel{r, p}{d}$ can be seen as a tree in $\cl{T}_{d, r\cdot p + 1}$ via the 1-1 function $f$ that maps the pair $(i, j)$ labeling any leaf node of $\tree{s}$ with $1 \leq i \leq r$ and $1 \leq j \leq p$, to the number $(i-1) \cdot p + j \in [r\cdot p]$; and every internal node of $\tree{s}$ is regarded as having the label $r \cdot p + 1$. Let $\theta$ be an $\mso$ sentence over the vocabulary of $\cl{T}_{d, r\cdot p +1}$ and let $\mc{S} \in \mc{P}([r]^2 \times [p])$ be a set of signatures, where $\mc{P}(X)$ denotes the power set of $X$. We denote by $\treemodel{r, p}{d; \theta, \mc{S}}$ the subclass of $\treemodel{r, p}{d}$ consisting of those tree models $(\tree{s}, S)$ such that $S \in \mc{S}$, and $\tree{s}$, seen as a tree in $\cl{T}_{d, r\cdot p+1}$ as described above, models $\theta$. We correspondingly denote by $\tm{r, p}{d; \theta, \mc{S}}$ the subclass of $\tm{r, p}{d}$ consisting of those graphs that have a tree model in $\treemodel{r, p}{d; \theta, \mc{S}}$. So $\tm{r, p}{d} = \tm{r, p}{d, \true, \mc{P}([r]^2 \times [p]})$. We now have the following compactness property for $\tm{r, p}{d; \theta, \mc{S}}$.

\begin{theorem}\label{thm:shrub-depth-compactness}
Let $d, r, p \in \mathbb{N}$ be given. Let $\theta$ be an $\mso$ sentence over the vocabulary of $\cl{T}_{d, r\cdot p +1}$ and let $\mc{S} \in \mc{P}([r]^2 \times [p])$ be a set of signatures. Then for every set  $T = \{\varphi_1, \varphi_2, \ldots, \}$ of $\mso$ sentences over the vocabulary of $\tm{r, p}{d}$, if every finite subset of $T$ is satisfiable over $\tm{r, p}{d; \theta, \mc{S}}$, then $T$ is satisfiable over $\tm{r, p}{d; \theta, \mc{S}}$. Further $T$ has a countable model in $\tm{r, p}{d; \theta, \mc{S}}$.
\end{theorem}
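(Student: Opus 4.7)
The plan is to construct a countable model of $T$ in $\tm{r, p}{d; \theta, \mc{S}}$ as the union of an $\omega$-chain $H_1 \subseteq H_2 \subseteq \cdots$ of finite induced subgraphs in the class, exploiting the pseudo-finiteness afforded by Theorem~\ref{thm:shrub-depth-DLS-and-index-bound}. Since the vocabulary is finite, $T$ is at most countable; enumerate $T = \{\varphi_1, \varphi_2, \ldots\}$ and replace $\varphi_n$ by $\varphi_1 \wedge \cdots \wedge \varphi_n$ so that $\varphi_{n+1} \models \varphi_n$, and set $m_n = \rank{\varphi_n}$. First I would check that Theorem~\ref{thm:shrub-depth-DLS-and-index-bound}(\ref{thm:shrub-depth-DLS-and-index-bound:DLS}) adapts to $\tm{r, p}{d; \theta, \mc{S}}$: given $G$ in this class with tree model $(\tree{s}, S)$, applying Theorem~\ref{thm:DLS-and-index-bound-for-m}(\ref{thm:DLS-and-index-bound-for-m:DLS}) to $\tree{s}$ at rank $\max(m + q, \rank{\theta})$ (where $q$ bounds the interpretation rank) yields a leaf-hereditary subtree $\tree{s}'$ with $\tree{s}' \models \theta$ and $\Xi_{S, p}(\tree{s}') \mequiv{m} G$, so that $\Xi_{S, p}(\tree{s}')$ lies in $\tm{r, p}{d; \theta, \mc{S}}$ via the tree model $(\tree{s}', S)$. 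Consequently each $\varphi_n$ has a finite model in the class, and Theorem~\ref{thm:shrub-depth-DLS-and-index-bound}(\ref{thm:shrub-depth-DLS-and-index-bound-for-m:index}) gives that $\mequiv{m_n}$ has finite index on $\tm{r, p}{d; \theta, \mc{S}}$.

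Next I would extract a coherent sequence of $\mso$-types by König's lemma. Form the tree whose nodes at level $n$ are $\mequiv{m_n}$-classes of $\tm{r, p}{d; \theta, \mc{S}}$ containing a finite model of $\varphi_n$, with an edge from $C$ at level $n$ to any refining class $C'$ at level $n+1$. This tree is finitely branching and non-empty at every level, so König's lemma yields an infinite path $\tau_1 \succeq \tau_2 \succeq \cdots$; by a pigeonhole on the finite set $\mc{P}([r]^2 \times [p])$ I would further pass to a subsequence so that each $\tau_n$ admits a finite witness whose tree model uses a common signature $S$. The main step is then to construct inductively finite $H_n \in \tau_n$ with tree models $(\tree{t}_n, S) \in \treemodel{r, p}{d; \theta, \mc{S}}$ forming a leaf-hereditary chain. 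Given $H_n$ and a finite witness $H^\star \in \tau_{n+1}$ with tree model $(\tree{t}^\star, S)$, the relation $H_n \mequiv{m_n} H^\star$ (both lie in $\tau_n$) translates into an $\mso[m_n + q]$-equivalence between $\tree{t}_n$ and $\tree{t}^\star$; by induction on tree height, using Theorem~\ref{thm:FV-decomposition} at each internal node, one then builds a leaf-hereditary extension $\tree{t}_{n+1}$ of $\tree{t}_n$ that is $\mso[m_{n+1} + q]$-equivalent to $\tree{t}^\star$. At each height, the extension is performed by appending new child subtrees of appropriate $\mso[m_{n+1}]$-types via Lemma~\ref{lem:degree-regulation}(\ref{lem:degree-regulation-upward}) and recursively enlarging existing child subtrees; Lemma~\ref{lemma:simple-FV-comp} then propagates the equivalence up to the root. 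Setting $H_{n+1} = \Xi_{S, p}(\tree{t}_{n+1})$ gives $H_n \subseteq H_{n+1} \in \tau_{n+1}$.

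Finally, let $H = \bigcup_n H_n$ with tree model $(\bigcup_n \tree{t}_n, S)$. The limit tree lies in $\treemodel{r, p}{d; \theta, \mc{S}}$ because $\theta$ is preserved by the composition argument (for all large $n$ the partial tree $\tree{t}_n$ satisfies $\theta$ and the limit is $\mso[\rank{\theta}]$-equivalent to it by the monadic-structure invariance of Lemma~\ref{lemma:monadic-structures}(\ref{lemma:monadic-structures:2})), so $H \in \tm{r, p}{d; \theta, \mc{S}}$ is countable. The same monadic invariance shows that the $\mso[m_n]$-type of $H$ agrees with that of $H_n$: all child subtrees added when passing from $\tree{t}_n$ to $\tree{t}$ have $\mso[m_n]$-types already represented with above-threshold multiplicity in $\tree{t}_n$, which I would secure by an initial blow-up of $\tree{t}_n$ using Lemma~\ref{lem:degree-regulation}(\ref{lem:degree-regulation-upward}) before passing to $\tree{t}_{n+1}$. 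Hence $H \models \varphi_n$ for every $n$. The principal obstacle is the inductive chain construction in the second paragraph, namely leaf-hereditarily extending a finite tree model to one realizing a prescribed larger $\mso$-type with a fixed signature; this is what makes essential use of the monadic vocabulary of the Feferman-Vaught type indicator, which admits arbitrary upward extensions without changing its $\fo$ theory.
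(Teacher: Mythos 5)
Your overall architecture is the right one and is close in spirit to the paper's: reduce to tree models, use the small-model property and the finiteness of the index to set up a K\"onig's-lemma argument, build an increasing chain of finite structures, and take the union. But the two steps you yourself flag as the crux are genuine gaps, and the paper's proof is organized precisely to avoid having to close them in the form you state. First, your chain construction requires an \emph{amalgamation} step: given a finite $\tree{t}_n$ and a finite target $\tree{t}^\star$ realizing the finer type $\tau_{n+1}$, with only $\tree{t}_n \mequiv{m_n+q} \tree{t}^\star$ as a hypothesis, produce a leaf-hereditary extension $\tree{t}_{n+1} \supseteq \tree{t}_n$ with $\tree{t}_{n+1} \mequiv{m_{n+1}+q} \tree{t}^\star$. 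This is not established by Lemma~\ref{lem:degree-regulation}(\ref{lem:degree-regulation-upward}) plus Lemma~\ref{lemma:simple-FV-comp}: those let you pad a tree while preserving a \emph{fixed} rank of equivalence, not steer an existing tree toward a \emph{prescribed finer} type. Plain $\mequiv{m_n}$-equivalence of $\tree{t}_n$ and $\tree{t}^\star$ does not give you the hereditary, child-by-child alignment needed to recursively enlarge each existing child subtree of $\tree{t}_n$ into something $\mso[m_{n+1}]$-equivalent to a child of $\tree{t}^\star$. The paper sidesteps this entirely by running K\"onig's lemma over \emph{isomorphism classes of concrete small trees} (the classes $\cl{X}_i$ of models of $\psi_i$ at scale $m_i$ of $\vartheta_{d,p}$), with an edge from level $i+1$ to level $i$ only when an actual $\chaineq_{m_i}$-embedding exists, obtained by \emph{shrinking} a genuine model of $\psi_{i+1}$ via the downward direction of Theorem~\ref{thm:full-LS-finite}. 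The infinite path then \emph{is} the chain; no upward amalgamation toward a target is ever performed.

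Second, your argument that the union $H$ satisfies $H \mequiv{m_n} H_n$ is insufficient. The ``monadic invariance'' of Lemma~\ref{lemma:monadic-structures}(\ref{lemma:monadic-structures:2}) controls only the root-level type indicator; it does not account for the fact that the \emph{existing} child subtrees of $\tree{t}_n$ keep growing along the chain, so their $\mso[m_n]$-types in the limit are not those they had in $\tree{t}_n$, and your one-shot ``initial blow-up'' cannot repair this at all heights simultaneously. This is exactly the content of the paper's Lemma~\ref{lem:chain-lemma}, which requires the links of the chain to be the hereditary relation $\chaineq_{m_i}$ of Definition~\ref{defn:chaineq} (alignment of type indicators at \emph{every} level, with the threshold $\rho$ built in), and is proved by a nontrivial induction on tree height showing that for each type $T$ the multiplicity in the limit indicator either equals the finite multiplicity at stage $j$ or stays above threshold. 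To make your proof work you would essentially have to reconstruct the $\chaineq_m$ machinery: replace $\mequiv{m_n}$ by $\chaineq_{m_n}$ throughout, build the chain downward from independently obtained small models (Corollary~\ref{cor:DLS-chaineq-form} / Theorem~\ref{thm:full-LS-finite}(\ref{thm:full-LS-finite-downward})), and prove the chain--union lemma. Your reduction of the $\theta$ and signature bookkeeping to the tree level, and the use of a common signature by pigeonhole, are fine and match the paper's transfer step.
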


As with earlier results, Theorem~\ref{thm:shrub-depth-compactness} is established by first showing a corresponding compactness theorem for $\cl{T}_{d, p}$, namely Theorem~\ref{thm:compactness} below, and then transferring the result via $\fo$ interpretations.

\begin{theorem}\label{thm:compactness}
Let $d, p \in \mathbb{N}$ be given. Then for every set  $T = \{\varphi_1, \varphi_2, \ldots, \}$ of $\mso$ sentences over the vocabulary of $\cl{T}_{d, p}$, if every finite subset of $T$ is satisfiable over $\cl{T}_{d, p}$, then $T$ is satisfiable over $\cl{T}_{d, p}$. Further $T$ has a countable model in $\cl{T}_{d, p}$.
\end{theorem}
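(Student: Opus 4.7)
The plan is to Lindenbaum-extend $T$ to a complete $\mso$-theory $\hat{T}$ (still finitely satisfiable in $\cl{T}_{d, p}$), use pseudo-finiteness (Theorem~\ref{thm:DLS-and-index-bound-for-m}(\ref{thm:DLS-and-index-bound-for-m:DLS})) to extract finite $\cl{T}_{d, p}$-models at each rank level, arrange these into a $\chaineq$-chain, and take the direct limit as the desired countable model.

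I enumerate all $\mso$ sentences $\psi_1, \psi_2, \ldots$ over the vocabulary of $\cl{T}_{d, p}$ and inductively add to $T$ whichever of $\psi_i, \neg\psi_i$ preserves finite satisfiability in $\cl{T}_{d, p}$: if neither worked, witnessing finite unsatisfiable subsets $F_1, F_2$ would yield a finite unsatisfiable set $(F_1 \setminus \{\psi_i\}) \cup (F_2 \setminus \{\neg\psi_i\})$ in the current extension, contradicting its finite satisfiability. The resulting union $\hat{T}$ is $\mso$-complete and finitely satisfiable in $\cl{T}_{d, p}$, and thus decides the defining sentence $\Theta_\delta$ of each of the finitely many $\mequiv{m}$-classes $\delta \in \mdelta{m}{\cl{T}_{d, p}}$ (finite by Theorem~\ref{thm:DLS-and-index-bound-for-m}(\ref{thm:DLS-and-index-bound-for-m:index})); let $\tau_m$ be the unique class with $\Theta_{\tau_m} \in \hat{T}$, so $\tau_{m+1}$ refines $\tau_m$. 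Setting $M_n := \max_{i \leq n} \rank{\varphi_i}$ and $H_n := \{\varphi_1, \ldots, \varphi_n, \Theta_{\tau_{M_n}}\} \subseteq \hat{T}$, Theorem~\ref{thm:DLS-and-index-bound-for-m}(\ref{thm:DLS-and-index-bound-for-m:DLS}) furnishes a finite $\cl{T}_{d, p}$-model of $H_n$, and any such model lies in $\tau_{M_n}$.

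I next build finite $\tree{t}_n \in \tau_{M_n}$ with $\tree{t}_n \chaineq_{M_n} \tree{t}_{n+1}$ inductively. Given $\tree{t}_n$, pick any finite $\tree{s} \in \tau_{M_{n+1}} \subseteq \tau_{M_n}$, so $\tree{t}_n \mequiv{M_n} \tree{s}$, and apply the following \emph{alignment lemma} with $L := M_n$, $L' := M_{n+1}$: for any finite $\tree{t}, \tree{s} \in \cl{T}_{d, p}$ with $\tree{t} \mequiv{L} \tree{s}$ and any $L' \geq L$, there exists finite $\tree{u} \in \cl{T}_{d, p}$ with $\tree{t} \chaineq_L \tree{u}$ and $\tree{u} \mequiv{L'} \tree{s}$; setting $\tree{t}_{n+1} := \tree{u}$ closes the induction. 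The alignment lemma is the main technical obstacle. I would prove it by induction on tree height, paralleling Theorem~\ref{thm:full-LS-finite}: at the root, Theorem~\ref{thm:FV-decomposition} and Remark~\ref{remark:alpha-phi-properties} give that the $\mso[L]$-type indicators of the children-forests of $\tree{t}$ and $\tree{s}$ are $\fo$-equivalent at $\fo$-rank $\rho_{d-1, p}(L)$ (from the witness of $\tree{t} \mequiv{L} \tree{s}$); the upward direction of Lemma~\ref{lem:degree-regulation} combined with Lemma~\ref{lemma:monadic-structures} on monadic vocabularies lets me inflate $\tree{t}$'s children-forest in a $\chaineq_L$-compatible way, while recursively applying the alignment to individual child subtrees refines the children's $\mso[L']$-types so that the resulting children-forest's $\mso[L']$-type indicator matches that of $\tree{s}$. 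The subtlety is coordinating the recursive refinement of existing children with the upward inflation, exploiting that the set of $\mso[L]$-child-types realized in $\tree{t}$ and in $\tree{s}$ coincides (a consequence of $\fo$-equivalence of monadic indicators, where ``$\exists$ an element of type $T$'' is $\fo[1]$).

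Finally, set $\tree{t} := \bigcup_n \tree{t}_n$ under the leaf-hereditary inclusions guaranteed by Lemma~\ref{lem:chaineq-props}. This is a countable $p$-labeled rooted tree of height at most $d$, hence $\tree{t} \in \cl{T}_{d, p}$. A Tarski-Vaught-style chain argument for $\chaineq$---where the $\chaineq_{M_n}$ relation extends to the limit because the monadic $\mso[M_n]$-type indicators along the chain stabilize in $\fo[q]$-theory (by Lemma~\ref{lemma:monadic-structures}, once any predicate's count exceeds the $\fo$-rank it stays large in the limit, and below that threshold it is constant along the chain), and Theorem~\ref{thm:FV-decomposition} then transports this back to $\mso$-equivalence of the trees by induction on $d$---yields $\tree{t} \mequiv{M_n} \tree{t}_n$ for every $n$. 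Hence $\tree{t} \models H_n$ for every $n$, so $\tree{t} \models T$, giving the desired countable model.
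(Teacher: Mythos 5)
Your overall architecture (complete $T$, extract finite models of increasing rank, chain them via $\chaineq$, take the union of the chain) is in the same spirit as the paper's, and your limit step is essentially the paper's Lemma~\ref{lem:chain-lemma}. But the induction step rests entirely on your ``alignment lemma'', which you rightly flag as the main obstacle --- and as stated it is false. Take $d = 1$, $p = 1$, so all trees of height $1$ are stars. Let $N_L$ be the least number such that all stars with at least $N_L$ leaves are $\mso[L]$-equivalent, let $\tree{t}$ be a star with $N_L + 1000$ leaves and $\tree{s}$ a star with $N_L$ leaves, so that $\tree{t} \mequiv{L} \tree{s}$; and choose $L'$ with $N_{L'} > N_L + 1000$, so that $\tree{u} \mequiv{L'} \tree{s}$ forces $\tree{u}$ to have exactly $N_L$ leaves. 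On the other hand, $\tree{t} \chaineq_{L} \tree{u}$ requires (condition (ii) of Definition~\ref{defn:chaineq}) the $\mso[L]$-type indicators of the two leaf families to be $\fequiv{q}$ for $q = \rho_{0,1}(L)$, which is vastly larger than $N_L + 1000$; since both leaf counts lie below $q$, this forces $\tree{u}$ to have exactly $N_L + 1000$ leaves. Hence no such $\tree{u}$ exists. The same obstruction bites your actual construction: $\tree{t}_n$ is ``any'' finite model of $H_n$, and nothing prevents it from realizing some local child-multiplicity that lies between the $\mso[M_n]$- and $\mso[M_{n+1}]$-counting thresholds and exceeds the corresponding multiplicity mandated by $\tau_{M_{n+1}}$; in that case no $\chaineq_{M_n}$-successor of $\tree{t}_n$ can land in $\tau_{M_{n+1}}$.

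The paper circumvents exactly this by never trying to grow a chosen model upward into a prescribed target type. It fixes the size windows $\scale{m_i}{\vartheta_{d,p}}$, lets $\cl{X}_i$ be the (finite, nonempty up to isomorphism) set of models of $\psi_i = \bigwedge_{j \le i}\varphi_j$ lying in the $i$-th window, and uses the \emph{downward} direction of Theorem~\ref{thm:full-LS-finite} to show that every member of $\cl{X}_{i+1}$ has a $\chaineq_{m_i}$-predecessor in $\cl{X}_i$. Since this relation points the wrong way for a greedy forward construction, K\"onig's lemma is then applied to the resulting finitely-branching tree of isomorphism classes to extract a coherent infinite chain, after which the union argument proceeds as you sketch. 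To salvage your route you would need both to control the sizes of the $\tree{t}_n$ (restoring the windows) and to replace the forward induction by such a compactness-of-choices argument --- at which point you have reconstructed the paper's proof. Your Lindenbaum completion, while sound, is then not needed: finite satisfiability of $T$ itself suffices.
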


\begin{proof}[Proof of Theorem~\ref{thm:shrub-depth-compactness}]
Let $\psi_i = \bigwedge_{j = 1}^{j = i} \varphi_i$ for $i \in \mathbb{N}$. By the premise, $\psi_i$ is satisfied in a $p$-labeled graph $G_i \in \tm{r, p}{d; \theta, \mc{S}}$. Let $(\tree{s}_1, S_i) \in \treemodel{r, p}{d}$ be a tree model for $G_i$; then $\tree{s}_i$ seen as a tree of $\cl{T}_{d, r \cdot p + 1}$ models $\theta$, and $S_i \in \mc{S}$ for all $i \ge 1$.  Now since $\mc{S}$ is finite, it must be that there exist infinitely many indices $i_1 < i_2 < \ldots$ such that $S_{i_1} = S_{i_2} = \ldots = S (\mbox{say})$. Then consider the FO interpretation $\Xi_{S, p}$ as described in Section~\ref{section:shrub-depth}; this is such that $G_{i_j} = \Xi_{S, p}(\tree{s}_{i_j})$ for $j \ge 1$.

Consider now the theory $T' = \{\Xi_{S, p}(\varphi_i) \mid i \ge 1\} \cup \{\theta, \Omega_{r, p, d}\}$ and the sentences $\psi_i' = \theta \wedge \Omega_{r, p, d} \wedge \bigwedge_{j = 1}^{j = i} \Xi_{S, p}(\varphi_i) $ for $i \ge 1$, where $\Omega_{r, p, d}$ and $\Xi_{S, p}(\varphi_i)$ are  as defined in Section~\ref{section:shrub-depth}. By the fundamental property of interpretations (cf. Section~\ref{section:shrub-depth}), we have that $G_{i_j} \models \varphi_l$ iff $\tree{s}_{i_j} \models \Xi_{S, p}(\varphi_l)$ for all $l, j \ge 1$. Therefore since $G_{i_j} \models \psi_{i_j}$, we verify that $\tree{s}_{i_j} \models \psi_{i_j}'$. Then every finite subset of  $T'$ is satisfiable in $\cl{T}_{d, r\cdot p +1}$. By Theorem~\ref{thm:compactness}, we have that there is a countable model $\tree{s}^*$ of $T$ in $\cl{T}_{d, r\cdot p+1}$. Since $\tree{s}^* \models \theta \wedge \Omega_{r, p, d}$ and $S \in \mc{S}$, we have that $(\tree{s}^*, S) \in \treemodel{r, p}{d; \theta, \mc{S}}$. Let $G^* = \Xi_{S, p}(\tree{s}^*)$.  Then $G^* \in \tm{r, p}{d; \theta, \mc{S}}$. Finally, since $\tree{s}^* \models \Xi_{S, p}(\varphi_i)$ for all $i \ge 1$, we get by the fundamental property of interpretations again, that $G^* \models \varphi_i$ for all $i \ge 1$; then $G^* \models T$. Observe that since $\tree{s}^*$ is countable, so is $G^*$.
\end{proof}

We now devote ourselves to proving Theorem~\ref{thm:compactness}. 

Define a \emph{chain} of trees of $\cl{T}_{d, p}$ as a sequence $\tree{t}_1, \tree{t}_2, \ldots$ of trees of $\cl{T}_{d, p}$ such that $\tree{t}_i$ is a subtree of $\tree{t}_{i+1}$ for all $i \ge 1$. 
We denote such a chain $\mc{C}$ as $\mc{C} := \tree{t}_1 \subseteq \tree{t}_2 \subseteq \ldots$. We can now define the union $\tree{t}^*$ of $\mc{C}$, denoted $\bigcup_{i \ge 0} \tree{t}_i$, in the natural way as: the universe of $\tree{t}^*$ is the union of the universes of the $\tree{t}_i$'s for $i \ge 1$; and for any $r$-ary predicate $R$ in the vocabulary of $\cl{T}_{d, p}$, the interpretation of $R$ in $\tree{t}^*$ is the union of the interpretations of $R$ in the $\tree{t}_i$'s for $i \ge 1$. It is easy to see that $\tree{t}^*$ belongs to $\cl{T}_{d, p}$. We now have the following lemma that plays a crucial role in the proof of Theorem~\ref{thm:compactness}. 

\begin{lemma}\label{lem:chain-lemma}
Let $d, p \in \mathbb{N}$ be given. Let $\mc{C}$ be an infinite chain of trees of $\cl{T}_{d, p}$ given by $\mc{C} := \tree{t}_1 \subseteq \tree{t}_2 \subseteq \ldots$. Let $\tree{t}^* = \bigcup_{i \ge 1} \tree{t}_i$ be the union of $\mc{C}$. Suppose $\tree{t}_i \chaineq_{m_i} \tree{t}_{i+1}$ for all $i \ge 1$, where $0 \leq m_1 \leq m_2 \leq \ldots$ and $\sup \{m_i  \mid i \ge 1\} = \omega$. Then $\tree{t}_i \chaineq_{m_i} \tree{t}^*$ for all $i \ge 1$. 
\end{lemma}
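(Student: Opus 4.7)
The plan is to induct on $d$. The base $d=0$ is immediate: each $\tree{t}_i$ is a singleton; the inclusion $\tree{t}_i \subseteq \tree{t}_{i+1}$ forces all $\tree{t}_i$ to coincide, so $\tree{t}^* = \tree{t}_i$ and reflexivity of $\chaineq_{m_i}$ (Lemma~\ref{lem:chaineq-props}) closes the case.

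For the inductive step I would first draw a few easy consequences of Lemma~\ref{lem:chaineq-props}: (a) by the leaf-hereditary reformulation and transitivity of leaf-hereditary inclusion, $\tree{t}_i$ is a leaf-hereditary subtree of every $\tree{t}_j$ and of $\tree{t}^*$, so all these trees share the same root and same height $h$; (b) by monotonicity of $\chaineq$ in $m$ combined with transitivity, $\tree{t}_i \chaineq_{m_i} \tree{t}_j$ for all $j \ge i$. Write $\mc{F}_j$ for the family of child subtrees of $\tree{t}_j$, indexed by $J_j$, and analogously $\mc{F}^*, J^*$ for $\tree{t}^*$; then $J_1 \subseteq J_2 \subseteq \cdots$ with $J^* = \bigcup_j J_j$. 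For $v \in J^*$ set $i_v = \min\{j : v \in J_j\}$ and write $\tree{u}_{v,j}$ for the subtree of $\tree{t}_j$ rooted at $v$ (for $j \ge i_v$) and $\tree{s}^*_v$ for the subtree of $\tree{t}^*$ rooted at $v$. Since the correspondence in the definition of $\chaineq_{m_j}$ respects roots, it must pair $\tree{u}_{v,j}$ with $\tree{u}_{v,j+1}$, so $(\tree{u}_{v,j})_{j \ge i_v}$ is a chain satisfying the hypotheses of the present lemma at height $h-1$ (the tail of $(m_j)$ still has supremum $\omega$). The induction hypothesis therefore yields $\tree{u}_{v,j} \chaineq_{m_j} \tree{s}^*_v$ for all $j \ge i_v$.

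Now fix $i$ and verify the three clauses of $\tree{t}_i \chaineq_{m_i} \tree{t}^*$. Clause~(i) (same roots) holds by~(a), and clause~(iii) holds by choosing for each $v \in J_i$ the partner $\tree{s}^*_v \in \mc{F}^*$, which satisfies $\tree{u}_{v,i} \chaineq_{m_i} \tree{s}^*_v$ by the line above. The main obstacle — and the only nontrivial step — is clause~(ii), namely $\ind{m_i}{\mc{F}_i} \fequiv{q_i} \ind{m_i}{\mc{F}^*}$ with $q_i = \rho_{h-1, p}(m_i)$. Two inputs will drive this: first, consequence~(b) together with clause~(ii) of the definition of $\chaineq_{m_i}$ applied to $\tree{t}_i \chaineq_{m_i} \tree{t}_j$ give $\ind{m_i}{\mc{F}_i} \fequiv{q_i} \ind{m_i}{\mc{F}_j}$ for every $j \ge i$; second, since $m_j \ge m_i$ and $\tree{u}_{v,j} \chaineq_{m_j} \tree{s}^*_v$, Lemma~\ref{lem:chaineq-props} gives $\tree{u}_{v,j} \mequiv{m_i} \tree{s}^*_v$, so the restriction of $\ind{m_i}{\mc{F}^*}$ to $J_j$ coincides as a labeled structure with $\ind{m_i}{\mc{F}_j}$.

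To close clause~(ii) I would use that the type indicators are monadic structures over the finite vocabulary $\tau_{m_i, \cl{T}_{h-1, p}}$ in which each element lies in exactly one predicate, so (by a standard EF argument in the same spirit as Lemma~\ref{lemma:monadic-structures}) their $\fo[q_i]$-theories are determined by the values $\min(|T^{(\cdot)}|, q_i)$ as $T$ ranges over this vocabulary. A short case split then finishes the job for each type $T$: if $|T^{\ind{m_i}{\mc{F}_i}}| < q_i$, the chain-equivalence $\ind{m_i}{\mc{F}_i} \fequiv{q_i} \ind{m_i}{\mc{F}_j}$ pins $|T^{\ind{m_i}{\mc{F}_j}}|$ to this exact value for every $j \ge i$, and because every element of $T^{\ind{m_i}{\mc{F}^*}}$ already lives in some $T^{\ind{m_i}{\mc{F}_j}}$, we obtain $|T^{\ind{m_i}{\mc{F}^*}}| = |T^{\ind{m_i}{\mc{F}_i}}|$; if $|T^{\ind{m_i}{\mc{F}_i}}| \ge q_i$, the limit is at least as large and so also $\ge q_i$. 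The hypothesis $\sup_i m_i = \omega$ is used only inside the induction hypothesis applied to the subtree chains — it is what ensures that, branch by branch, the individual $\mso[m_i]$-types eventually stabilize at the type of the limit subtree, thereby giving the coincidence of colourings used above.
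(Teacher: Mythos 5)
Your proof is correct and follows essentially the same route as the paper's: induction on the height, passing to the forests of child subtrees, chaining the root-indexed child subtrees so as to apply the induction hypothesis, and then controlling the predicate counts in the monadic type indicators to obtain the $\fo[q_i]$-equivalence of $\ind{m_i}{\mc{F}_i}$ and $\ind{m_i}{\mc{F}^*}$. The only cosmetic difference is that you get the cardinality comparison directly from the observation that $\ind{m_i}{\mc{F}^*}$ restricted to $J_j$ coincides with $\ind{m_i}{\mc{F}_j}$, whereas the paper argues the upper bound by contradiction via a hypothetical extra tree of type $T$ in $\mc{F}^*$.
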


\begin{proof}[Proof of Lemma~\ref{lem:chain-lemma}]
We prove the lemma by induction on $d$. The base case of $d = 0$ is evident. Assume the statement of the lemma for $d - 1$ where $d \ge 1$. Consider an infinite chain $\mc{C}$ of trees of $\cl{T}_{d, p}$ given by $\mc{C} := \tree{t}_1 \subseteq \tree{t}_2 \subseteq \ldots$ satisfying the conditions in the statement of the lemma.  We first observe that there is no loss of generality in assuming that all the trees of the chain have the same height. For, if the trees have different heights, all of these being $\leq d$ implies that for some $h \leq d$, there is an infinite subchain $\mc{C}'$ of $\mc{C}$ given by $\mc{C}' := \tree{t}_{j_1} \subseteq \tree{t}_{j_2} \subseteq \ldots$ such that $j_1 < j_2 < \ldots$, and all trees in $\mc{C}'$ have height $h$. Then the unions of $\mc{C}'$ and $\mc{C}$ are the same, namely $\tree{t}^*$. Further for $l \ge 1$, since $\tree{t}_{j_l} \chaineq_{m_{j_l}} \tree{t}_{j_l+1} \chaineq_{m_{j_l + 1}} \ldots \tree{t}_{j_{l+1}}$, we infer using Lemma~\ref{lem:chaineq-props} that $\tree{t}_{j_l} \chaineq_{m_{j_l}} \tree{t}_{j_{l+1}}$ for all $l \ge 1$. Observe also that $m_{j_l} \leq m_{j_{l+1}}$ for all $l \ge 1$, and $\sup \{m_{j_l} \mid l \ge 1\} = \omega$. Then assuming we have shown the lemma when all trees of $\mc{C}$ have equal height, we have $\tree{t}_{j_l} \chaineq_{m_{j_l}} \tree{t}^*$. Then for a given $i \ge 1$, let $j_k$ be such that $i \leq j_k$. Then $\tree{t}_i \chaineq_{m_i} \tree{t}_{j_k} \chaineq_{m_{j_k}} \tree{t}^*$; whereby $\tree{t}_i \chaineq_{m_i} \tree{t}^*$ (by Lemma~\ref{lem:chaineq-props}) showing the lemma for the chain $\mc{C}$. So we henceforth assume that all trees of $\mc{C}$ have equal height $h \leq d$.

It is easy to verify that $\tree{t}_i$ is a leaf hereditary subtree of $\tree{t}^*$ for all $i \ge 1$. For each tree $\tree{t}_i$ for $i \ge 1$, consider the forest $\forest{f}_i$ of rooted trees of $\cl{T}_{h-1, p}$ obtained by removing the root of $\tree{t}_i$. Likewise let $\forest{f}^*$ be the forest of rooted trees of $\cl{T}_{h-1, p}$ obtained by removing the root of $\tree{t}^*$; then $\forest{f}^* = \bigcup_{i \ge 1} \forest{f}_i$. Let $\mc{F}_i$ be the family of trees constituting forest $\forest{f}_i$ and $\mc{F}^*$ the family of trees constituting forest $\forest{f}^*$, and let  
$\str{A}_i = \ind{m_i}{\mc{F}_i}$ and $\str{A}^* = \ind{m_i}{\mc{F}^*}$ be resp. the $\mso[m_i]$ type indicators of $\mc{F}_i$ and $\mc{F}^*$. 

Fix an $i \ge 1$. For $j \ge i$, since $\tree{t}_j \chaineq_{m_j} \tree{t}_{j+1}$, we have from Lemma~\ref{lem:chaineq-props} that $\tree{t}_j \chaineq_{m_i} \tree{t}_{j+1}$ since $m_i \leq m_j$. Then for all $j \ge i$, we have that $\str{A}_j \fequiv{q} \str{A}_{j+1}$ for $q = \rho_{h-1, p}(m_i)$ (from the $\chaineq_{m_i}$ definition). We show below that $\str{A}_j \fequiv{q} \str{A}^*$ for all $j \ge i$. We show this by showing that for every predicate $T \in \tau_{m_i, \cl{T}_{h-1, p}}$, if $|T^{\str{A}_j}| \ge q$, then $|T^{\str{A}^*}| \ge q$ as well, and if $|T^{\str{A}_j}| < q$, then $|T^{\str{A}^*}| = |T^{\str{A}_j}|$. Then $\str{A}_j \fequiv{q} \str{A}^*$ follows by a simple $\fo$ EF game argument.

Let $\cl{S}$ be the class of disjoint unions of trees in $\cl{T}_{h-1, p}$. Let $|T^{\str{A}_j}| = n$ and let $\tree{s}_j^l \in \mc{F}_j$ for $1 \leq l \leq n$ be distinct trees such that $\delta_{m_i, \cl{S}}(\tree{s}_j^l) = T$. Since $\tree{t}_i \chaineq_{m_i} \tree{t}_{i+1} \chaineq_{m_i} \tree{t}_{i+2} \chaineq_{m_i} \ldots$, we get that there exist for each $l$ and each $k \ge j$, a unique tree $\tree{s}_{k}^l \in \mc{F}_{k}$ such that $\tree{s}_j^l \chaineq_{m_i} \tree{s}_{j+1}^l \chaineq_{m_i} \tree{s}_{j+2}^l \chaineq_{m_i} \ldots$. Then $T = \delta_{m_i, \cl{S}}(\tree{s}_k^l)$ for all $k \ge j$ and  $1 \leq l \leq n$ (since by Lemma~\ref{lem:chaineq-props}, $\tree{s}_j^l \mequiv{m_i} \tree{s}_{j+1}^l \mequiv{m_i} \ldots$). Further, since $\tree{s}_j^{l_1}$ and $\tree{s}_j^{l_2}$ are distinct for $1 \leq l_1 < l_2 \leq n$, it follows that so are $\tree{s}_k^{l_1}$ and $\tree{s}_k^{l_2}$ for all $k \ge j$. So that $|T^{\str{A}_k}| \ge n$ for all $k \ge j$.

Let's now look at $\forest{f}^*$, $\mc{F}^*$ and $\str{A}^*$. Since 
$\tree{s}_j^l \chaineq_{m_i} \tree{s}_{j+1}^l \chaineq_{m_i} \tree{s}_{j+2}^l \chaineq_{m_i} \ldots$ for $1 \leq l \leq n$, and all trees in the chain belong to $\cl{T}_{h-1, p}$ (they could be of different heights), we get by induction hypothesis that $\tree{s}_j^l \chaineq_{m_i} \tree{s}^*_l = \bigcup_{k \ge j} \tree{s}_k^l$. Then $\delta_{m_i, \cl{S}}(\tree{s}^*_l) = \delta_{m_i, \cl{S}}(\tree{s}_j^l) = T$ for all $l \in \{1, \ldots, n\}$. Then $|T^{\str{A}^*}| \ge n$ since clearly $\tree{s}^*_{l_1}$ and $\tree{s}^*_{l_2}$ are distinct for $1 \leq l_1 < l_2 \leq n$. Thus if 
$|T^{\str{A}_j}| = n \ge q$, then $|T^{\str{A}^*}| \ge q$. Suppose $n < q$  
but $|T^{\str{A}^*}| > n$. Then for some $\tree{z}^* \in \mc{F}^*$, it holds that $\tree{z}^*$ is distinct from $\tree{s}^*_l$ for all $l \in \{1, \ldots, n\}$ and  $\delta_{m_i, \cl{S}}(\tree{z}^*) = T$.  Now since $\tree{z}^* \in \mc{F}^*$, there exists $j^* \ge j$ and trees $\tree{z}_k \in \mc{F}_k$ for all $k \ge j^*$ such that $\tree{z}_k$ is a leaf-hereditary subtree of $\tree{z}_{k+1}$ and $\tree{z}^* = \bigcup_{k \ge j^*} \tree{z}_k$.  Since $\tree{t}_k \chaineq_{m_i} \tree{t}_{k+1}$, we must have $\tree{z}_k \chaineq_{m_i} \tree{z}_{k+1}$ from the definition of $\chaineq_{m_i}$, for all $k \ge j^*$. Given that each tree $\tree{z}_k$ belongs to $\cl{T}_{d-1, p}$, we obtain by our induction hypothesis that $\tree{z}_k \chaineq_{m_i} \tree{z}^*$ for each $k \ge j^*$. Then $\tree{z}_k \mequiv{m_i} \tree{z}^*$ (by Lemma~\ref{lem:chaineq-props}) and hence $\delta_{m_i, \cl{S}}(\tree{z}_k) = T$ for all $ k \ge j^*$ and hence in particular $\delta_{m_i, \cl{S}}(\tree{z}_{j^*}) = T$ . Now since $\tree{t}_j \chaineq_{m_i} \tree{t}_{j+1} \chaineq_{m_i} \ldots \chaineq_{m_i} \tree{t}_{j^*}$, we have by the transitivity of $\chaineq_{m_i}$ that $\tree{t}_j \chaineq_{m_i} \tree{t}_{j^*}$ and hence by the definition of  $\chaineq_{m_i}$, we get that $\str{A}_j \fequiv{q} \str{A}_{j^*}$. Since $|T^{\str{A}_j}| = n < q$, it follows that  $|T^{\str{A}_{j^*}}| = n$ as well. But we now observe that
the trees $\tree{s}_{j^*}^l$ for $1 \leq l \leq n$ and $\tree{z}_{j^*}$ are all distinct, and yet $\delta_{m_i, \cl{S}}(\tree{s}_{j^*}^l) = T = \delta_{m_i, \cl{S}}(\tree{z}_{j^*})$ for  $1 \leq l \leq n$, which implies $|T^{\str{A}_{j^*}}| > n$ -- a contradiction. Then $|T^{\str{A}^*}| = n$.

In summary, we have shown so far conditions (i) and (ii) of the $\tree{t}_j \chaineq_{m_i} \tree{t}^*$ definition. To see the last condition, suppose $\tree{y}_j \in \mc{F}_j$. Then by a similar reasoning as above, there exist trees $\tree{y}_k \in \mc{F}_k$ for $k \ge j$ such that $\tree{y}_j \chaineq_{m_i} \tree{y}_{j+1} \chaineq_{m_i} \tree{y}_{j+2} \chaineq_{m_i} \ldots$. Once again, reasoning as above, the tree $\tree{y}^* = \bigcup_{k \ge j} \tree{y}_k$ is such that $\tree{y}^* \in \mc{F}^*$ and $\tree{y}_j \chaineq_{m_i} \tree{y}^*$. Thus $\tree{t}_j \chaineq_{m_i} \tree{t}^*$ for all $j \ge i$ and hence in particular $\tree{t}_i \chaineq_{m_i} \tree{t}^*$.
\end{proof}

\begin{proof}[Proof of Theorem~\ref{thm:compactness}]
Let $\psi_i = \bigwedge_{j = 1}^{j = i} \varphi_j$ for $i \in \mathbb{N}$. By the premise, we know that $\psi_i$ is satisfiable for all $i$. Denote by $m_i$, the rank of $\psi_i$ for all $i \ge 1$; then $1 \leq m_1 \leq m_2 \leq \ldots$. If $\sup \{m_i \mid i \ge 1\} < \omega$, then for some $i^* \ge 1$, it holds that $m_j = m_{i^*}$ for all $j \ge i^*$. Then $T$ is a set of MSO formulae of rank at most $m_{i^*}$, and is hence finite upto equivalence. Then $T$ is already satisfiable in $\cl{T}_{d, p}$ since every finite subset of it is satisfiable in $\cl{T}_{d, p}$ by assumption. Likewise, if $\psi_i$ has only finitely many models for any $i \ge 1$, then, since $\psi_{j+1}$ implies $\psi_j$ for all $j \ge i$, it holds that for some $j^* > i$ and for all $k \ge j^*$, the sentence $\psi_k$ is equivalent to the (satisfiable) sentence $\psi_{j^*}$. Then any model of $\psi_{j^*}$ models all of $T$. We therefore assume below that $\sup \{m_i \mid i \ge 1\} = \omega$, and that $\psi_i$ has infinitely many models for each $i \ge 1$. 

Recall the function $\vartheta_{d, p}: \mathbb{N}_+ \rightarrow \mathbb{N}$ that witnesses Theorem~\ref{thm:full-LS-finite}. We observe that for all $i \ge 1$ that $\vartheta_{d, p}(i) \leq \vartheta_{d, p}(i+1)$. Let $\cl{X}_i$ for $i \ge 1$ be the class of all finite trees $\tree{t} \in \cl{T}_{d, p}$ such that (i) $|\tree{t}| \in \scale{m_i}{\vartheta_{d, p}}$, and (ii) $\tree{t} \models \psi_i$. Observe that $\cl{X}_i$ is a finite class (upto isomorphism). We now claim the following for each $i \ge 1$:

\begin{enumerate}[(C1)]
    \item  For every $\tree{t} \in \cl{X}_{i+1}$, there exists $\tree{s} \in \cl{X}_i$ such that  $\tree{s} \chaineq_{m_i} \tree{t}$.\label{claim:1} 
    % is a leaf-hereditary subtree of $\tree{t}$ and $\tree{s} \melsub{m_i} \tree{t}$.\label{claim:1}
    % \item[C2.] The finite class $\cl{X}_i$ is a class of models of $\psi_i$. [C2.] 
    \item  If $\cl{X}_i$ is empty, then $\psi_i$ has only finitely many models in $\cl{T}_{d, p}$.\label{claim:2}
\end{enumerate}

Let us prove the two claims in order. We recall from above that for any $\tree{t} \in \cl{X}_{i+1}$, we have $|\tree{t}| \in \scale{m_{i+1}}{\vartheta_{d, p}}$. Since $m_{i+1} \ge m_i$, we have by Theorem~\ref{thm:full-LS-finite}(\ref{thm:full-LS-finite-downward}) that there exists a $\tree{s} \in \cl{T}_{d, p}$ such that (i) $|\tree{s}| \in \scale{m_i}{\vartheta_{d, p}}$, and (ii) $\tree{s} \chaineq_{m_i} \tree{t}$. Now by the latter and Lemma~\ref{lem:chaineq-props}, we have $\tree{s} \mequiv{m_i} \tree{t}$. Since $\tree{t} \in \cl{X}_{i+1}$, we have $\tree{t}$ models $\psi_{i+1}$ and hence $\psi_i$ (the latter being of rank $m_i$). Then $\tree{s}$ models $\psi_i$ and since $|\tree{s}| \in \scale{m_i}{\vartheta_{d, p}}$, we have $\tree{s} \in \cl{X}_i$. This shows~\ref{claim:1}. For~\ref{claim:2}, suppose that $\psi_i$ has infinitely many models in $\cl{T}_{d, p}$. Then by Theorem~\ref{thm:full-LS-finite}(\ref{thm:full-LS-finite-downward}), there exists a model $\tree{t}$ of $\psi_i$ in $\cl{T}_{d, p}$ such that $|\tree{t}| \in \scale{m_i}{\vartheta_{d, p}}$; in other words, $\tree{t} \in \cl{X}_i$ showing~\ref{claim:2} in the contrapositive.

Since at the outset itself, we have assumed that all $\psi_i$'s have infinitely many models, we conclude that $\cl{X}_i \neq \emptyset$ for all $i \ge 1$. Further we have already observed above that $\cl{X}_i$ is finite for all $i \ge 1$. We now show the following.

$(\ddagger)$: There exists an infinite chain $\mc{C}$ given by $\mc{C} := \tree{t}_1 \subseteq \tree{t}_2 \subseteq \ldots$ of trees in $\cl{T}_{d, p}$ such that for all $i \ge 1$, (i) $\tree{t}_i \in \cl{X}_i$, and (ii) $\tree{t}_i \chaineq_{m_i} \tree{t}_{i+1}$.

We claim that $(\ddagger)$ implies that $T$ is satisfiable over  $\cl{T}_{d, p}$. Recalling that $\sup \{m_i \mid i \ge 1\} = \omega$ (as assumed at the outset), we have by  $(\ddagger)$ and Lemma~\ref{lem:chain-lemma}, that if $\tree{t}^* = \bigcup_{i \ge 1} \tree{t}_i$ is the union of the chain $\mc{C}$, then $\tree{t}_i \chaineq_{m_i} \tree{t}^*$ for all $i \ge 1$. Then by Lemma~\ref{lem:chaineq-props}, we have $\tree{t}_i \mequiv{m_i} \tree{t}^*$ for all $i \ge 1$. Since $\tree{t}_i \in \cl{X}_i$, we have $\tree{t}_i \models \psi_i$ and hence $\tree{t}^* \models \psi_i$ as well since $\mbox{rank}(\psi_i) = m_i$. Since this holds for all $i \ge 1$, we have $\tree{t}^* \models T$. Further since $\tree{t}_i$ is finite for each $i \ge 1$, it follows that $\tree{t}^*$ is countable.

We now show $(\ddagger)$ to complete the proof. Let $\mathsf{iso}(\cl{X}_i)$ denote the finite set of isomorphism classes of the trees of $\cl{X}_i$ for $i \ge 1$. 

We construct a labeled tree $V$ with countably many ``levels". The root of the tree is at level 0 and for $i \ge 1$, the $i^{\text{th}}$ level contains $|\mathsf{iso}(\cl{X}_i)|$ many nodes. Each node at level $i$ is labeled with exactly one element of $\mathsf{iso}(\cl{X}_i)$ and no two nodes at level $i$ get the same label. (It is possible though that the same label appears in different levels.) This defines the vertices and their labeling. The edge relation is now defined as follows. Firstly the relation relates only pairs of vertices in adjacent levels. All nodes of level 1 are adjacent to the root. For a node $u$ at level $i+1$ for $i \ge 1$, let the isomorphism class of $\tree{t} \in \cl{X}_{i+1}$ be the label of $u$. By Claim~\ref{claim:1}, there exists some node $v$ at level $i$ and some tree $\tree{s} \in \cl{X}_i$ such that the isomorphism class of $\tree{s}$ is the label of $v$, and $\tree{s} \chaineq_{m_i} \tree{t}$. We choose exactly one such node $v$ and put an edge between $u$ and $v$. The edges as specified above are the only edges in $V$; it follows then that $V$ is a tree.

We now observe that $V$ is infinite (since $\cl{X}_i$ is non-empty for all $i$) and finitely branching (since $\mathsf{iso}(\cl{X}_i)$ is finite for all $i$). Then by K\"onig's lemma, there exists an infinite path $u_0, u_1, u_2, \ldots$ in $V$ starting at the root $u_0$. Let the isomorphism class of $\tree{t}_i$ be the label of $u_i$ for $i \ge 1$. Then from the properties of the edge relation, we have for all $i \ge 1$, that (i) $\tree{t}_i \in \cl{X}_i$, and (ii) $\tree{t}_i \chaineq_{m_i} \tree{t}_{i+1}$. Indeed then $\mc{C} := \tree{t}_1 \subseteq \tree{t}_2 \subseteq \ldots$ is the desired chain.
\end{proof}

\section{$\mso = \fo$ over bounded shrub-depth classes}\label{section:mso=fo}
\newcommand{\fancy}[1]{\ensuremath{\mathcal{#1}}}
We show the following in this section. This result for $p = 1$ is already known from~\cite{CF20}, and for $p = 1$ and finite graphs, is known earlier still from~\cite{shrub-depth-FO-equals-MSO}.

\begin{theorem}\label{thm:mso=fo}
Let $r, p, d \in \mathbb{N}$ be given. Then over $\tm{r, p}{d}$, every $\mso$ sentence is equivalent to an $\fo$ sentence. Consequently, the same holds over any subclass of $\tm{r, p}{d}$.
\end{theorem}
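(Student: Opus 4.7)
The plan is to adapt the proof of the classical Lindstr\"om's theorem, using the MSO compactness theorem (Theorem~\ref{thm:shrub-depth-compactness}) and the elementary bound on the MSO$[m]$-index (Theorem~\ref{thm:shrub-depth-DLS-and-index-bound}) for $\tm{r, p}{d}$. Suppose for contradiction that some MSO sentence $\Phi$ is not equivalent over $\tm{r, p}{d}$ to any FO sentence; then for each $n \in \mathbb{N}$ there exist $G_n, H_n \in \tm{r, p}{d}$ with $G_n \foequiv{n} H_n$, $G_n \models \Phi$, and $H_n \not\models \Phi$.

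I would first encode each pair $(G_n, H_n)$ as a single graph inside a bounded shrub-depth class. After relabeling the tree-model leaf labels of $H_n$ to a disjoint copy $\{r+1, \ldots, 2r\}$ and its vertex labels to $\{p+1, \ldots, 2p\}$, the disjoint union of $G_n$ and this relabeled $H_n$ admits a tree model of height $d+1$ obtained by joining the two given tree models under a fresh root carrying the designated internal label; thus the combined $2p$-labeled graph lies in $\tm{2r, 2p}{d+1}$. Over the $2p$-labeled vocabulary, consider the MSO theory $T$ containing: the MSO sentence expressing ``the subgraph induced by vertex labels $\{1, \ldots, p\}$ satisfies $\Phi$''; the analogous MSO sentence with $\neg \Phi$ for vertex labels $\{p+1, \ldots, 2p\}$; and, for each $n \in \mathbb{N}$, the single FO sentence asserting that the two induced subgraphs are $\foequiv{n}$, which is expressible because FO$[n]$ has only finitely many equivalence classes over $\tm{r, p}{d}$ by Theorem~\ref{thm:shrub-depth-DLS-and-index-bound}. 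Every finite subset of $T$ is satisfied by the relabeled $G_n \uplus H_n$ for sufficiently large $n$, so by Theorem~\ref{thm:shrub-depth-compactness} applied to $\tm{2r, 2p}{d+1}$, with $\theta$ and $\mathcal{S}$ chosen to pin down the shape of the combined tree model, $T$ has a countable model $K^*$. Extracting its two parts yields countable graphs $G^*, H^* \in \tm{r, p}{d}$ that agree on all FO sentences while $G^* \models \Phi$ and $H^* \not\models \Phi$.

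The final step, which is the core of the adaptation and the main obstacle, is to derive a contradiction by showing $G^* \cong H^*$. In the classical Lindstr\"om proof a partial-isomorphism relation is introduced through a vocabulary expansion by a fresh binary symbol to drive the back-and-forth; this option is unavailable here, since such an expansion would take us outside $\tm{r, p}{d}$. Instead, I would exploit a saturation-like property of $\tm{r, p}{d}$: for any finite tuple of parameters, the expansion of a graph in $\tm{r, p}{d}$ by marked vertices can be viewed as a graph in $\tm{r, p \cdot 2^k}{d}$ by using extra vertex labels to code the markings, which by Theorem~\ref{thm:shrub-depth-DLS-and-index-bound} still has finite MSO$[m]$-index for every $m$. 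By the Ryll-Nardzewski criterion this makes the FO theory $T_{G^*}$ of $G^*$ $\omega$-categorical (using also the FO-axiomatizability of $\tm{r, p}{d}$ over arbitrary structures), so its countable models are pairwise isomorphic; since $H^*$ is a countable model of $T_{G^*}$, being FO-elementarily equivalent to $G^*$, we conclude $G^* \cong H^*$, contradicting $G^* \models \Phi$ and $H^* \not\models \Phi$. The countability of $G^*$ and $H^*$ produced by the MSO compactness theorem is, as the paper remarks, the crucial ingredient enabling this adaptation of Lindstr\"om's proof.
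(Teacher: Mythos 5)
Your construction tracks the paper's proof step for step up to and including the compactness stage: assume $\alpha$ is a counterexample, take $G_n \foequiv{n} H_n$ with $G_n\models\alpha$ and $H_n\models\neg\alpha$, shift the vertex labels of $H_n$ into $\{p+1,\ldots,2p\}$, hang the two tree models under a fresh root so that the disjoint union lives in a bounded shrub-depth class of height $d+1$, express ``the two halves are $\foequiv{n}$'' by a finite Boolean combination of relativized defining sentences $\Theta_\delta$ of the $\foequiv{n}$-classes, and invoke Theorem~\ref{thm:shrub-depth-compactness} (with $\theta$ and $\mc{S}$ constraining the shape of the tree model) to obtain a countable model splitting into countable $G^*, H^*\in\tm{r,p}{d}$ with $G^*\models\alpha$, $H^*\models\neg\alpha$ and $G^*\foequiv{n}H^*$ for all $n$. (The paper keeps the $r$ tree labels rather than doubling them to $2r$, and works in $\tm{r,2p}{d+1}$, but that difference is immaterial.)

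The divergence, and the genuine gap, is in the final step. The paper closes the argument by applying its Theorem~\ref{thm:elem-equi-and-iso} to the countable pair $G^*, H^*$. You instead try to prove that the first-order theory of $G^*$ is $\omega$-categorical via Ryll-Nardzewski, justified by the observation that expansions of graphs of $\tm{r,p}{d}$ by $k$ marked vertices sit inside $\tm{r,p\cdot 2^k}{d}$ and hence have finite $\mso[m]$-index for every $m$. That inference is invalid: over a finite relational vocabulary \emph{every} class of structures has finite $\fo[m]$-index for each fixed $m$ (there are only finitely many inequivalent rank-$m$ formulas in $k$ free variables), so if the inference were sound it would make every complete theory in such a vocabulary $\omega$-categorical. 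Ryll-Nardzewski needs finitely many \emph{complete} $k$-types, i.e., the rank-$m$ classification of $k$-tuples must stabilize as $m\to\infty$, whereas Theorem~\ref{thm:shrub-depth-DLS-and-index-bound} only gives a bound that grows as a tower in $m$. Worse, the conclusion you want is false for this class: the complete multipartite graph with infinitely many parts of each finite size lies in $\tm{1,1}{2}$ (one internal node per part, signature forcing adjacency exactly across parts), yet by compactness its complete theory has a countable model with an additional infinite part, which is elementarily equivalent but not isomorphic to it; so its theory is not $\omega$-categorical. Hence you cannot derive $G^*\cong H^*$ along this route. Your instinct that the isomorphism step is the real obstacle in adapting Lindstr\"om's proof -- and that the classical device of coding a back-and-forth system by a fresh binary relation is unavailable here -- is exactly right, but the replacement you propose does not work; this is precisely the point at which the paper instead leans on Theorem~\ref{thm:elem-equi-and-iso}.
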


We need the following notion for the proof. Given any $\mso$ formula $\theta(\bar{x})$, and an $\fo$ formula $\gamma(y)$, we can construct a \emph{relativized} $\mso$ formula $\theta|_{\gamma}(\bar{x})$ that relativizes all quantifiers of $\theta$ to range over only elements $y$ satisfying $\gamma(y)$. Specifically, to obtain $\theta|_\gamma(\bar{x})$, we replace every subformula of the form $\exists v \psi(v, \bar{u})$ in $\theta(\bar{x})$ with $\exists v (\gamma(v) \wedge \psi|_\gamma(v, \bar{u}))$
and every subformula of the form $\forall v \psi(v, \bar{u})$ with $\forall v (\gamma(v) \rightarrow \psi|_\gamma(v, \bar{u}))$. The key property of relativization is that for any given structure $\str{A}$, if $\str{A}_\gamma$ is the substructure of $\str{A}$ induced by the set $A' = \{b \in \str{A} \mid \str{A} \models \gamma(b)\}$, then for an $\mso$ sentence $\theta$, we have that $\str{A} \models \theta|_\gamma$ iff $\str{A}|_\gamma \models \theta$. In addition to the mentioned notion, we will need the following result from classical model theory~\cite{lindstrom}. 

\begin{theorem}\label{thm:elem-equi-and-iso}
Let $\str{A}$ and $\str{B}$ be two countable structures over the same vocabulary. If $\str{A} \fequiv{n} \str{B}$ for all $n \ge 1$, then $\str{A} \cong \str{B}$.
\end{theorem}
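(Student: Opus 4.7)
The plan is to prove the theorem by a classical back-and-forth construction of the isomorphism. Using countability, I would fix enumerations $A = \{a_1, a_2, \ldots\}$ and $B = \{b_1, b_2, \ldots\}$ of the two universes and build inductively a chain $p_0 \subseteq p_1 \subseteq p_2 \subseteq \cdots$ of finite partial isomorphisms whose union $p = \bigcup_k p_k$ is the desired bijection. Starting from $p_0 = \emptyset$, I would alternate ``forth'' and ``back'' stages: at odd stages, extend $p_k$ so that the least-indexed $a_i \notin \mathrm{dom}(p_k)$ is placed in the domain, and at even stages, extend so that the least-indexed $b_j \notin \mathrm{ran}(p_k)$ lies in the range. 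This scheduling alone forces the union $p$ to be a total bijection $A \to B$, so the entire content of the proof concentrates in the extension step.

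The invariant I would maintain along the chain is that whenever $p_k$ pairs a tuple $\bar{a} \in A^k$ with $\bar{b} \in B^k$, the expansions $(\str{A}, \bar{a})$ and $(\str{B}, \bar{b})$ realise the same complete $\fo$ type, i.e.\ they satisfy the same $\fo$ formulae in $k$ free variables of arbitrary rank. The case $k = 0$ is exactly the hypothesis $\str{A} \fequiv{n} \str{B}$ for every $n \ge 1$. The forth step then amounts to the following: given $\bar{a}, \bar{b}$ with matching complete types and a new $a \in A$, exhibit $b \in B$ for which $(\bar{a}, a)$ and $(\bar{b}, b)$ still realise a common complete type. The back step is symmetric. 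If this extension always succeeds, every $p_k$ is in particular a partial isomorphism (taking rank sufficient to encode atomic formulae), and the limit $p = \bigcup_k p_k$ is the desired isomorphism.

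The hard part, and the real obstacle, is precisely this type-realisation step, which in full generality for arbitrary countable structures can fail. The way out, in the spirit of Lindstr\"om's proof, is to argue that each relevant complete $\fo$ type in play is \emph{isolated} by a single $\fo$ formula; then the existence of an appropriate $b \in B$ becomes an $\fo$ statement in the parameters $\bar{b}$, which transfers from $\str{A}$ to $\str{B}$ by elementary equivalence. In the setting in which this paper will invoke the theorem (structures arising from $\tm{r, p}{d}$ and their tree models in $\cl{T}_{d, p}$), such isolation is assured by the finiteness of the index of each $\fequiv{n}$, for instance via Theorem~\ref{thm:shrub-depth-DLS-and-index-bound}(\ref{thm:shrub-depth-DLS-and-index-bound-for-m:index}) applied with $\fo$ in place of $\mso$: this yields for every type a defining $\fo$ sentence at each rank, whose stabilising conjunction across ranks furnishes the required isolating formula. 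Once this is in place, the back-and-forth runs unhindered and produces the isomorphism $p$.
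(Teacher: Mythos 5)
Your instinct that the extension step ``in full generality for arbitrary countable structures can fail'' is exactly right, and the rescue you propose does not repair it. Finiteness of the index of $\fequiv{n}$ at each fixed rank $n$ is automatic for any finite relational vocabulary and says nothing about isolation of complete types: the rank-$n$ characterising sentences refine strictly as $n$ grows, their ``conjunction across ranks'' is an infinite theory rather than a single formula, and there is no reason for the chain to stabilise. Without genuine isolation the forth step fails at exactly the point you identify: given $\bar a \mapsto \bar b$ realising the same complete type and a new $a \in A$, the type $\{\varphi(\bar b, y) : \str{A}\models\varphi(\bar a, a)\}$ is finitely satisfiable in $\str{B}$ by elementary equivalence, but nothing guarantees it is realised there.

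In fact the statement as written is false, both for arbitrary countable structures (e.g.\ $(\mathbb{Z},<)$ versus two copies of $(\mathbb{Z},<)$ laid end to end) and even over the classes relevant to this paper: let $G_1$ be the disjoint union of one clique of each finite size and $G_2 = G_1 \cupdot K_\omega$. Both are countable, both lie in $\tm{1,1}{2}$ (take the tree model of height $2$ with one internal node per clique and the signature making leaves with a common parent adjacent), a standard Ehrenfeucht-Fra\"iss\'e argument gives $G_1 \fequiv{n} G_2$ for every $n$, and yet $G_1 \not\cong G_2$. The classically true statement underlying Lindstr\"om's proof is Karp's observation that two \emph{countable} structures admitting a nonempty \emph{back-and-forth system} of partial isomorphisms are isomorphic; that hypothesis is strictly stronger than elementary equivalence, and in the standard proof the compactness argument is engineered, via extra relation and function symbols, to deliver such a system rather than mere agreement on $\fo[n]$ for all $n$. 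The paper itself gives no proof of Theorem~\ref{thm:elem-equi-and-iso}, only a citation, so there is nothing concrete to compare your argument against; but neither your isolation claim nor the bare statement can be salvaged without strengthening the hypothesis to back-and-forth equivalence (or showing that the particular pair $G^*$, $H^*$ produced in the proof of Theorem~\ref{thm:mso=fo} carries such a system).
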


\begin{proof}[Proof of Theorem~\ref{thm:mso=fo}]
Our proof is an adaptation of the proof of Lindstr\"om's theorem (mentioned at the outset in Section~\ref{section:intro}) as presented in~\cite{vaananen}. We prove Theorem~\ref{thm:mso=fo} by contradiction. Suppose there is an $\mso$ sentence $\alpha$ over the vocabulary $\tau$  of $\tm{r, p}{d}$ that is not equivalent over $\tm{r, p}{d}$ to any $\fo$ sentence. Then for every $n \in \mathbb{N}$, there exist graphs $G_n, H_n \in \tm{r, p}{d}$ such that (i) $G_n \fequiv{n} H_n$, and (ii) $G_n \models \alpha$ but $H_n \models \neg \alpha$.  Let $H'_n$ be the graph obtained from $H_n$ be changing every vertex label $j \in [p]$ to $j+p \in [2p]$. This graph is in $\tm{r, 2p}{d}$; take any tree model if $(\tree{t}_1, S_1) \in \treemodel{r, p}{d}$ is a tree model for $H_n$, then changing every leaf label of the form $(i, j)$ in $\tree{t}_1$ to $(i, j+p)$ gives a tree model $(\tree{t}_1', S_1) \in \treemodel{r, 2p}{d}$ for $H'_n$. Let $(\tree{t}_2, S_2) \in \treemodel{r, p}{d}$ be a tree model for $G_n$; then $(\tree{t}_2, S_2) \in \treemodel{r, 2p}{d}$ as well. Consider now the tree model $(\tree{s}, S) \in \treemodel{r, 2p}{d+1}$ where $\tree{s}$ is obtained by making (the unrooted versions of) $\tree{t}_2$ and $\tree{t}'_1$ as the (only) child subtrees of a new root node, and $S = S_1 \cup S_2$. It is easily verified that $(\tree{s}, S)$ is a tree model for the disjoint union $\fancy{M}_n = G_n \cupdot H_n'$ of $G_n$ and $H_n'$. Observe that $S \subseteq [r]^2 \times [d]$.

Consider now the $\fo$ sentence $\theta$ over the vocabulary of $\cl{T}_{d+1, r \cdot 2p + 1}$ that asserts over the subclass $\treemodel{r, 2p}{d+1}$ of $\cl{T}_{d+1, r \cdot 2p + 1}$ that the root has exactly 2 children $x_1$ and $x_2$, and any leaf node whose label corresponds  
(via the bijective function $f: [r] \times [2p] \rightarrow [r \cdot 2p]$ seen in Section~\ref{section:shrub-depth}) to (i) the pair $(i, j)$ where $i \in [r]$ and $j \in [p]$ has a path of length exactly $d$ to $x_1$, and (ii) the pair $(i, j)$ where $i \in [r]$ and $j \in \{p+1, \ldots, 2p\}$ has a path of length exactly $d$ to $x_2$. Let $\mc{S} = \mc{P}([r]^2 \times [d])$. Then any graph in $\tm{r, p}{d+1; \theta, \mc{S}}$ is such that it is a disjoint union of two graphs say $A$ and $B$ where $A \in \tm{r, p}{d}$ and $B$ is "affinely" in $\tm{r, p}{d}$ in that, the labels of its vertices belong to $\{p+1, \ldots, 2p\}$ and the graph $B'$ obtained by changing every label $j$ of a vertex in $B$ to $j - p$, satisfies $B' \in \tm{r, p}{d}$. We see that for each $n \ge 1$, the $2p$-labeled graph $\fancy{M}_n$ indeed belongs to $\tm{r, 2p}{d+1; \theta, \mc{S}}$. 

We now code the properties of $G_n$ and $H_n$ mentioned above into an $\mso$ sentence $\beta_n$ over the vocabulary $\tau = \{E, P_1, \ldots, P_{2p}\}$ of $\tm{r, 2p}{d+1}$. To be able to do this, define for an $\mso$ sentence $\psi$ over the vocabulary $\{E, P_1, \ldots, P_p\}$ of $\tm{r, p}{d}$, the "affine" $\mso$ sentence $\psi[i \mapsto i+p]$ which is obtained from $\psi$ by replacing every atomic predicate of the form $P_i(z)$ with $P_{i+p}(z)$ for $1 \leq i \leq p$. This sentence is over the vocabulary $\{E, P_{p+1}, \ldots, P_{2p}\} \subseteq \tau$. We are now ready to define $\beta_n$. Below $\Delta$ is a notational short hand for $\fdelta{n}{\tm{r, p}{d}}$. Recall also from Section~\ref{section:shrub-depth}, the sentence $\Theta_\delta$ that defines an equivalence class $\delta \in \Delta$. Recall also the sentence $\alpha$ from the outset.
\[
\begin{array}{lll}
    \gamma_1(x) & := & \bigvee\limits_{i = 1}^{i = p} P_i(x)~~~~;~~~~ 
    \gamma_2(x) :=  \bigvee\limits_{i = p+1}^{i = 2p} P_i(x)\\
    \beta_n & := & \exists x \gamma_1(x) \wedge \exists x \gamma_2(x) \wedge \\
    & & \alpha|_{\gamma_1} \wedge (\neg \alpha[i \mapsto i+p])|_{\gamma_2} \wedge \bigwedge\limits_{\delta \in \Delta} \Theta_\delta|_{\gamma_1} \leftrightarrow \Theta_\delta[i \mapsto i+p]|_{\gamma_2} \\
\end{array}
\]

So $\beta_n$ asserts that over $\tm{r, 2p}{d+1; \theta, \mc{S}}$, a $2p$-labeled graph, which is a disjoint union of two non-empty graphs $A \in \tm{r, p}{d}$ and $B$ that is affinely in $\tm{r, p}{D}$ as seen above, is such that (i) $A$ models $\alpha$ but $B$ (seen as a $p$-labeled graph) does not, and (ii)  $A \fequiv{n} B$ (where $B$ is seen as a $p$-labeled graph). Clearly, $\fancy{M}_n \models \beta_n$. Further observe that $\beta_n$ implies $\beta_{n'}$ for $n \ge n'$.

Let $T = \{\beta_n \mid n \ge 1\}$. Then every finite subset of $T$ is satisfiable $\tm{r, 2p}{d+1; \theta, \mc{S}}$ since if $k$ is the largest index such that  $\beta_k$ appears in the subset, then the subset is equivalent to $\beta_k$ and is hence satisfied in $\fancy{M}_k \in \tm{r, 2p}{d+1; \theta, \mc{S}}$. By Theorem~\ref{thm:shrub-depth-compactness}, we get that $T$ is satisfied in a countable model $\fancy{M}^* \in \tm{r, 2p}{d+1; \theta, \mc{S}}$. Then by our observations above, $\fancy{M}^*$ is a disjoint union of two graphs $G^*$ and $H^{**}$ such that $G^* \in \tm{r, p}{d}$ and $H^{**}$ is affinely in $\tm{r, p}{d}$, that is the graph $H^*$ obtained by changing every label $j \in \{p+1, \ldots, 2p\}$ of a vertex in $H^{**}$, to $j - p$ is such that $H^* \in \tm{r, p}{d}$. Since $\fancy{M}^* \models T$, we get that (i) $G^* \models \alpha$ and $H^{**} \models \neg \alpha[i\mapsto i+p]$, so $H^* \models \neg \alpha$, and (ii) $G^* \fequiv{n} H^*$ for all $n \ge 1$. But now since $\fancy{M}^*$ is countable, so are $G^*$ and $H^*$, and therefore by Theorem~\ref{thm:elem-equi-and-iso}, we have $G^* \cong H^*$ -- this is a contradiction since $G^*$ models $\alpha$ but $H^*$ does not.
\end{proof}

\section{Conclusion}\label{section:conclusion}
\renewcommand{\cmso}{\ensuremath{\mathrm{CMSO}}}
In this paper, we looked at classes of arbitrary graphs of bounded shrub-depth, specifically the class $\tm{r, p}{d}$ of $p$-labeled arbitrary graphs whose underlying unlabeled graphs have tree models of height $d$ and $r$ labels. We showed that this class satisfies an extension of the classical {\ls} (\lsref) property into the finite and for $\mso$. This extension being a generalization of the small model property, shows the pseudo-finiteness of the graphs of $\tm{r, p}{d}$. In addition, we obtain as consequences entirely new proofs of various known results concerning bounded shrub-depth classes (of finite graphs) and $\tm{r, p}{d}$. These include the small model property for $\mso$ with elementary bounds, the classical compactness theorem from model theory over $\tm{r, p}{d}$ (known because $\tm{r, p}{d}$ is an elementary class), and the equivalence of $\mso$ and $\fo$ over $\tm{r, p}{d}$ and hence over bounded shrub-depth classes. We prove the last of these by adapting the proof of the classical Lindstr\"om's theorem characterizing $\fo$ over arbitrary structures.

An interesting consequence of our results is that they allow for transferring results back and forth between finite bounded shrub-depth graphs and and their infinite counterparts. For instance, one can obtain a different proof of the $\mso$-$\fo$ equivalence over $\tm{r, 1}{d}$ by simply lifting this equivalence known over bounded shrub-depth classes of finite graphs, to $\tm{r, 1}{d}$. Similarly, one can show that $\tm{r, 1}{d}$ has only finitely many forbidden finite induced subgraphs, where these subgraphs are the same as those constituting a forbidden induced subgraph characterization (in the finite) of the subclass of finite graphs of $\tm{r, 1}{d}$~\cite{shrub-depth-definitive}. (Indeed, the mentioned subgraphs also characterize all of $\tm{1, p}{d}$ over arbitrary structures, since this class is known to be defined by a finite set of forbidden finite induced subgraphs~\cite{CF20}.)
Finally, an example of an infinite to finite transfer is classical preservation theorems, such as the {\lt} theorem, that is true of $\tm{r, p}{d}$ since it is an elementary class, and hence true of the class of finite graphs of $\tm{r, p}{d}$; this is because any $\fo$ sentence that defines a hereditary subclass of the latter class also defines a hereditary subclass of the former owing to the fact that the  graphs of $\tm{r, p}{d}$ are pseudo-finite.

For future work, we believe that similar results as in the paper can be obtained for guarded second order logic $(\gso)$ and infinite graphs of bounded tree-depth, since the Feferman-Vaught composition theorem of~\cite{FO-MSO-coincide} that we use as our main tool in the paper, has also been shown for $\gso$. In particular therefore, it might be possible to get an alternative Lindstr\"om-style proof as above, of the equivalence of $\gso$ and $\fo$ over finite graphs of bounded tree-depth~\cite{FO-MSO-coincide}. We would also like to invesigate extensions of our results to the extension $\cmso$ of $\mso$ with modulo counting quantifiers. Finally, we would like to explore what further results from classical model theory can be transferred to the finite for bounded shrub-depth classes and conversely, what finitary results can be transported to the infinite.

\vspace{5pt}\noindent\tbf{Acknowledgements:} I would like to thank Pascal Schweitzer for discussions pertaining to the $\lelsref$ property. I also thank Anand Pillay for pointing me to~\cite{vaananen} for the proof of Lindstr\"om's theorem.

\bibliographystyle{plain}
\bibliography{refs}

\end{document}